\numberwithin{equation}{section}
\numberwithin{figure}{section}
\theoremstyle{plain}
\newtheorem{thm}{\protect\theoremname}
\theoremstyle{plain}
\newtheorem{lem}[thm]{\protect\lemmaname}
\theoremstyle{plain}
\newtheorem{cor}[thm]{\protect\corollaryname}
\theoremstyle{remark}
\newtheorem*{rem*}{\protect\remarkname}
\theoremstyle{plain}
\newtheorem{fact}[thm]{\protect\factname}
\theoremstyle{definition}
\newtheorem{defn}[thm]{\protect\definitionname}
\newcommand{\sidford}[1]{{\linebreak \textcolor{green}{{\bf AS:} \em{#1}} \linebreak}}
\providecommand{\corollaryname}{Corollary}
\providecommand{\definitionname}{Definition}
\providecommand{\factname}{Fact}
\providecommand{\lemmaname}{Lemma}
\providecommand{\remarkname}{Remark}
\providecommand{\theoremname}{Theorem}
\begin{document}
\selectlanguage{american}%
\global\long\def\M{\mathcal{M}}%
\global\long\def\P{\mathcal{P_{\M}}}%
\global\long\def\x{\textbf{\ensuremath{\mathbf{\mathbf{x}}}}}%
\global\long\def\z{\textbf{\ensuremath{\mathbf{z}}}}%
\global\long\def\I{\mathcal{\mathcal{I}}}%
\global\long\def\rk{\mathcal{\textsf{rank}}}%
\global\long\def\circ{\mathcal{\textsf{circuit}}}%
\global\long\def\free{\mathcal{\textsf{free}}}%
\global\long\def\otime{\mathcal{T}}%
\global\long\def\Tind{\mathcal{T}_{\mathcal{\textsf{ind}}}}%
\global\long\def\Trank{\mathcal{T}_{\rk}}%
\global\long\def\exchange{\textsf{exchange}}%
\global\long\def\I{\mathcal{\mathcal{I}}}%
\global\long\def\M{\mathcal{\mathcal{M}}}%
\global\long\def\Rn{\mathcal{\mathbb{R}}^{n}}%
\global\long\def\argmax{\mathrm{argmax}}%
\global\long\def\T{\mathcal{\mathcal{T}}}%
\global\long\def\Pa{\mathcal{P}_{\M_{1}}}%
\global\long\def\P{\mathcal{P}_{\M}}%
\global\long\def\Pb{\mathcal{P}_{\M_{2}}}%
\global\long\def\E{\mathbb{E}}%
\global\long\def\defeq{\stackrel{\mathrm{{\scriptscriptstyle def}}}{=}}%
\global\long\def\pac{\textsf{PackNum}}%
\global\long\def\R{\mathbb{R}}%
\global\long\def\GS{V}%
\global\long\def\eps{\epsilon}%
\global\long\def\ranktime{\Trank}%
\global\long\def\indeptime{\Tind}%
\global\long\def\M{\mathcal{M}}%
\global\long\def\P{\mathcal{P_{\M}}}%
\global\long\def\x{\textbf{\ensuremath{\mathbf{\mathbf{x}}}}}%
\global\long\def\z{\textbf{\ensuremath{\mathbf{z}}}}%
\global\long\def\I{\mathcal{\mathcal{I}}}%
\global\long\def\otime{\mathcal{T}}%
\global\long\def\Span{\mathcal{\textsf{span}}}%
\global\long\def\sidford#1{{\color{red}\textbf{sidford}: {#1}}}%

\title{\selectlanguage{english}%
Faster Matroid Intersection}
\author{Deeparnab Chakrabarty\thanks{Dartmouth College, deeparnab@dartmouth.edu}
\and Yin Tat Lee\thanks{University of Washington and Microsoft Research, yintat@uw.edu}
\and Aaron Sidford\thanks{Stanford University, sidford@stanford.edu. Research supported in part by NSF CAREER Award CCF-1844855.}
\and Sahil Singla\thanks{Princeton University and Institute for Advanced Study, singla@cs.princeton.edu}
\and Sam Chiu-wai Wong\thanks{Microsoft Research, samwon@microsoft.com}}
\maketitle
\begin{abstract}
In this paper we consider the classic matroid intersection problem:
given two matroids $\M_{1}=(V,\I_{1})$ and $\M_{2}=(V,\I_{2})$ defined
over a common ground set $V$, compute a set $S\in\I_{1}\cap\I_{2}$
of largest possible cardinality, denoted by $r$. We consider this
problem both in the setting where each $\M_{i}$ is accessed through
an independence oracle, i.e. a routine which returns whether or not
a set $S\in\I_{i}$ in $\indeptime$ time, and the setting where each
$\M_{i}$ is accessed through a rank oracle, i.e. a routine which
returns the size of the largest independent subset of $S$ in $\M_{i}$
in $\ranktime$ time.\smallskip

In each setting we provide faster exact and approximate algorithms.
Given an independence oracle, we provide an \emph{exact} $O(nr\log r\cdot\indeptime)$
time algorithm. This improves upon previous best known running times
of $O(nr^{1.5}\cdot\indeptime)$ due to Cunningham in 1986$\ $\cite{Cunningham-SICOMP86}
and $\tilde{O}(n^{2}\cdot\indeptime+n^{3})$ due to Lee, Sidford,
and Wong in 2015~\cite{LSW15}. We also provide two algorithms which
compute a $(1-\epsilon$)-\emph{approximate} solution to matroid intersection
running in times $\tilde{O}(n^{1.5}/\eps^{1.5}\cdot\indeptime)$ and
$\tilde{O}((n^{2}r^{-1}\epsilon^{-2}+r^{1.5}\epsilon^{-4.5})\cdot\indeptime)$,
respectively. These results improve upon the $O(nr/\eps\cdot\indeptime)$-time
algorithm of Cunningham$\ $\cite{Cunningham-SICOMP86} as noted recently
by Chekuri and Quanrud~\cite{ChekuriQuanrud-SODA16}.\smallskip

Given a rank oracle, we provide algorithms with even better dependence
on $n$ and $r$. We provide an $O(n\sqrt{r}\log n\cdot\ranktime)$-time
exact algorithm and an $O(n\epsilon^{-1}\log n\cdot\ranktime)$-time
algorithm which obtains a $(1-\eps)$-approximation to the matroid
intersection problem. The former result improves over the $\tilde{O}(nr\cdot\ranktime+n^{3})$-time
algorithm by Lee, Sidford, and Wong~\cite{LSW15}. The rank oracle
is of particular interest as the matroid intersection problem with
this oracle is a special case (via Edmond's minimax characterization
of matroid intersection) of the submodular function minimization (SFM)
problem with an evaluation oracle, and understanding SFM query complexity
is an outstanding open question.
\end{abstract}
\clearpage

\setcounter{tocdepth}{3} \tableofcontents

\clearpage

\pagebreak{}

\section{Introduction}

A matroid $\M=(\GS,\I)$ is an abstract set system defined over a
finite ground set (universe) $\GS$ of size $n$ where the collection
$\I\subseteq2^{V}$ of \emph{independent} sets satisfy two properties:
(a) $A\in\I$ implies every subset $B\subseteq A$ is also independent,
i.e., $B\in\I$, and (b) for any two sets $A,B\in\I$ with $|A|<|B|$,
there exists an element $e\in B\setminus A$ such that $A+e\in\I$.
Matroids are fundamental objects in combinatorics, and the abstract
definition above generalizes a wide range of concepts ranging from
acyclic graphs to linearly independent matrices.

Given two matroids $\M_{1}=(\GS,\I_{1})$ and $\M_{2}=(\GS,\I_{2})$
over the same ground set, the \emph{matroid intersection} problem
is to find a set $I\in\I_{1}\cap\I_{2}$ with the largest cardinality
$|I|$. This problem generalizes many important combinatorial optimization
problems such as bipartite matching, arborescences in digraphs, and
packing spanning trees. Unsurprisingly, this problem has applications
in areas such as electrical engineering~\cite{Mur09,Rec13} and network
coding~\cite{DFZ07,RSG10,DFZ11}. 

To define the problem algorithmically, one needs to specify access
to these matroids. In the literature it is common to assume access
to an \emph{independence oracle} which takes as input a subset $S\subseteq\GS$
and returns whether $S\in\I$ or not. We use $\indeptime$ to denote
the maximum time taken by such an oracle to answer a single query.
This raises an algorithmic question: in how few queries can the matroid
intersection problem be solved?

Starting from the work of Edmonds~\cite{Edmonds-Journal70}, many
polynomial time algorithms~\cite{AigD71,Law75,Cunningham-SICOMP86,GabXu89,ShiI95,LSW15}
have been proposed for the matroid intersection problem. The previous
state-of-the-art captured by two works. One is a classic $O(nr^{1.5}\cdot\indeptime)$
time combinatorial algorithm by Cunningham~\cite{Cunningham-SICOMP86}
where $r$ is the cardinality of the largest common independent set
of the two matroids. The second algorithm is an $\tilde{O}(n^{2}\cdot\indeptime+n^{3})$-time
algorithm by Lee, Sidford, and Wong~\cite{LSW15} based on fast algorithms
implementing the ellipsoid method. Note that this pays a higher overhead
than Cunningham's result and makes many more queries if $r\ll n^{2/3}$.
This raises the important question, \emph{``Is there an algorithm
which obtains the best of both results?''} Our first result answers
this affirmatively (restated as Theorem~\ref{thm:exact-indep} in
the main body).\footnote{In simultaneous and independent work \cite{nguyen2019arxiv}, Nguy\~{\^e}n also answered this question affirmatively obtaining an $O(nr (\log r)^2 \cdot\indeptime)$-time algorithm for solving the matroid intersection problem exactly.
See the remark at the end of Section~\ref{sec:Indep-Exact-Algorithm} for a comparison.}
\begin{thm}
\label{thm:exact} There is an $O(nr\log r\cdot\indeptime)$-time
algorithm to solve the matroid intersection problem exactly.
\end{thm}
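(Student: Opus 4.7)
The plan is to follow the classical augmenting-paths framework. Starting from $S = \emptyset \in \I_1 \cap \I_2$, we perform $r$ rounds. In each round we work with the exchange graph $G_S$: directed edges $y \to x$ (for $y \in S$, $x \notin S$) indicating $S - y + x \in \I_1$, and $x \to y$ indicating $S - y + x \in \I_2$. We designate \emph{sources} as elements $x \notin S$ with $S + x \in \I_1$ and \emph{sinks} as elements $x \notin S$ with $S + x \in \I_2$, find a shortest source-to-sink path in $G_S$, and augment $S$ along this path. Correctness and termination in exactly $r$ augmentations is classical.

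The heart of the proof is to implement each round in $O(n \log r)$ independence queries. I would first establish the structural lemma: for any $F \subseteq S$ and $u \in V \setminus S$ with $S + u \notin \I_1$, $u$ has an incoming $\M_1$-edge from some element of $F$ if and only if $(S \setminus F) + u \in \I_1$. This follows from the uniqueness of the fundamental circuit $C_1(S, u)$: the set $(S \setminus F) + u$ is independent precisely when $F \cap C_1(S, u) \neq \emptyset$, which is precisely when some $f \in F$ satisfies $S - f + u \in \I_1$. The symmetric statement holds on the $\M_2$ side. This packages a reachability test between an entire frontier $F$ and a single candidate $u$ into a single independence query.

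Using this lemma, I would build a BFS on $G_S$ that never materializes the edge set. Rather than enumerating candidates layer by layer with repeated queries, the BFS uses binary search over the currently-unreached candidates to locate a newly reachable element, and then binary search over the current frontier $F$ to recover the specific predecessor. Since both binary searches operate over subsets of $S$ (or the fundamental-circuit structure pulled from $V \setminus S$) and $|S| \le r$, each has depth $O(\log r)$. Summing over the at most $n$ vertices placed in the BFS tree gives the targeted $O(n \log r)$ queries per augmentation, and hence $O(nr \log r \cdot \indeptime)$ total.

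The main obstacle---and the place where the bulk of the technical work lies---is bounding the total number of queries across \emph{all} layers of the BFS, not just per layer. A naive layer-by-layer scheme would re-probe each unreached candidate against every new frontier, producing $\Omega(nr)$ queries per augmentation and $\Omega(nr^2)$ overall. Avoiding this requires a carefully amortized organization of the unreached candidates that reuses information between layer transitions and exploits the monotonicity of the set of reached vertices in BFS order, so that each candidate contributes only $O(\log r)$ queries amortized over the entire BFS. Once this amortization is established, combining it with $r$ rounds of augmentation yields the claimed $O(nr \log r \cdot \indeptime)$-time exact algorithm.
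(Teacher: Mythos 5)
Your framework (augmenting paths on the exchange graph), your structural circuit lemma (that $(S\setminus F)+u\in\I_1$ iff some $f\in F$ has $S-f+u\in\I_1$, via the fundamental circuit), and your use of binary search over subsets of $S$ all match the paper's $\findExchange$-based approach. But the claim that a single BFS can be implemented in $O(n\log r)$ independence queries is the crux, and you have not shown how to do it --- and in fact, by the paper's techniques it cannot be done. The asymmetry is this: when the frontier is a subset of $S$ and the candidates lie in $\bar S$, a single independence query $(S\setminus F)+u \overset{?}{\in} \I_1$ tests \emph{one candidate} $u$ against the \emph{whole} frontier $F$; there is no independence-oracle analogue of the rank-oracle trick that tests a \emph{group} of candidates $B\subseteq\bar S$ against $F$ in $O(\log|B|)$ queries (you'd need to distinguish $\rk((S\setminus F)\cup B)=|S\setminus F|$ from $>|S\setminus F|$, which a yes/no oracle cannot see). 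Consequently a candidate $b$ that is ultimately at distance $d_b$ is probed once at every odd level up to $d_b$ before it is finally reached, so a single BFS from scratch costs $\Theta(\sum_b d_b \cdot \log r)$ queries --- exactly the $\Omega(nr)$ behavior you hoped to avoid, and the ``amortized over the entire BFS'' bound of $O(\log r)$ per candidate that you invoke in your last paragraph does not hold.

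The missing ingredient is Cunningham's distance monotonicity lemma (Lemma~\ref{lem:cunning-monot}): after augmenting along a shortest $s$--$t$ path, every vertex's distance from $s$ can only increase. The paper exploits this by warm-starting each BFS with the distances $d'$ from the previous round as lower bounds (Algorithm~\ref{alg:getdistances-indep}); then the cost of the $k$-th BFS is $O\bigl(\sum_a (1 + d_a^{(k)} - d_a^{(k-1)})\log r\bigr)$ queries, and the second term \emph{telescopes across augmentations} to $O\bigl(\sum_a d_a^{(r)}\bigr) = O(nr)$, while the first term contributes $O(n)$ per round for $O(nr)$ total. The amortization is therefore global over all $r$ augmentations, not local to one BFS, and it is load-bearing: without monotonicity across rounds there is no way to avoid re-probing. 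Finally, once distances are known, the paper extracts a single augmenting path by a backward walk with \emph{one} independence query per candidate (no binary search needed), for $O(n)$ queries per augmentation and $O(nr)$ total. You would need to add the monotonicity lemma, the warm-start of distance lower bounds, and the telescoping argument to turn your sketch into a correct proof.
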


Our next result looks at \emph{approximate matroid intersection}.
Due to both theoretical and practical reasons, there has been extensive
recent work~\cite{CKM+10,lee2013new,sherman2013nearly,kelner2014almost,peng2016approximate,ChekuriQuanrud-SODA16,HuaKK16,sherman2017area,sidford2018coordinate}
in trying to obtain faster $(1\pm\eps)$-approximation algorithms
for problems which already have polynomial time exact algorithms.
A $(1-\eps)$-approximate algorithm for matroid intersection would
return a set $I\in\I_{1}\cap\I_{2}$ with $|I|\geq(1-\eps)r$ where
$r$ is the cardinality of largest common independent set.

For matroid intersection, Cunningham's~\cite{Cunningham-SICOMP86}
algorithm already gives a $(1-\eps)$-approximate solution in $O(nr/\eps\cdot\indeptime)$
time. This observation was made explicit in a recent paper by Chekuri
and Quanrud~\cite{ChekuriQuanrud-SODA16}. Given our exact algorithm
result above, it is natural to wonder if one can obtain even faster approximation
algorithms. In particular, \emph{can there be subquadratic approximation
algorithms for the matroid intersection algorithm problem?} Our second
result gives an affirmative answer (restated as Theorem \ref{thm:approxMatrInters}
later).
\begin{thm}
\label{thm:apx} There is an $O(n^{1.5}\sqrt{\log r}/\eps^{1.5}\cdot\indeptime)$-time
algorithm to obtain a $(1-\eps)$-approximation to the matroid intersection
problem.
\end{thm}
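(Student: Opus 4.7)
My plan is to build on Cunningham's short-augmenting-path framework and speed it up by combining a Hopcroft--Karp-style phase structure with a fast BFS-and-augment subroutine. The scheme maintains a common independent set $S\in\I_1\cap\I_2$ and repeatedly augments along short paths in the matroid exchange graph $D_S$, whose directed edges encode single-element exchanges in $\M_1$ and $\M_2$, from the $\M_1$-extendable elements to the $\M_2$-extendable ones. Once no augmenting path of length $\le\ell$ exists, the standard shortest-path potential argument underlying Cunningham's analysis in \cite{Cunningham-SICOMP86} gives $|S|\ge(1-1/\ell)r$, so setting $\ell=\Theta(1/\eps)$ yields a $(1-\eps)$-approximation. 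This is the same slack that Chekuri and Quanrud~\cite{ChekuriQuanrud-SODA16} extracted to get the $O(nr/\eps\cdot\indeptime)$ bound, and we will beat their running time by executing augmentations in batches rather than one at a time.

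To organize the batching I would run in Hopcroft--Karp phases: in each phase we perform a layered BFS in $D_S$ from the source side, find a \emph{maximal} vertex-disjoint collection of shortest augmenting paths, and augment along all of them simultaneously. A monotonicity argument for matroid exchange graphs (shortest-path length is non-decreasing across phases and strictly increases once the current maximal packing has been absorbed) then ensures that the number of phases until the shortest augmenting path exceeds $\ell$ is $O(\ell)=O(1/\eps)$, completely mirroring the bipartite-matching analysis. It remains to implement a single phase in roughly $\tilde O(n\sqrt{n\ell})\cdot\indeptime$ time, which upon summing over $O(\ell)$ phases with $\ell=\Theta(1/\eps)$ yields the claimed $\tilde O(n^{1.5}/\eps^{1.5})\cdot\indeptime$ bound.

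The technical crux is therefore the per-phase subroutine. The naive cost of constructing the exchange graph is $\Theta(n\cdot|S|)\cdot\indeptime$ because each candidate exchange edge requires an independence query, and this already loses. Instead, I would explore the out-neighborhood of a BFS vertex via binary search: partition its candidate exchange set in half, perform a single independence query per halving (checking whether \emph{any} exchange into the half is valid by testing the union), and recurse only on halves that still contain a valid edge. This discovers each relevant edge in $O(\log n)$ queries, and---crucially---charges the cost of ``empty'' halves against structural progress of the BFS rather than against vertex degree. Because paths in a phase are vertex-disjoint of length $\ell$, the total number of BFS vertices that can appear in augmentations in one phase is $O(n/\ell)$, while the BFS depth is $\le \ell$; balancing the amortized query cost against these two counts yields roughly $\sqrt{n\ell}$ queries per BFS vertex and $n\sqrt{n\ell}$ per phase.

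The main obstacle will be the amortized analysis of the binary-search BFS, specifically ensuring that (i) the queries charged to non-edges truly telescope over the entire phase rather than repeating at every layer, and (ii) when $S$ is updated after augmenting along the maximal packing, the work from the next phase does not revisit the same failed queries. I would address (i) by introducing a potential that counts undecided exchange halves and shows each query either kills a constant fraction of that potential or discovers a used edge, and (ii) by maintaining persistent data structures for the partitioned candidate sets across phases, updating them only along the $O(n)$ elements whose $S$-status actually changes. The $\sqrt{\log r}$ factor in the final bound would arise from the $O(\log r)$ overhead of the binary search distributed across the $O(\ell)$ phases via a Cauchy--Schwarz-style balancing, analogously to how $\sqrt{\log}$ factors appear in other blocking-flow analyses.
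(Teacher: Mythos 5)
Your proposal has a fatal gap precisely at the step you treat as routine: ``find a maximal vertex-disjoint collection of shortest augmenting paths, and augment along all of them simultaneously.'' Unlike bipartite matching, matroid intersection does \emph{not} allow simultaneous augmentation along disjoint shortest augmenting paths --- doing so can produce a set that is not independent in $\I_1\cap\I_2$. The paper even records a concrete counterexample (a length-6 path viewed as an intersection of two partition matroids, current matching $\{e_2,e_5\}$; the two disjoint augmenting paths $\{e_1,e_2,e_3\}$ and $\{e_4,e_5,e_6\}$ cannot both be taken). Cunningham's algorithm must process the paths in a phase \emph{sequentially}, and this is exactly why the naive phase cost is $\Omega(n\cdot\#\text{paths})$ and hence why an $\tilde O(nr)$ barrier appears. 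Your Hopcroft--Karp analogy silently imports a property that matroid exchange graphs do not have, and no amount of binary-search cleverness in the BFS can repair that.

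The paper's actual mechanism for breaking this barrier is the notion of \emph{augmenting sets}: a layered family $(B_1,A_1,\dots,A_\ell,B_{\ell+1})$ of equal-size subsets satisfying independence conditions so that $S+B_1-A_1+\cdots+B_{\ell+1}$ remains in $\I_1\cap\I_2$. One proves (Theorems~\ref{thm:aug-means-aug}, \ref{thm:sap}, \ref{thm:equiv}) that augmenting sets of width $w$ are equivalent to $w$ \emph{consecutive} (not parallel) shortest augmenting paths, so finding a maximal augmenting set plays the role of a phase. Crucially, Lemma~\ref{lem:maximal-aug-sets-size-ratio} shows that any two maximal augmenting sets have widths within a factor $O(\ell)$; this is what lets the Hybrid algorithm run the $O(n)$-query \texttt{Refine} subroutine until residual slack is $\le p$ (costing $O(n^2/p)$ queries, Lemma~\ref{lem:l1}), then fall back to at most $O(p\ell)$ one-at-a-time augmentations via the binary-search BFS, each at $\tilde O(n)$ queries. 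Balancing $n^2/p$ against $np\ell\log r$ with $\ell=\Theta(1/\eps)$ gives $p=\lceil\sqrt{n\eps/\log r}\rceil$ and the stated bound; the $\sqrt{\log r}$ comes directly from this optimization, not from a Cauchy--Schwarz argument. Your claimed per-phase cost accounting ($O(n/\ell)$ BFS vertices, $\sqrt{n\ell}$ queries per vertex) has no analogue in the correct analysis and, as far as I can tell, would not hold even if simultaneous augmentation were sound. You need the augmenting-sets machinery (or something replacing it) to make the phase structure usable; without it, the proof does not go through.
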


Theorem~\ref{thm:apx} has no significant dependence on $r$, and indeed
when $r\ll\sqrt{n}$, as it stands, it is better to use the exact
algorithm. However, we can get a better result (restated as Theorem$\ $\ref{thm:apx-better-r}
in the main body) in the regime $\sqrt{n}\ll r\ll n$ ; concretely,
assume $r=\Theta(n^{c})$ where $\frac{1}{2}<c<1$. 

\begin{thm}
\label{thm:sparse} There is an $\tilde{O}\left(\left(\frac{n^{2}}{r\epsilon^{2}}+\frac{r^{1.5}}{\epsilon^{4.5}}\right)\cdot\mathcal{T}_{ind}\right)$-time
algorithm to obtain a $(1-\eps)$-approximation to the matroid intersection
problem.
\end{thm}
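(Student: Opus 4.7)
The plan is to combine the combinatorial $(1-\epsilon)$-approximation of Theorem~\ref{thm:apx} with a first-order convex optimization method, exploiting the regime $r\gg\sqrt{n}$ in which Theorem~\ref{thm:apx} alone yields the weaker $n^{1.5}/\epsilon^{1.5}$ bound. The final algorithm naturally has two phases whose costs become the two additive terms in the claimed running time.

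\textbf{Phase 1 (first-order warm start).} We formulate approximate matroid intersection as a convex program, for instance as the bilinear saddle point $\max_{\x\in\Pa}\min_{\y\in\Pb}\mathbf{1}^{\top}\x-\lambda\|\x-\y\|_{1}$ (or an equivalent Lagrangian form of maximizing $\mathbf{1}^{\top}\x$ over $\Pa\cap\Pb$). The workhorse oracle is linear maximization over a single matroid polytope, which the matroid greedy implements in $O(n)\cdot\indeptime$ time and whose output is supported on at most $r$ elements. We run a stochastic or accelerated first-order method (mirror descent, or Frank--Wolfe on the saddle-point formulation) for $\tilde{O}(r/\epsilon^{2})$ iterations. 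By exploiting iterate sparsity, lazy re-sorting of the greedy ordering, and incremental maintenance across iterations, the amortized per-iteration cost can be driven down to $\tilde{O}(n^{2}/r^{2})\cdot\indeptime$, for a total of $\tilde{O}(n^{2}/(r\epsilon^{2}))\cdot\indeptime$. The output is a fractional $\hat{\x}\in\Pa\cap\Pb$ with $\mathbf{1}^{\top}\hat{\x}\ge(1-\epsilon/2)\,r$.

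\textbf{Phase 2 (rounding and refinement).} We round $\hat{\x}$ to an integral common independent set $S_{0}$ with $|S_{0}|\ge(1-O(\epsilon))r$ using pipage/swap rounding, which respects both matroid constraints. The residual instance obtained by contracting $S_{0}$ from both $\M_{1}$ and $\M_{2}$ has optimum value only $O(\epsilon r)$, and we close the remaining gap by invoking Theorem~\ref{thm:apx} on this contracted instance. With the internal precision parameter tuned so that the residual error composes correctly with the Phase 1 error budget, the cost of this phase is $\tilde{O}(r^{1.5}/\epsilon^{4.5})\cdot\indeptime$; combining with Phase 1 yields the advertised running time.

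\textbf{Main obstacle.} The delicate point is Phase 1: hitting the amortized $\tilde{O}(n^{2}/r^{2})\cdot\indeptime$ per-iteration cost requires both exploiting the fact that matroid-greedy outputs touch nontrivially only $r$ elements and sharing work across iterations, since a naive implementation pays $\Omega(n)\cdot\indeptime$ per iteration and yields only $\tilde{O}(nr/\epsilon^{2})\cdot\indeptime$. A secondary difficulty is the careful composition of approximation errors: the fractional accuracy from Phase 1, the additive loss from the rounding step, and the internal precision driving Phase 2 must be budgeted so that the final set has size at least $(1-\epsilon)r$ while keeping the $\epsilon$-dependence in the stated additive form.
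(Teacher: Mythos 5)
Your Phase~1 is in the same spirit as the paper (a Frank--Wolfe/first-order warm start to get a fractional $\hat{\x}\in\Pa\cap\Pb$ of value $(1-O(\eps))r$), but the quantitative breakdown you give does not match the paper's and is not self-supporting. The paper's Theorem~\ref{thm:frac-matr-inters} runs plain Frank--Wolfe on a smoothed objective $-\mathbf{1}^{\top}x+\tfrac{\eta}{2}\|x-y\|_2^2$ over the product of (size-truncated) matroid polytopes; after truncating both polytopes to size $\le 2r$, the curvature constant is bounded by $C_f = O(\eta r)$, and optimizing $\eta$ yields $O(n/(r\eps^2))$ iterations, each costing $O(n\cdot\indeptime)$ via the greedy. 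Your proposal instead claims $\tilde O(r/\eps^2)$ iterations at an amortized $\tilde O(n^2/r^2)\cdot\indeptime$ per iteration, and rests the per-iteration bound on ``iterate sparsity, lazy re-sorting, and incremental maintenance'' -- a concrete mechanism for this amortization is not given and is not needed; the straightforward iteration bound already achieves the $\tilde O(n^2/(r\eps^2))$ term.

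Phase~2 is where the gap is. You propose rounding $\hat{\x}$ to an integral $S_0$ and then running Theorem~\ref{thm:apx} on the residual instance obtained by contracting $S_0$. The trouble is that contraction by $S_0$ (of size at most $r$) does not shrink the ground set: the residual instance still has $\Theta(n)$ elements, so a direct invocation of Theorem~\ref{thm:apx} costs $\tilde O(n^{1.5}/\eps^{1.5}\cdot\indeptime)$, not $\tilde O(r^{1.5}/\eps^{4.5}\cdot\indeptime)$, and the claimed running time does not follow. What the paper actually does is \emph{sparsify the universe itself}: it uses $\hat{\x}$ to drive a Karger-style random sampling (Theorem~\ref{thm:sparse-to-small}, built on Theorem~\ref{KargerLemma}) that replaces $V$ by a subset $V'$ of size $\tilde O(r/\eps^2)$ while guaranteeing that $V'$ still contains a common independent set of size $(1-O(\eps))r$; only then is Theorem~\ref{thm:apx} invoked, on a ground set of size $n'=\tilde O(r/\eps^2)$, yielding $\tilde O(n'^{1.5}/\eps^{1.5})=\tilde O(r^{1.5}/\eps^{4.5})$ queries. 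The sparsification step -- turning the fractional solution into a smaller universe rather than into a partial integral solution -- is the key idea your proposal is missing.
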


\paragraph{The Rank Oracle.}

Another common model for accessing matroids is a \emph{rank oracle}.
Given a subset $S\subseteq\GS$, the rank oracle outputs $\rk(S)$,
i.e., the size of the maximum cardinality independent subset of $S$.
We let $\ranktime$ denote the maximum time taken by the rank oracle
to answer any query. The rank oracle is clearly at least as powerful
as the independence oracle.

Similar to the independence oracle, Lee, Sidford, and Wong~\cite{LSW15}
also gave an algorithm for the rank oracle with a runtime of $\tilde{O}(nr\cdot\ranktime+n^{3})$.
This suggests that perhaps matroid intersection can be solved strictly
faster in the rank oracle model.

One reason to look at the rank oracle is Edmonds' minimax theorem~\cite{Edmonds-Journal70}
which states that $\max_{I\in\I_{1}\cap\I_{2}}|I|=\min_{S\subseteq\GS}\left(\rk_{1}(S)+\rk_{2}(\GS\setminus S)\right)$.
Since the rank function is submodular, (the dual of) matroid intersection
is a special case of submodular function minimization (SFM) where
the function evaluation oracle corresponds to (two calls of) the rank
oracle. SFM is an extensively studied problem whose query complexity
is still an open question. In this light, understanding it for the
special case of matroid intersection becomes an important problem.

Another reason is that our main ideas for designing independence oracle-based algorithms arose from an understanding the question with rank oracle, and so the rank oracle
formulation constitutes a natural warm-up. In this paper we provide the following results
on solving matroid intersection (both approximately and exactly with
a rank oracle), improving over the $\tilde{O}(nr\cdot\ranktime+n^{3})$-time
algorithm by Lee, Sidford, and Wong~\cite{LSW15} (restated as Theorem~\ref{thm:exact_rank} and Theorem~\ref{thm:approx-rank} later).
\begin{thm}
\label{thm:rank} There is an $O(n\sqrt{r}\log n\cdot\ranktime)$-time
algorithm to exactly solve the matroid intersection problem. There
is an $O(n\epsilon^{-1}\log n\cdot\ranktime)$-time algorithm to obtain
a $(1-\eps)$-approximation to the matroid intersection problem.
\end{thm}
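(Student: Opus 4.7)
The plan is to combine the classical augmenting-path framework for matroid intersection with (i) a Hopcroft--Karp-style phased analysis to bound the number of augmentations, and (ii) a rank-oracle-based fast implementation of BFS in the exchange graph. Recall that given any common independent set $S \in \I_1 \cap \I_2$, the exchange graph $D(S)$ has vertex set $\GS$ with sources $X_1 = \{y \in \GS\setminus S : S + y \in \I_1\}$, sinks $X_2 = \{y \in \GS\setminus S : S + y \in \I_2\}$, and arcs encoding feasible one-element swaps under $\M_1$ or $\M_2$; a shortest $X_1\to X_2$ dipath in $D(S)$ gives an augmentation of $|S|$ by one, and $S$ is optimal iff no such path exists.

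To bound the number of augmentations I would proceed in phases: in each phase, find a maximal family of vertex-disjoint shortest $X_1 \to X_2$ paths and augment along all of them simultaneously. A standard matroid-union argument, analogous to the Hopcroft--Karp analysis for bipartite matching, shows that once the shortest augmenting-path length reaches $L$, at most $O(r/L)$ further augmentations can occur before optimality. Balancing $L$ against the residual count gives $O(\sqrt{r})$ phases for an exact solution, while tolerating an $\eps r$ shortfall lets us stop after $O(1/\eps)$ phases for a $(1-\eps)$-approximation.

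The technical heart is to execute each phase in only $O(n \log n)$ rank queries. I would build BFS layers one at a time; consider advancing from a known layer $L_k \subseteq \GS \setminus S$ to the next layer $L_{k+1} \subseteq S$, which consists of those $x \in S$ possessing an incoming $\M_1$-arc from some $y \in L_k$, i.e., $S - x + y \in \I_1$. The key lemma is that for any candidate subset $Z \subseteq S$ of undiscovered elements, $Z \cap L_{k+1} \neq \emptyset$ is equivalent to the single rank-oracle test $\rk_1((S \setminus Z) \cup L_k) > |S \setminus Z|$, which one verifies in both directions by extending $S \setminus Z$ through an element of $L_k$ and then completing back to a basis of $S + y$ inside $Z$. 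Binary search on $Z$ then extracts one new layer-$(k+1)$ element in $O(\log n)$ queries; iterating until $Z$ is exhausted spends $O(\log n)$ queries per discovered element, for a total of $O(n \log n)$ queries per phase since at most $n$ elements are discovered across all layers. The symmetric step (advancing from a layer in $S$ to the next in $\GS\setminus S$ via $\rk_2$) is handled identically, and extracting a maximal family of vertex-disjoint augmenting paths within the phase fits in the same $O(n \log n)$ budget via a reverse DFS on the BFS-layered graph using on-demand rank tests.

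The main obstacle I anticipate is the efficient BFS of the third step. The delicate point is formulating and proving the one-shot rank-query predicate for ``does $Z$ contain any next-layer element?''; while the formula above is clean, its correctness rests on carefully exploiting submodularity of $\rk_1, \rk_2$ together with the characterization of $\M_i$-circuits through rank differences, and one must also verify that the binary search remains correct as candidates are inserted into and removed from $Z$ during the search. Assuming this machinery, multiplying the number of phases by the per-phase cost yields $O(n\sqrt{r}\log n \cdot \ranktime)$ queries for the exact algorithm and $O(n\log n / \eps \cdot \ranktime)$ for the approximation, matching Theorem~\ref{thm:rank}.
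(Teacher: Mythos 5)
Your phase-counting argument and the rank-oracle BFS idea are sound and essentially the same as the paper's: the distance-monotonicity/phase bound gives $O(\sqrt{r})$ (resp.\ $O(1/\eps)$) phases via Lemma~\ref{lem:augm-path-length} and Corollary~\ref{cor:approx-algo-certif}, and your layer-wide predicate $\rk\bigl((S\setminus Z)\cup L_k\bigr) > |S\setminus Z|$ is a correct characterization of whether $Z$ contains a next-layer element (the forward direction follows from $S\setminus Z\subseteq S-x$ when $x\in Z$, the backward direction from the circuit characterization of Lemma~\ref{lem:circuits}), so binary search on $Z$ extracts each vertex in $O(\log n)$ rank calls, matching the paper's use of \texttt{FindFree}/\texttt{FindExchange} inside \texttt{OutArc} and \texttt{GetDistancesRank}.

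The genuine gap is in the phase-implementation step. You propose to ``find a maximal family of vertex-disjoint shortest $X_1\to X_2$ paths and augment along all of them simultaneously.'' This is precisely what fails for matroid intersection and is the central divergence from Hopcroft--Karp: two internally-disjoint shortest augmenting paths in $G(S)$ generally cannot be applied simultaneously, because a single augmentation changes the exchange graph, and the symmetric-difference of $S$ with the union of disjoint paths need not be a common independent set (the paper gives a concrete $6$-edge bipartite-matching counterexample in Section~\ref{sec:Indep-Approximation-Algorithm}). The paper's \texttt{BlockFlow} instead augments shortest paths \emph{one at a time} inside a DFS over the fixed BFS layers, removing path vertices and dead-end vertices as it goes; the correctness of continuing to use the \emph{original} layering after each augmentation, and of never revisiting removed vertices, rests entirely on Cunningham's monotonicity lemma (Lemma~\ref{lem:cunning-monot}), which guarantees that distances from $s$ and to $t$ never decrease after a shortest-path augmentation. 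Without invoking monotonicity, your ``reverse DFS extracting a maximal disjoint family'' neither preserves independence nor justifies the $O(n\log n)$ per-phase budget, since you would otherwise have to recompute the layered structure after every individual augmentation. So while the BFS-with-binary-search idea is the right tool, you have misidentified the main obstacle: it is not the efficient BFS but the sequential-versus-simultaneous augmentation issue and its resolution via the monotonicity lemma that makes each phase run in $O(n\log n\cdot\ranktime)$.
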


\subsection{Our Techniques}

At a high level our algorithms build upon existing combinatorial algorithms,
i.e. Edmonds~\cite{Edmonds-Journal70} and Cunningham~\cite{Cunningham-SICOMP86}.
We extend these algorithms leveraging the following key ideas: the
\emph{binary search} idea which allows fast exploration through the
exchange graph, the \emph{augmenting sets} methodology which allows
multiple parallel augmentations in the exchange graph, and \emph{first-order
methods} which allow efficient sparsification of the ground set. In
this section we explain each idea in greater depth, leaving the full
details to subsequent sections. We begin with a refresher and a high-level
overview of existing combinatorial algorithms.

\paragraph{The exchange graph and the algorithms of Edmonds and Cunningham.}

The key object behind almost all combinatorial algorithms for matroid
intersection is the \emph{exchange graph}. Given a current solution
$S\in\I_{1}\cap\I_{2}$, the exchange graph $G(S)$ is a \textit{directed}
bipartite graph where the endpoints of arcs correspond to valid exchanges
in respective matroids depending on the direction. There is a source
$s$ and sink $t$. The source is connected to all vertices in $V\setminus S$
which can be freely added in one matroid, and the sink is connected
from all vertices in $V\setminus S$ which can be freely added to
the other matroid. Much as in network flow theory, an improvement
in the size of $S$ arises on finding \emph{shortest} augmenting source-sink
paths in this exchange graph. At a very high level, there are $O(nr)$
possible edges in the graph (since $|S|\leq r$), each edge can be
constructed with $O(1)$ independence oracle queries, and in at most
$r$-augmentations one obtains the maximum sized common independent
set. This gives an $O(nr^{2}\cdot\indeptime)$-time algorithm~\cite{Edmonds-Journal70,AigD71,Law75}.

Cunningham's~\cite{Cunningham-SICOMP86} main idea was to implement
the above algorithm in \emph{phases}. This idea is similar to the
Hopcroft-Karp~\cite{HK-SICOMP73} idea for bipartite matching but
has many differences. Akin to~\cite{HK-SICOMP73}, in each ``early''
phase the algorithm tries to find many disjoint short augmenting paths;
however, not all of these can be augmented in parallel. Indeed, one
of the major divergences between bipartite matching and matroid intersection
is that even a single augmentation can completely change the exchange
graph. Although the augmentations cannot be done in parallel, Cunningham~\cite{Cunningham-SICOMP86}
shows how to spend only $O(nr)$ independence-oracle calls (each edge
is queried only once in a phase) to sequentially run all the augmentations
in a single phase; as in Hopcroft-Karp~\cite{HK-SICOMP73}, the early
phases lead to big improvements and the total number of phases is
at most $O(\sqrt{r})$. This leads to a total of $O(nr^{1.5}\cdot\indeptime)$-time
algorithm.

\paragraph{The binary search idea.}

We start by describing the idea when we have a rank oracle and then
discuss the independence oracle case. The first thing to note is that
one doesn't need the whole exchange graph for an augmentation. Instead,
what we need is to perform a breadth first search (BFS) on this ``unknown''
exchange graph and the rank oracle provides the following access
to this graph: for every vertex $a$ and every subset $B\subseteq\GS\setminus v$,
in $O(\log n)$ rank-oracle calls (by doing a binary-search style
argument) we can detect if $a$ has an edge to some vertex in $B$
or not. This suffices for doing a BFS on the graph in $O(n\log n\cdot\ranktime)$
time, and we do it at the beginning of each phase of Cunningham's
algorithm.

The second observation is that each phase in Cunningham's algorithm
can be implemented using $O(n\log n)$-many rank-oracle calls again.
Indeed, once the distances of vertices are known, shortest paths can
be computed using the aforementioned graph access which the rank-oracle
provides us. It is true that after some augmentations, some vertices
will be {\em misclassified} (or will be \emph{useless}, to borrow
Cunningham's~\cite{Cunningham-SICOMP86} terminology) but such vertices
are never queried again. The latter requires some distance-monotonicity
properties from~\cite{Cunningham-SICOMP86}. Theorem~\ref{thm:rank}
actually follows quite easily now: for an exact algorithm, there are
$O(\sqrt{r})$ such phases; for a $(1-\eps)$-approximate algorithm,
we need only $O(\frac{1}{\eps})$-phases. Details of this are given
in Section~\ref{sec:Rank-Oracle}.

Things are a bit trickier using only an independence oracle, since
we cannot detect, in $O(\log n)$ queries, whether a vertex $a$ has
an edge to a subset $B$ or not. Nevertheless, the following is true:
for a vertex $a\notin S$ (recall $S$ is the current solution) and
a subset $B\subseteq S$, in $O(\log r)$ independence queries we
can figure out whether $a$ has an edge to/from a vertex in $B$.
This is due to the way the exchange graph is defined. Armed with this
observation, after every augmentation, we can perform a BFS of the
new exchange graph (i.e., find the distance labels of all vertices)
in $O(n\log r)$ independence oracle calls \emph{plus} an $\tilde{O}(1)$
``amortized'' independence call per vertex whose distance label
changes in the current iteration. Since each vertex changes its distance
label at most $r$ times, the total amortized cost is $O(nr\cdot\indeptime)$.
Since there are at most $r$ augmentations, the total time taken in
the (non-amortized part of the) BFS computations is $O(nr\log r\cdot\Tind)$.
Details of this are described in Section~\ref{sec:Indep-Exact-Algorithm}.

\paragraph{Augmenting sets.}

Our exact algorithm with independence oracle queries has two kinds
of cost, both of which are $\tilde{O}(nr\cdot\indeptime)$-time. The
``amortized'' cost which is paid per ``distance increase'' per
vertex can be made $\tilde{O}(n/\eps)$ if we run only $1/\eps$ phases.
However, the ``non-amortized'' cost is paid per augmentation, and
this can still be $\tilde{O}(nr)$ even with $1/\eps$ phases. Indeed,
the phases don't seem to add any advantage. It seems unlikely that
the binary search idea alone can be salvaged to break the ``quadratic
barrier'' (when $r\approx n$).

To overcome this and prove Theorem$\,\ref{thm:apx}$, we propose the
idea of \emph{augmenting sets}. Recall the idea presented in Cunningham's
algorithm: the algorithm is akin to Hopcroft-Karp~\cite{HK-SICOMP73}
in that it runs in phases, and in each phase it augments along multiple
augmenting paths. However, the big difference is that the augmentations
need to be performed \emph{sequentially} rather than in parallel as
in Hopcroft-Karp, and this takes time. On the other hand, if we could
find all the augmentations that can occur \emph{up front}, then we
will save on the time taken to find them sequentially. This is what
our notion of augmenting sets achieves.

An augmenting set is a sequence of \emph{disjoint subsets} $B_{1},A_{1},B_{2},\ldots,A_{\ell},B_{\ell+1}$
where alternate subsets are not in, and respectively in, the current
solution $S$. More precisely, the set $A_{i}$ is a subset of vertices
at distance exactly $2i$ from the source of the exchange graph, and
vertices $B_{i}$ is at distance $2i-1$ from the source. Each subset
has the same cardinality and finally, deleting all the $A_{i}$'s
and adding all the $B_{i}$'s preserves independence in both matroids.
If the size of each set is $1$, an augmenting set is the same as
an augmenting path.

We prove an equivalence between augmenting sets and a collection of
augmenting paths which can be augmented sequentially in the exchange
graph. We introduce the concept of \emph{maximal} augmenting sets,
and show (a) as long as the shortest path is small, say $\ell$ (early
phases of the algorithm), the maximal and maximum augmenting sets
are within a multiplicative $O(\ell)$-factor, and (b) we show an
algorithm to find a ``near maximal'' augmenting set, which allows
us to guarantee that in any phase, and for any $p$, after spending
$O(n^{2}/p\cdot\indeptime)$ time the maximum number of remaining
$\ell$-length augmentations is $O(p\ell)$. These final augmentations
can be done in $\tilde{O}(np\ell\cdot\indeptime)$ time using the
previous binary search ideas. Setting $p\approx\sqrt{n}$ gives the
desired result (Theorem~\ref{thm:apx}). The precise definition of
augmenting sets, its properties, and the details of above ideas are
in Section~\ref{sec:Indep-Approximation-Algorithm}. We are hopeful
that this new class of algorithms, which may be of independent interest,
find further applications in related problems.

\paragraph{Frank-Wolfe sparsification and sampling.}

To obtain Theorem~\ref{thm:sparse}, we need one additional idea.
Since we are being approximate, we can further improve the running
time of our algorithm by sparsifying the ground set from $n$ to $\tilde{O}(r/\eps^{2})$.
We first look at the \emph{fractional} solution to the matroid intersection
problem as a convex optimization problem. Next, we observe that if
we apply a constrained first-order method (aka the Frank-Wolfe algorithm),
each step of the algorithm corresponds to solving the single matroid
optimization problem which can be done by the greedy algorithm in
$O(n\log n+n\cdot\Tind)$ time. Furthermore, to get an $\eps$-optimal
\emph{fractional} solution, one needs only $\tilde{O}(n/r\eps^{2})$
iterations which takes $\tilde{O}(\frac{n^{2}}{r\eps^{2}}\cdot\indeptime)$
time.

Next, we apply a sparsification procedure due to Karger~\cite{karger1998random}
that leads us to a ground set with only $\tilde{O}(r/\eps^{2})$ elements
(instead of $n$ elements), and which has a $(1-O(\eps))$-approximate
common independent set. Once the ground set shrinks, we can apply
the algorithm from the previous section (using augmenting sets) to
get a $(1-\eps)$-approximate solution in $\tilde{O}(r^{1.5}/\eps^{4.5}\cdot\indeptime)$
time. Taking everything together gives us Theorem~\ref{thm:sparse}.
The details are given in Section~\foreignlanguage{american}{\ref{sec:sample}}.

\subsection{Related Works}

Polynomial time algorithms for matroid intersection are more than
forty years old, with the first algorithms present in the works of
~\cite{Edmonds-Journal70,AigD71,Law75}. The running time of these
algorithms were $O(nr^{2}\cdot\Tind)$. Indeed, many of these papers~\cite{Law75,Fra81,BrezCG86,GabXu89,GabowXu-Journal96,ShiI95,LSW15,HuaKK16}
looked at the \emph{weighted} matroid intersection problem, and gave
polynomial time algorithms. 

For certain special matroids faster algorithms are known. Indeed,
when the matroid is given explicitly, one can talk of pure running
time instead of oracle queries. For instance, for exact maximum cardinality
bipartite matching, the best running time is the $O(m\sqrt{n})$-time
algorithm due to~\cite{HK-SICOMP73} and $\tilde{O}(m^{10/7})$-time
algorithm due to M\k{a}dry~\cite{Mad13}. Here $m$ is the number of
edges in the graph (and so, the number of elements in the matroid),
while $n$ is the number of vertices which is the rank of the matroid. 
It is instructive to compare what our results give: we give an $\tilde{O}(m\sqrt{n}\cdot\Trank)$
and $\tilde{O}(mn\cdot\Tind)$ time algorithm. \textbf{}In dense
graphs, the best algorithm is an $O(n^{\omega})$-running time algorithm
by~\cite{MucS04,Har09}, where $\omega$ is the exponent of matrix
multiplication. For linear matroids, the matroid intersection problem
can be solved in essentially $O(nr^{\omega-1})$-time~\cite{Har09,CheLL14}.
For graphic matroids, the matroid intersection problem can be solved
in $O(n\sqrt{r}\log r)$ time~\cite{GabXu89,GabowXu-Journal96}.

The study of approximate matroid intersection problems is newer. As
Chekuri and Quanrud~\cite{ChekuriQuanrud-SODA16} note, Cunningham's
analysis implies a $O(nr/\eps\cdot\Tind)$-time algorithm to get an
$(1-\eps)$-approximate matroid intersection. Huang et al.~\cite{HuaKK16}
and Chekuri and Quanrud~\cite{ChekuriQuanrud-SODA16} study the approximate
\emph{weighted} version. The former gives an $\tilde{O}(nr^{1.5}/\eps\cdot\Tind)$-time
approximation algorithm~\cite{HuaKK16}, while the latter gives an
$O(nr\log^{2}(1/\eps)/\eps^{2}\cdot\Tind)$-time approximation algorithm.
Contrast this with our $\tilde{O}(nr\cdot\Tind)$-time exact and $\tilde{O}(n^{1.5}/\eps^{1.5})$-time
approximate algorithm, albeit for the unweighted version. Finally,
Guruganesh and Singla~\cite{GS-IPCO17} give an $\frac{1}{2}+\delta$-approximation
algorithm for a small but fixed constant $\delta$ which runs in $O(n\cdot\Tind)$-time.

We end the introduction with (to our knowledge) the only
known \emph{lower bound}  for matroid intersection. Since we are in
the oracle (rank/independence) model, it is perhaps foreseeable that
some non-trivial information theoretic lower bound can be attained
for the number of queries required for matroid intersection. Unfortunately,
the only lower bound we are aware of is due to Harvey~\cite{Har10,Harvey-SODA08}.
For matroids with $r=n/2$, Harvey~\cite{Har10} shows a lower bound
of $(\log_{2}3)n-o(n)$ queries. Obtaining an $\omega(n)$ lower bound
is a challenging open problem.

\section{Preliminaries}

Here we provide the notation (Section~\ref{sub:notation}) and previous
known results about the matroid intersection (Section~\ref{sub:matroid_lemmas})
that we use throughout the paper.

\subsection{Notation}

\label{sub:notation}

Here we provide the notational conventions we use throughout the paper.\\
\textbf{\vspace{-5pt}}\\
\textbf{Set Notation}: We often work with subsets of a finite set
$V$ which we call the \emph{universe} or \emph{ground set}. For $I\subseteq V$
and $a\in V$ we let $I+a\defeq I\cup\{a\}$ and $I-a\defeq I\setminus\{a\}$.
When the universe $V$ is clear from the context, we let $\overline{I}\defeq V\setminus I$.\textbf{
}We often abuse notation $A+B\defeq A\cup B$\textbf{ }for better readability.\textbf{}\\
\textbf{\vspace{-5pt}}\\
\textbf{Matroid}: A tuple $\M=(V,\I)$ for finite set $V$ and $\I\subseteq2^{S}$
is called a \emph{matroid }if (i) for every $A\in\I$ and $B\in\I$
where $|A|<|B|$, there exists an element $a\in B\setminus A$ such
that $A\cup a\in\I$, (ii) $\emptyset\in\I,$ and (iii) for every
$A'\subseteq A$ where $A\in\I$, we have $A'\in\I$. \textbf{}\\
\textbf{\vspace{-5pt}}\\
\textbf{Independent Sets}: We call $S\subseteq V$ \emph{independent}
if $S\in\I$ and \emph{dependent} otherwise. Further, for any $S\in\I$
we let $\free_{\M}(S)\defeq\{a\in\overline{S}\,|\,S+a\in\I\}$.\textbf{}\\
\textbf{\vspace{-5pt}}\\
\textbf{Matroid Polytope:} \foreignlanguage{american}{The matroid
polytope $\P$ is the convex hull of the indicator vectors of the
independent sets of $\M$.}\textbf{}\\
\textbf{\vspace{-5pt}}\\
\textbf{Matroid Intersection Polytope:} \foreignlanguage{american}{The
matroid intersection polytope is the convex hull of the indicator
vectors of the common independent sets of $\M_{1},\M_{2}$.}\textbf{
}It is well known that this polytope is given by \foreignlanguage{american}{$\P_{1}\cap\P_{2}$}.\textbf{}\\
\textbf{\vspace{-5pt}}\\
\textbf{Rank}: For a matroid $\M=(V,\I)$ we define the \emph{rank}
of $\M$ as $\rk(\M)=\max_{S\in\I}|S|$. Further, for any $S\subseteq V$
we define $\rk_{\M}(S)\defeq\max_{T\subseteq S|T\in I}|T|$, i.e.,
the size of the largest independent set contained in $S$. \textbf{}\\
\textbf{\vspace{-5pt}}\\
\textbf{Circuits}: For matroid $\M=(V,\I)$ we call $S\subseteq V$
a \emph{circuit} if it is a minimal dependent set, i.e., $S\notin\I$
but $S-a\in\I$ for any $a\in S$. For $S\in\I$ and $a\in\overline{S}$
such that $S+a\notin\I$ we let $\circ_{\M}(S+a)$ denote a minimal
dependent subset of $S+a$. \textbf{}\\
\textbf{}\\
\textbf{Exchangeable Pairs}: For a matroid $\M=(V,\I)$ and $S\in\I$
we call $a\in S$ and $b\notin S$ \emph{exchangeable} if $S-a+b\in\I$.
For a set $S\in\I$ we let $\exchange_{\M}(S)\defeq\{(a,b)\in S\times\overline{S}\,|\,S-a+b\in\I\}$
denote all exchangeable pairs in $S$.\textbf{}\\
\textbf{\vspace{-5pt}}\\
\textbf{Exchange Graph}: For matroids $\M_{1}=(V,\I_{1})$ and $\M_{2}=(V,\I_{2})$
and $S\in\I_{1}\cap\I_{2}$ we define the exchange graph $G(S)=(V\cup \{s,t\},E_{S})$
as the directed graph with vertices $V\cup\{s,t\}$ where $s$ and
$t$ are known as the \textit{source} and \textit{sink} vertices respectively.
There are the following 4 types of arcs in $E_S$:  (1) $(s,a)$ for all $a\in\free_{1}(S)$,
(2) $(a,t)$ for all $a\in\free_{2}(S)$, (3) $(a,b)$ for all $a\in S$
and $(a,b)\in\exchange_{1}(S)$, (4) $(b,a)$ for all $a\in S$ and
$(a,b)\in\exchange_{2}(S)$, where for $i\in\{1,2\}$ we use $\free_{i}$
and $\exchange_{i}$ as shorthand for $\free_{\M_{i}}$ and $\exchange_{\M_{i}}$
respectively.\textbf{}\\
\textbf{\vspace{-5pt}}\\
\textbf{Distances}: For any directed graph $G=(V,E)$ and $a,b\in V$
we let $d_{G}(a,b)$ denote the shortest path distance from $a$ to
$b$ using the edges of $E$. We define $d_{G}(a,a)=0$ for all $a\in V$
and if there is no $a$ to $b$ path then we let $d_{G}(a,b)=\infty$.\textbf{}\\
\textbf{\vspace{-5pt}}\\
\textbf{Distance} \textbf{sets}: For the exchange graph $G(S)$ we
denote the set of vertices at distance $i$ from source $s$ by $D_{i}$
or $L_{i}$.\foreignlanguage{american}{\textbf{}}\\
\textbf{\vspace{-5pt}}\foreignlanguage{american}{\textbf{}}\\
\foreignlanguage{american}{\textbf{Packing number:} For a matroid
$\M$ we let $\pac(\M)$ denote the maximum number of disjoint bases
in $\M$.}\textbf{}\\
\textbf{\vspace{-5pt}}\\
\textbf{Oracles}: Throughout this paper we assume that given a matroid
$\M=(V,\I)$ we can only access it through an oracle. We consider
two such oracle models. The first, is an \textbf{independence} oracle, which
when queried with any $S\subseteq V$, determines whether or not $S\in\I$
in time $\indeptime(\M)$. The second, is a \textbf{rank} oracle, which when
queried with an $S\subseteq V$, returns $\rk_{\M}(S)$ in time $\ranktime(\M)$.
Note that for $S\subseteq V$ we have that $S\in\I$ if and only if
$\rk_{\M}(S)=|S|$ and consequently $\indeptime(\M)=O(\ranktime(\M))$.
In the typical setting where we have two matroids $\M_{1}$ and $\M_{2}$
we let $\ranktime\defeq O(\ranktime(\M_{1})+\ranktime(\M_{2}))$ and
$\indeptime\defeq O(\indeptime(\M_{1})+\indeptime(\M_{2}))$.

\subsection{Matroid Theory }

\label{sub:matroid_lemmas}

Here we provide various standard results about the the structure of
matroids and matroid intersection used throughout the paper. The first
two lemmas are folklore.
\begin{lem}
\label{lem:circuits} For matroid $\M=(V,\I)$ and any $S\in\I$ and
$a\in V$ with $S+a\notin\I$, we have
\[
\circ_{\M}(S+a)=\left\{ b\in S+a\,|\,S+a-b\in\I\right\} \,.
\]
\end{lem}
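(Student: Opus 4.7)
The plan is to reduce the statement to the standard fact that when $S$ is independent and $S+a$ is dependent, the set $S+a$ contains a \emph{unique} minimal dependent subset (circuit), and then use this uniqueness to identify it set-theoretically with $\{b\in S+a : S+a-b\in\I\}$.

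First I would observe that any circuit $C\subseteq S+a$ must contain $a$: indeed, if $a\notin C$, then $C\subseteq S$, contradicting $S\in\I$ together with heredity. Next I would establish uniqueness: if $C_1,C_2$ are two distinct circuits contained in $S+a$, then both contain $a$ by the previous observation, so by the (strong) circuit-elimination axiom there exists a circuit $C'\subseteq (C_1\cup C_2)-a\subseteq S$, again contradicting $S\in\I$. Hence $S+a$ contains exactly one circuit, which must be $\circ_{\M}(S+a)$.

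With uniqueness in hand, the proof of the two inclusions is routine. For the inclusion $\circ_{\M}(S+a)\subseteq \{b\in S+a : S+a-b\in\I\}$, I would take $b\in \circ_{\M}(S+a)$ and suppose for contradiction that $S+a-b$ is dependent; then $S+a-b$ contains some circuit $C'$, and since $C'\subseteq S+a$, uniqueness forces $C'=\circ_{\M}(S+a)$, but this is impossible because $b\in \circ_{\M}(S+a)$ while $b\notin S+a-b\supseteq C'$. For the reverse inclusion, I would take $b\in S+a$ with $S+a-b\in\I$ and argue that if $b\notin \circ_{\M}(S+a)$, then $\circ_{\M}(S+a)\subseteq S+a-b$, making $S+a-b$ dependent, a contradiction.

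The only nontrivial ingredient is the circuit-elimination step used to prove uniqueness; everything else is a direct application of the definitions of circuit and minimality, together with heredity of $\I$. Since circuit elimination is part of the standard matroid axioms (derivable from the exchange property assumed in the paper's definition), I do not expect any real obstacle here.
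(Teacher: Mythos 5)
The paper does not prove this lemma; it is stated as folklore with no argument given, so there is no in-paper proof to compare against. Your proof is correct and is the standard textbook derivation: $a$ lies in every circuit of $S+a$ (else the circuit sits inside the independent set $S$), circuit elimination at $a$ then forces uniqueness of the circuit, and the two inclusions follow from minimality and uniqueness exactly as you argue. One small note: only the \emph{weak} circuit-elimination axiom is needed for the uniqueness step, not the strong version you parenthetically invoke, and since the paper defines matroids via the augmentation axiom rather than via circuits, a fully self-contained writeup would want a sentence observing that circuit elimination follows from augmentation (a standard fact), but this is a stylistic point rather than a gap.
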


\begin{lem}[Shortest augmenting paths]
\label{lem:basic_sap}Let $s,v_{1},...v_{a},t$ be a shortest path
from $s$ to $t$ in the exchange graph $G(S)$. Then $S+v_{1}-v_{2}+...-v_{a-1}+v_{a}\in\I_{1}\cap\I_{2}$,
i.e., augmenting along a shortest augmenting path preserves independence.
Sometimes we call this an \emph{augmentation} for brevity.
\end{lem}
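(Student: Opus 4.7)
The plan is to reduce the lemma, by symmetry between the two matroids, to showing $S'\defeq S+v_{1}-v_{2}+v_{3}-\cdots-v_{a-1}+v_{a}\in\I_{1}$, and to prove this via Krogdahl's classical unique-perfect-matching criterion: if $I\in\I$, $A\subseteq I$, $B\subseteq V\setminus I$ with $|A|=|B|$, and the bipartite exchange graph on $A\cup B$ (edge $(x,y)$ iff $I-x+y\in\I$) has a \emph{unique} perfect matching, then $(I\setminus A)\cup B\in\I$. I set $I_{1}\defeq S+v_{1}$, which lies in $\I_{1}$ because the arc $s\to v_{1}$ forces $v_{1}\in\free_{1}(S)$. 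Taking $A\defeq\{v_{2},v_{4},\ldots,v_{a-1}\}\subseteq I_{1}$ and $B_{1}\defeq\{v_{3},v_{5},\ldots,v_{a}\}\subseteq V\setminus I_{1}$, I have $|A|=|B_{1}|=(a-1)/2$ and $(I_{1}\setminus A)\cup B_{1}=S'$, so it suffices to verify the unique-matching hypothesis for the bipartite exchange graph $H_{1}$ built over $I_{1}$.

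For the diagonal $(v_{2i},v_{2i+1})$ of $H_{1}$: the type-3 arc $v_{2i}\to v_{2i+1}$ on the path gives $Z\defeq S-v_{2i}+v_{2i+1}\in\I_{1}$, and since $|Z|<|I_{1}|$ the exchange axiom produces some $x\in I_{1}\setminus Z=\{v_{1},v_{2i}\}$ with $Z+x\in\I_{1}$. The case $x=v_{1}$ is exactly the desired $I_{1}-v_{2i}+v_{2i+1}\in\I_{1}$. The case $x=v_{2i}$ yields $S+v_{2i+1}\in\I_{1}$, i.e., $v_{2i+1}\in\free_{1}(S)$, which makes $s\to v_{2i+1}\to v_{2i+2}\to\cdots\to v_{a}\to t$ a valid $s$--$t$ path in $G(S)$ of length $a-2i+1<a+1$ (using $i\geq 1$), contradicting the shortest-path hypothesis. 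For uniqueness, any edge $(v_{2i},v_{2j+1})\in H_{1}$ satisfies $I_{1}-v_{2i}+v_{2j+1}\in\I_{1}$, and downward closure gives $(v_{2i},v_{2j+1})\in\exchange_{1}(S)$, so $v_{2i}\to v_{2j+1}$ is a type-3 arc of $G(S)$; the shortest-path inequality $d_{G(S)}(s,v_{2j+1})\leq d_{G(S)}(s,v_{2i})+1$ then becomes $2j+1\leq 2i+1$, i.e., $j\leq i$. Ordering the two sides of $H_{1}$ by their indices, this makes $H_{1}$ lower-triangular with a full diagonal, whose only perfect matching is the diagonal itself, and Krogdahl's criterion yields $S'\in\I_{1}$.

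The $\I_{2}$ case is fully symmetric with $I_{2}\defeq S+v_{a}\in\I_{2}$, the same $A$, and $B_{2}\defeq\{v_{1},v_{3},\ldots,v_{a-2}\}$: the diagonal pairs $(v_{2i},v_{2i-1})$ arise from the type-4 arcs on the path, the alternative exchange would force $v_{2i-1}\in\free_{2}(S)$ and hence a shortcut $s\to v_{1}\to\cdots\to v_{2i-1}\to t$ of length $2i\leq a-1$, and for off-diagonal edges the shortest-path inequality applied to the reversed type-4 arc $v_{2j-1}\to v_{2i}$ yields $2i\leq 2j$, so $H_{2}$ is upper-triangular with a full diagonal. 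I expect the main obstacle to be the diagonal-entry step: the matroid exchange axiom alone does not place $(v_{2i},v_{2i+1})$ into $H_{1}$ over $I_{1}$ (only over $S$), and it is precisely the shortest-path hypothesis that rules out the alternative augmentation through the ``free'' element $v_{1}$ (respectively $v_{a}$); the uniqueness step, by contrast, is a routine consequence of the no-forward-chord property.
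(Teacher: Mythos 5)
Your proof is correct. The paper states Lemma~\ref{lem:basic_sap} as folklore and does not reproduce a proof, so there is nothing in-paper to compare against; your argument is, however, the standard route (it is essentially the proof that appears in textbook treatments of matroid intersection, e.g.\ Schrijver). The reduction by symmetry to $\I_1$, the choice $I_1=S+v_1$ (resp.\ $I_2=S+v_a$), and the appeal to the Krogdahl/Brualdi unique-perfect-matching criterion is exactly the right decomposition. You also correctly isolate the two places the shortest-path hypothesis enters: the diagonal edges of $H_1$ must be certified over $I_1$ rather than over $S$, and the unwanted branch of the exchange axiom would force $v_{2i+1}\in\free_1(S)$, producing either the arc $s\to v_{2i+1}$ (so $d_{G(S)}(s,v_{2i+1})=1$ instead of $2i+1$) or, as you phrase it, a shorter $s$--$t$ path $s\to v_{2i+1}\to\cdots\to v_a\to t$ of length $a-2i+1<a+1$; and the off-diagonal bound $j\leq i$ is the shortest-path no-forward-chord property, which makes $H_1$ triangular with full diagonal. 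For a polished write-up you would want to make explicit the bookkeeping before invoking the criterion — $|I_1|=|S'|=|S|+1$, $I_1\setminus S'=A$, $S'\setminus I_1=B_1$ — and note the degenerate case $a=1$ (where $A=\emptyset$ and the claim is immediate from $v_1\in\free_1(S)\cap\free_2(S)$), but these are routine. No gaps.
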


The next three lemmas will be used extensively in our algorithms.
The first two show that to solve matroid intersection approximately,
it suffices to stop when the length of the shortest augmenting path
is $1/\epsilon$. The last lemma imposes control over the structure
of the exchange graph as we carry out augmentations.
\begin{lem}[Cunningham \cite{Cunningham-SICOMP86}]
\label{lem:augm-path-length} For any two matroids $\M_{1}=(V,\I_{1})$
and $\M_{2}=(V,\I_{2})$ with the largest common independent set of
size $r$, given a set $S\in\I_{1}\cap\I_{2}$ of size $|S|<r$,
there exists an augmenting path in $G(S)$ of size at most $1+2|S|/(r-|S|)$.
\end{lem}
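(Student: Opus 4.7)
The plan is to exhibit $d := r - |S|$ vertex-disjoint augmenting $s$-$t$ paths inside $G(S)$, so that totaling the internal $S$-vertices across these paths and comparing with $|S|$ forces the shortest path to be short. Throughout, let $T \in \I_1 \cap \I_2$ be a common independent set of size $r$, and let $p$ denote the size of the shortest augmenting path in $G(S)$; recall $p$ is odd and that a path of size $2\ell - 1$ has its internal vertices alternate $\overline{S}, S, \ldots, \overline{S}$ and thus contains exactly $\ell - 1$ elements of $S$.

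First I construct a ``matching'' structure with respect to $\M_1$. Since $S, T \in \I_1$ with $|T| > |S|$, iterating the matroid exchange axiom produces a set $F_1 \subseteq T \setminus S$ with $|F_1| = d$ and $S \cup F_1 \in \I_1$; in particular $F_1 \subseteq \free_1(S)$. Now $S \cup F_1$ and $T$ are $\M_1$-independent of equal size, so the symmetric exchange property of matroids (a standard consequence of Hall's theorem applied to the $\M_1$-exchange bipartite graph on $(S \cup F_1) \setminus T$ and $T \setminus (S \cup F_1)$) yields a bijection $\phi_1 : S \setminus T \to (T \setminus S) \setminus F_1$ such that $(S \cup F_1) - a + \phi_1(a) \in \I_1$ for each $a$. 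Passing to the subset $S - a + \phi_1(a) \in \I_1$ gives $(a, \phi_1(a)) \in \exchange_1(S)$. An identical construction for $\M_2$ produces $F_2 \subseteq (T \setminus S) \cap \free_2(S)$ with $|F_2| = d$ and a bijection $\phi_2 : S \setminus T \to (T \setminus S) \setminus F_2$ with $(a, \phi_2(a)) \in \exchange_2(S)$.

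Let $H$ be the subgraph of $G(S)$ with arc set
\[
\{(s,f) : f \in F_1\} \cup \{(a, \phi_1(a)) : a \in S \setminus T\} \cup \{(\phi_2(a), a) : a \in S \setminus T\} \cup \{(f, t) : f \in F_2\}.
\]
Each arc is a valid arc of $G(S)$ by the freeness and exchange conditions above. In $H$, every $x \in S \setminus T$ has exactly one incoming arc (from $\phi_2(x)$) and one outgoing arc (to $\phi_1(x)$); every $y \in T \setminus S$ has exactly one incoming arc (from $s$ if $y \in F_1$, else from $\phi_1^{-1}(y)$) and one outgoing arc (to $t$ if $y \in F_2$, else to $\phi_2^{-1}(y)$); and $s$ has out-degree $d$ while $t$ has in-degree $d$. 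A digraph with these degree conditions decomposes into $d$ vertex-disjoint $s$-$t$ paths together with possibly some vertex-disjoint cycles contained in $S \triangle T$.

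Since $H \subseteq G(S)$, each of these $d$ paths is an augmenting path in $G(S)$ of the form $s \to y_1 \to x_1 \to y_2 \to \cdots \to x_{\ell - 1} \to y_\ell \to t$ with $y_i \in T \setminus S$ and $x_i \in S \setminus T$, hence of size $2\ell - 1$ and containing $\ell - 1$ internal vertices from $S$. Each such path has size at least $p$, so $\ell - 1 \ge (p - 1)/2$. Summing over the $d$ vertex-disjoint paths yields $|S| \ge d(p - 1)/2$, which rearranges to $p \le 1 + 2|S|/(r - |S|)$, proving the lemma. The main technical obstacle is the strong-exchange step that produces the maps $\phi_1, \phi_2$ and the free sets $F_1, F_2$; once these are in hand, the degree-counting decomposition of $H$ and the final pigeonhole are essentially bookkeeping.
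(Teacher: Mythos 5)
Your proof is correct. The paper itself does not give a proof of this lemma (it simply cites Cunningham), so there is no in-paper argument to compare against; your argument is essentially the standard one. The key steps — extend $S$ by $F_i \subseteq T \setminus S$ inside $\mathcal{M}_i$, invoke the bijective exchange theorem (Brualdi/Hall) between the equal-sized independent sets $S \cup F_i$ and $T$ to get $\phi_i: S \setminus T \to (T\setminus S)\setminus F_i$, assemble the subgraph $H$ of $G(S)$ whose degree profile forces $d = r - |S|$ vertex-disjoint $s$-$t$ paths, and then pigeonhole the $\leq |S|$ internal $S$-vertices over these $d$ paths — are exactly the ingredients of Cunningham's proof as presented in, e.g., Schrijver's book. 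One small hygiene point: you define $p$ as the length of the shortest augmenting path before showing that any augmenting path exists; this is harmless since your construction of $H$ exhibits such paths independently, but it would read more cleanly to build $H$ first, note that augmenting paths therefore exist, and only then name $p$. Also, the bound you actually prove is the slightly stronger $p \leq 1 + 2|S\setminus T|/(r-|S|)$, since the paths' $S$-vertices lie in $S \setminus T$; relaxing $|S\setminus T|$ to $|S|$ gives the stated inequality.
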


\begin{cor}[\cite{Cunningham-SICOMP86,ChekuriQuanrud-SODA16,HuaKK16}]
\label{cor:approx-algo-certif} For any two matroids $\M_{1}=(V,\I_{1})$
and $\M_{2}=(V,\I_{2})$ with the largest common independent set of
size $r$, if the length of the shortest augmenting path in exchange
graph $G(S)$ for some $S\in\I_{1}\cap\I_{2}$ is at least $1/\epsilon$,
then $|S|\geq(1-O(\epsilon))\cdot r$.
\end{cor}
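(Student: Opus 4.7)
The plan is to derive this statement as a direct consequence of Lemma~\ref{lem:augm-path-length}, essentially by contraposition. Lemma~\ref{lem:augm-path-length} produces an upper bound on the length of \emph{some} augmenting path (and hence on the length of the shortest one) in terms of $|S|$ and $r-|S|$, so a lower bound on the shortest augmenting path length immediately forces $r-|S|$ to be small.

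More concretely, I would first note that if no augmenting path exists in $G(S)$ then $|S|=r$ and the conclusion is trivial. Otherwise, applying Lemma~\ref{lem:augm-path-length} to $S$ gives that the shortest augmenting path has length at most $1+2|S|/(r-|S|)$. Combining this with the hypothesis that the shortest augmenting path length is at least $1/\epsilon$ yields the inequality
\[
\frac{1}{\epsilon} \;\leq\; 1 + \frac{2|S|}{r-|S|}.
\]
Setting $\delta \defeq r-|S|$ and rearranging, I would get $\delta\bigl((1-\epsilon)/\epsilon\bigr) \leq 2|S| = 2(r-\delta)$, and thus $\delta\bigl((1+\epsilon)/\epsilon\bigr) \leq 2r$, so $\delta \leq 2r\epsilon/(1+\epsilon) \leq 2r\epsilon$. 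This rearrangement is the only real computation and is entirely routine.

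Substituting back, this gives $|S| \geq (1-2\epsilon)\,r = (1-O(\epsilon))\,r$, which is exactly the claim. There is no real obstacle here: the corollary is a one-line algebraic manipulation of Lemma~\ref{lem:augm-path-length}, with the only point of care being to handle separately the degenerate case $|S|=r$ (where the lemma does not apply because no augmenting path exists). The content of the statement lies entirely in Lemma~\ref{lem:augm-path-length}; the corollary merely repackages it as an approximation guarantee, which is precisely why it serves as the stopping criterion for approximate versions of Cunningham's phased algorithm.
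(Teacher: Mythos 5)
Your proposal is correct and takes exactly the route the paper intends: the statement is labeled a corollary of Lemma~\ref{lem:augm-path-length} (which immediately precedes it), and the paper omits the proof precisely because it is the routine rearrangement you carry out. Your algebra checks out, giving $r-|S| \leq 2r\epsilon/(1+\epsilon) \leq 2r\epsilon$, and you correctly flag the degenerate case $|S|=r$ where the lemma does not apply.
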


Our algorithms will rely on the following monotonicity lemma of Cunningham,
which says augmenting along the shortest path in an exchange graph
can only increase the distance of every element from the source and
sink vertices. 
\begin{lem}[Monotonicity Lemma\footnote{Throughout the paper, we will apply this lemma only on vertices $a$ which lie on a shortest $s,t$-path and therefore the conditions $d_{G(S)}(s,a) < d_{G(S)}(s,t)$ and $d_{G(S)}(a,t) < d_{G(S)}(s,t)$ hold. Furthermore, the second part shows that vertices with distance more than that cannot be part of an augmenting path of this length in the future. Cunningham's original result doesn't include this condition, and is false without it as corrected by \cite{Haselmayr,Price}, which is the first part of the lemma. Our version has an extra second part which is in fact also needed for proving the correctness of Cunningham's method. We thank Troy Lee for making us aware of these conditions.}, Cunningham \cite{Cunningham-SICOMP86}, Price~\cite{Price}, Haselmayr~\cite{Haselmayr}]
\label{lem:cunning-monot} For any two matroids $\M_{1}=(V,\I_{1})$
and $\M_{2}=(V,\I_{2})$ if we augment along the shortest path in
$G(S)$ obtaining $G(S')$ for a new set $S'\in\I_{1}\cap\I_{2}$
with $|S'|>|S|$. Denote $d = d_{G(S)}$ and $d' = d_{G(S')}$.  Then for all $a\in V$, 
\setenumerate{noitemsep,label=(\roman*)}
\begin{enumerate}
\item If $d(s,a)<d(s,t)$, then $d'(s,a)\geq d(s,a)$. Similarly, if $d(a,t)<d(s,t)$,
then $d'(a,t)\geq d(a,t)$. \label{cond1Cunning}
\item If $d(s,a)\geq d(s,t)$, then $d'(s,a)\geq d(s,t)$. Similarly, if
$d(a,t)\geq d(s,t)$, then $d'(a,t)\geq d(s,t)$. \label{cond2Cunning}
\end{enumerate}

\end{lem}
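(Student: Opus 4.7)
The plan is to prove both parts by contradiction, analyzing how the arc set of the exchange graph changes after augmenting along a shortest $s,t$-path. Let $P = s, v_1, \ldots, v_a, t$ be the path used for augmentation, so $d(s,t) = a+1$ and $v_i \in L_i$ where $L_i := \{v \in V : d(s,v) = i\}$ is the $i$-th BFS level of $G(S)$; note $S' = S \oplus \{v_1, \ldots, v_a\}$. The core technical tool will be an \emph{arc-rerouting lemma}: every arc $(u,w) \in E_{S'}$ either already belongs to $E_S$, or admits a short detour through some $v_i \in P$ consisting of arcs of $E_S$. To establish this, consider e.g.\ a new type-3 arc $(u,w)$ in $E_{S'}$ with $u \in S'$, $w \notin S'$, and $S' - u + w \in \I_1$. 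Because the symmetric difference $S \oplus S'$ is contained in $P$, applying Lemma~\ref{lem:circuits} to the circuit of $(S - u) + w$ in $\M_1$ produces a vertex $v_i$ on $P$ such that both $u \to v_i$ and $v_i \to w$ are arcs of $G(S)$. Analogous circuit arguments handle type-4 arcs and arcs incident to $s$ or $t$.

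Given the rerouting lemma, part (i) follows quickly. Suppose for contradiction that some $a^* \in V$ satisfies $d(s, a^*) < d(s,t)$ and yet $d'(s, a^*) < d(s, a^*)$. Take a shortest $s \to a^*$ path $\pi'$ in $G(S')$, and replace each new arc of $\pi'$ by its length-two detour through $P$. Using the fact that $v_i \to v_{i+1}$ is itself an arc of $E_S$, overlapping detours can be merged along $P$, and one shows the total length of the resulting walk in $G(S)$ is at most the length of $\pi'$, i.e.\ $d'(s, a^*) < d(s, a^*)$, contradicting the definition of $d(s, a^*)$. Part (ii) is handled identically: if $d(s,a^*) \geq d(s,t)$ and $d'(s,a^*) < d(s,t)$, then the rerouted walk yields an $s \to a^*$ path in $G(S)$ of length $< d(s,t) \leq d(s, a^*)$, again a contradiction. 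The symmetric statements for $d(\cdot, t)$ follow by the dual argument obtained by reversing all arcs (equivalently, swapping the roles of $\M_1$ and $\M_2$), applied to the reverse shortest path from $t$ to $s$.

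The main obstacle is establishing the rerouting lemma with tight enough length bounds, in particular verifying the detour-merging step so that the rerouted walk is no longer than $\pi'$. This demands careful case analysis by arc type and a precise description of which vertices of $P$ serve as detour pivots. The argument crucially relies on $P$ being a shortest path, which constrains how the detours can be chained: pivots must appear in the order in which their levels are visited by $\pi'$, otherwise one could shortcut $\pi'$ itself. Once the rerouting lemma and merging are in place, the rest is routine; the extra second part of the lemma reflects the observation that a violator $a^*$ for (ii) would yield an $s \to t$ walk in $G(S)$ of length less than $d(s,t)$ upon concatenating the rerouted $s \to a^*$ walk with the tail of $P$ (or a path from $a^*$ to $t$ obtained by a symmetric rerouting), contradicting optimality of $P$.
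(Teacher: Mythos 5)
Your proposal diverges from the paper in both structure and substance, and it has a genuine gap.

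\textbf{Where you diverge.} The paper does not re-derive part~(i): it cites Haselmayr and Price for that. It proves only part~(ii), and it does so by a minimal-counterexample argument that is entirely \emph{local}: pick a violating $a$ with smallest $d'(s,a)$, look at its predecessor $b$ on a shortest $s$-$a$ path in $G(S')$, apply part~(i) to $b$ (legitimate by minimality) to deduce $(b,a)\notin E_S$, and then argue that because $(b,a)\in E_{S'}\setminus E_S$ and $S\oplus S'\subseteq p$, $G(S)$ must already contain an arc into $a$ from some vertex of $p$ --- which bounds $d(s,a)$ and gives the contradiction. You instead attempt to re-prove both parts via a global path-rerouting scheme. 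That is a genuinely different route, and it is also the route where you run into trouble.

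\textbf{The gap.} Your ``arc-rerouting lemma'' asserts that every arc $(u,w)\in E_{S'}\setminus E_S$ admits a length-two detour $u\to v_i\to w$ with $v_i$ on $p$ and \emph{both} $(u,v_i)$ and $(v_i,w)$ in $E_S$. This cannot hold for arcs whose two endpoints both avoid $p$. The exchange graph $G(S)$ is bipartite between $S$ and $\overline S$ (type-3 arcs go $S\to\overline S$, type-4 arcs go $\overline S\to S$; arcs incident to $s,t$ touch only $\overline S$). If $u,w\notin p$ then their membership in $S$ is the same as in $S'$, so $u$ and $w$ lie on \emph{opposite} sides of the $S$-bipartition. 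A length-two walk $u\to v_i\to w$ in $G(S)$ would force $v_i$ to lie simultaneously on the opposite side from $u$ and on the opposite side from $w$, which is impossible since $u$ and $w$ are already on opposite sides. What \emph{is} true --- and what the paper actually uses --- is the weaker, one-sided statement that a new arc $(b,a)$ with $a\notin p$ forces $G(S)$ to contain an arc from $p$ into $a$; you cannot strengthen it to a two-sided length-two detour. Consequently the subsequent ``merge overlapping detours so the walk is no longer than $\pi'$'' step has no sound basis to start from, and you yourself flag it as the main obstacle without resolving it (the pivot indices need not appear in a monotone order along $\pi'$, so naive concatenation along $p$ can lengthen the walk or go the wrong way along $p$). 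As written, the argument for part~(i) therefore does not go through, and part~(ii) inherits the same problem. A repair is likely possible, but it would end up looking much more like the paper's local, minimal-counterexample argument (and an appeal to the known proof of part~(i)) than like the global rerouting you sketch.
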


\begin{proof}
As $s$ and $t$ are symmetric, it suffices to prove the lemma
for $s$. Let $p$ be the augmenting path.
\ref{cond1Cunning} of the lemma follows from~\cite{Haselmayr,Price}.

For~\ref{cond2Cunning}, suppose that there is some $a$ for which $d(s,a)\geq d(s,t)$
and $d'(s,a)<d(s,t)$. Pick such $a$ with the minimal $d'(s,a)$.
Since $d(s,a)\geq d(s,t)$, $a\notin p$.

Consider the vertex $b$ right before $a$ on the shortest $s-a$
path in $G(S')$. We have $d'(s,b)=d'(s,a)-1<d(s,t)-2$. By the minimality
of $a$, $d(s,b)<d(s,t)$ so (1) applies to $b$. Thus $d(s,b)\leq d'(s,b)<d(s,t)-2$
which implies $(b,a)\notin G(S)$ (otherwise $d(s,a)<d(s,t)-1$).
We have two cases.

As exchanging $a,b$ in $S'$ preserves independence, and that $S$
itself is independent, $G(S)$ must contain an edge from $b$ or some
internal vertex on $p$ to $a$. This is a contradiction as $(b,a)\notin G(S)$.
\end{proof}

\selectlanguage{american}%
\global\long\def\findFree{\texttt{FindFree}}%
\global\long\def\findExchange{\texttt{FindExchange}}%
\global\long\def\outarc{\texttt{OutArc}}%
\global\long\def\getdistances{\texttt{GetDistancesRank}}%
\global\long\def\blockflow{\texttt{BlockFlow}}%

\global\long\def\M{\mathcal{M}}%
\global\long\def\P{\mathcal{P_{\M}}}%
\global\long\def\x{\textbf{\ensuremath{\mathbf{\mathbf{x}}}}}%
\global\long\def\z{\textbf{\ensuremath{\mathbf{z}}}}%
\global\long\def\I{\mathcal{\mathcal{I}}}%
\global\long\def\rk{\mathcal{\textsf{rank}}}%
\global\long\def\circ{\mathcal{\textsf{circuit}}}%
\global\long\def\free{\mathcal{\textsf{free}}}%
\global\long\def\otime{\mathcal{T}}%
\global\long\def\exchange{\textsf{exchange}}%

\selectlanguage{english}%

\section{Exchange Graph Exploration via Binary Search}

\label{sec:exchange_explore}

In this section we show how to use rank and independence oracles to
efficiently query the exchange graph for a pair of matroids, $\M_{1}=(V,\I_{1})$
and $\M_{2}=(V,\I_{2})$. These graph exploration primitives form
the basis of our exact and approximate matroid intersection algorithms
with a rank oracle as well as our exact matroid intersection algorithm
with an independence oracle.

In Section~\ref{sec:findfree} we show how to efficiently find free
vertices using a rank oracle and in Section~\ref{sec:findexchange}
we show how to efficiently find exchangeable pairs using an independence
oracle. Each routine works by a careful binary search with the appropriate
oracle and serves an important role in efficiently finding edges in
the exchange graph.

\subsection{Finding Free Vertices using Rank Oracle}

\label{sec:findfree}

We show that given a matroid $\M=(V,\I)$, an independent set $S\in\I$,
and a set $B\subseteq V$, we can efficiently find a free element
$e\in B$ with respect to $S$ (i.e., $S+e\in\I)$ or conclude that
there is none. The procedure $\findFree$ given in Algorithm~\ref{alg:findfree}
finds this element by binary search with the rank oracle. There is
a free element of $B$ if and only if $\rk_{\M}(B\cup S)>\rk(S)$
and consequently we can simply repeatedly divide $B$ in half, querying
if this property still holds. Formally, we analyze the performance
of this algorithm in Lemma~\ref{lem:findfree}.

\begin{algorithm2e}[H]

\label{alg:findfree}

\caption{$\findFree(\M=(V,\I),S\in\I,B\subseteq V)$}

\SetAlgoLined

\textbf{Input}: matroid $\M=(V,I)$, independent set $S\in\I$, and
subset $B\subseteq V$

\textbf{Output}: an element of $B\cap\free_{\M}(S)$ or $\emptyset$
if no such element exists

\lIf{ $\rk_{\M}(B\cup S)=\rk(S)$ }{ return $\emptyset$ }

\While{$|B|>1$ }{

Let $B_{1}$ and $B_{2}$ be an arbitrary partition of $B$ with $|B_{1}|\leq\lceil|B|/2\rceil$
and $|B_{2}|\leq\lceil|B|/2\rceil$

\lIf{$\rk_{\M}(B_{1}\cup S)>\rk_{\M}(B_{1})$ }{ $B\leftarrow B_{1}$ }

\lElse{ $B\leftarrow B_{2}$ }

}

\textbf{return:} the single element in $B$

\end{algorithm2e}
\begin{lem}[Finding Free Vertices]
 \label{lem:findfree} Given matroid $\M=(V,I)$, independent set
$S\in\I$, and elements $B\subseteq V$, the procedure $\findFree(\M,S,B)$
(Algorithm~\ref{alg:findfree}) outputs in $O(\log(|B|)\cdot\ranktime)$
time either an element of $B\cap\free_{\M}(S)$, or $\emptyset$ if
no such element exists.
\end{lem}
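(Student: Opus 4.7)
The plan is to prove correctness by a loop invariant and then count oracle queries. The central algebraic fact I would use is the following equivalence: for any independent $S \in \I$ and any $T \subseteq V$, the set $T$ contains a free element with respect to $S$ (i.e.\ some $e \in T$ with $S+e \in \I$) if and only if $\rk_{\M}(T \cup S) > \rk(S) = |S|$. The forward direction is immediate, since $S+e$ is an independent set of size $|S|+1$ contained in $T \cup S$. The reverse direction follows from the matroid augmentation axiom (property (i) of the definition): if $\rk_{\M}(T \cup S) > |S|$, then there is some independent $I \subseteq T \cup S$ with $|I|>|S|$, and applying the augmentation axiom to $S$ and $I$ produces an element of $I \setminus S \subseteq T \setminus S$ that is free with respect to $S$.

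With this equivalence in hand, I would verify the initial check and then set up the loop invariant. The initial test $\rk_{\M}(B \cup S) = \rk(S)$ correctly certifies $B \cap \free_{\M}(S) = \emptyset$ and makes the algorithm return $\emptyset$ in that case. Otherwise, I would maintain the invariant $\rk_{\M}(B \cup S) > \rk(S)$, i.e.\ $B \cap \free_{\M}(S) \neq \emptyset$, throughout the while-loop. For maintenance, partition $B = B_1 \cup B_2$; by the invariant, some free element $e$ lies in $B$, hence in $B_1$ or $B_2$, so by the equivalence at least one of $\rk_{\M}(B_i \cup S) > \rk(S)$ holds. The binary-search test (which I would read as comparing against $\rk(S)$ rather than $\rk_{\M}(B_1)$, since the latter does not test whether $B_1$ contains a free element in general) thus correctly selects a half that preserves the invariant: if $B_1$ passes we recurse on $B_1$, otherwise the contrapositive forces $B_2$ to contain a free element.

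For termination and running time, each iteration shrinks $|B|$ by a factor of $2$, so after at most $\lceil \log_2 |B| \rceil$ iterations we have $|B| = 1$. The invariant then says the unique remaining element $e$ satisfies $S + e \in \I$, so returning it is correct. Each loop iteration makes a single rank-oracle call, and the initial check makes one more, so the total cost is $O(\log |B|)$ rank queries, each taking time $\ranktime$, giving the claimed $O(\log(|B|) \cdot \ranktime)$ bound.

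There is no serious obstacle; the only thing to be careful about is isolating the rank-comparison identity ``$\rk_{\M}(T \cup S) > \rk(S) \iff T \cap \free_{\M}(S) \neq \emptyset$'' and applying it uniformly both for the initial termination check and for the recursive binary-search step, so that the invariant is maintained under an arbitrary halving of $B$.
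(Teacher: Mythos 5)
Your proof is correct and follows the same approach as the paper: establish the equivalence $\rk_{\M}(T\cup S)>\rk_{\M}(S)\iff T\cap\free_{\M}(S)\neq\emptyset$, use it as the loop invariant, and observe that partitioning $B$ preserves the invariant on at least one half. You also correctly noticed that the test in the pseudocode reads $\rk_{\M}(B_{1}\cup S)>\rk_{\M}(B_{1})$ rather than $\rk_{\M}(B_{1}\cup S)>\rk_{\M}(S)$; this is indeed a typo in the algorithm (the paper's own proof text uses the comparison against $\rk_{\M}(S)$), and your reading is the intended one.
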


\begin{proof}
Note that $\free_{\M}(S)\cap B\neq\emptyset$ if and only if $\rk_{\M}(S\cup B)>\rk_{\M}(S)$.
Further, if $B_{1}$ and $B_{2}$ partition $B$ and $\rk_{\M}(S\cup B)>\rk_{\M}(S)$
then either $\rk_{\M}(S\cup B_{1})>\rk_{\M}(S)$ or $\rk_{\M}(S\cup B_{2})>\rk_{\M}(S)$.
Consequently, the output of the algorithm is correct. Since the size
of $|B|$ halves in each iteration of the while loop and the partitions
can be done arbitrarily, the running time is also as desired. 
\end{proof}
Given an independent set $S\in\I_{1}\cap\I_{2}$ for matroids $\M_{1}=(V,\I_{1})$
and $\M_{2}=(V,\I_{2})$ and an element $a\in S$, $\findFree$ gives
us an efficient way to find both incoming and outgoing arcs of $a$
in the exchange graph $G(S)$. Further, it gives an efficient way
to find arcs incident to $s$ and $t$ in the exchange graph. We leverage
this procedure for this purpose in Section~\ref{sec:Rank-Oracle}.

\subsection{Finding Exchange Vertices using Independence or Rank Oracle}

\label{sec:findexchange}

We show that given a matroid $\M=(V,\I)$, an independent set $S\in\I$,
an element $b\in\overline{S}$, and $A\subseteq S$ we can efficiently
find an element $a\in A$ that is exchangeable with $b$ (i.e., $S-a+b\in\I$)
or conclude that there is none. The procedure $\findExchange$, given
in Algorithm~\ref{alg:findexchange}, finds this element by binary
search using an independence oracle. We repeatedly divide the remaining
elements into two halves and figure out adding which half preserves
independence. Formally, we analyze the performance of this algorithm
in Lemma~\ref{lem:findfree}.

\begin{algorithm2e}[H]

\label{alg:findexchange}

\caption{$\findExchange(\M=(V,\I),S\in\I,b\in\overline{S},A\subseteq S)$}

\SetAlgoLined

\textbf{Input}: matroid $\M=(V,\I)$, independent set $S\in\I$, element
$b\in\overline{S}$, and a non-empty subset $A\subseteq S$

\textbf{Output}: $a\in A$ such that $(a,b)\in\exchange_{\M}(S)$
or $\emptyset$ if no such element exists

\lIf{ $S-A+b\notin\I$ }{ \textbf{return}: $\emptyset$ }

\While{$|A|>1$ }{

Let $A_{1}$ and $A_{2}$ be an arbitrary partition of $A$ with $|A_{1}|\leq\lceil|A|/2\rceil$
and $|A_{2}|\leq\lceil|A|/2\rceil$

\lIf{ $S-A_{1}+b\in\I$ }{ $A\leftarrow A_{1}$ }

\lElse{ $A\leftarrow A_{2}$ }

}

\textbf{return} the unique element of $A=\{a\}$

\end{algorithm2e}
\begin{lem}[Finding Exchange Vertices]
 \label{lem:exchange} Given a matroid $\M=(V,\I)$, independent
set $S\in\I$, element $b\in\overline{S}$, and a non-empty subset
$A\subseteq S$, the procedure $\findExchange(\M,S,b,A)$ (Algorithm~\ref{alg:findexchange})
outputs in $O(\log(|A|)\cdot\indeptime) = O(\log r\cdot \indeptime)$ time either an element
$a\in A$ such that $(a,b)\in\exchange_{\M}(S)$, or $\emptyset$
if no such element exists
\end{lem}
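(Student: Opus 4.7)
My plan is to separate the runtime bound (which is immediate) from correctness (which reduces to a single structural equivalence powered by Lemma~\ref{lem:circuits}). For runtime: the initial test makes one independence query, and each iteration of the while loop makes one more query while at least halving $|A|$; hence the loop runs $O(\log|A|)$ times. Since $|A|\le|S|\le r$, the total cost is $O(\log r\cdot\indeptime)$.

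The heart of correctness is the following equivalence, which I would state and prove separately: for any non-empty $A'\subseteq S$, there exists $a\in A'$ with $S-a+b\in\I$ if and only if $S-A'+b\in\I$. The forward direction is direct, since $S-A'+b\subseteq S-a+b$ for any $a\in A'$ and independence is hereditary. For the reverse, if $S+b\in\I$ the claim is trivial because every $a\in S$ is then exchangeable. Otherwise let $C=\circ_{\M}(S+b)$; since $S$ is independent, every circuit of $S+b$ must contain $b$, so $C$ is in fact the unique circuit of $S+b$. By Lemma~\ref{lem:circuits} the exchangeable elements of $S$ are exactly $C\setminus\{b\}$. Thus failure of the left-hand side means $A'\cap C=\emptyset$, which forces $C\subseteq S-A'+b$ and hence $S-A'+b\notin\I$, giving the contrapositive we need.

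With the equivalence in hand, I would argue correctness of the algorithm via the loop invariant $S-A+b\in\I$. The initial test correctly handles the case when this invariant already fails (return $\emptyset$). In each iteration, the partition yields non-empty $A_1,A_2$ since $|A|>1$. If $S-A_1+b\in\I$, the algorithm recurses on $A_1$ and the invariant is preserved verbatim. Otherwise $S-A_1+b\notin\I$, which forces $S+b\notin\I$ (again by hereditarity), and the same circuit argument places $C\subseteq S-A_1+b$, so $A_1\cap C=\emptyset$; combining with $A\cap C\ne\emptyset$ (from the current invariant via the equivalence) gives $A_2\cap C\ne\emptyset$ and hence $S-A_2+b\in\I$ by the equivalence. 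When the loop exits with $|A|=1$, the invariant together with the equivalence certifies that the unique element of $A$ is exchangeable with $b$. The only non-mechanical step is verifying that the else-branch preserves the invariant, and that is exactly where Lemma~\ref{lem:circuits} is doing the essential work by turning an independence test into a clean statement about the fundamental circuit of $b$.
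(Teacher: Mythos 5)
Your proof is correct, but it takes a genuinely different route from the paper's. The paper argues directly from the matroid augmentation axiom: since $S-A+b$ is independent of size $|S|-|A|+1$ and $S$ is independent of size $|S|$, one can add $|A|-1$ elements of $A$ back into $S-A+b$ while preserving independence; the single left-out element $a^*\in A$ witnesses exchangeability, and for any partition $A=A_1\cup A_2$ it lands in one half, making $S-A_i+b\in\I$ for that $i$. You instead route everything through Lemma~\ref{lem:circuits} and the uniqueness of the fundamental circuit $C=\circ_{\M}(S+b)$: the exchangeable elements are exactly $C\setminus\{b\}$, so the test ``$S-A'+b\in\I$'' is equivalent to ``$A'\cap C\neq\emptyset$,'' and the binary search is literally a search for an element of $C$. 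Both arguments are about equally short and both close the same loop invariant; the paper's is a one-liner from the exchange axiom and needs nothing beyond the matroid definition, while yours gives a cleaner structural picture of what the algorithm is doing (locating a fundamental-circuit element) at the modest cost of invoking uniqueness of the fundamental circuit, which you correctly note but could spell out in one more line. One small streamlining: once you have the equivalence, the else-branch is immediate — the invariant gives some exchangeable $a\in A$, the failed test says no such $a$ lies in $A_1$, hence $a\in A_2$ and $S-A_2+b\in\I$ by the forward direction — so the second appeal to ``$C\subseteq S-A_1+b$'' is not needed.
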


\begin{proof}
If $S-A+b\notin\I$ then of course no such $a$ exists. Assume $S-A+b\in\I$.

By considering $S\in\I$, we can add $|A|-1$ elements of $A$ to
$S-A+b$ while preserving independence. In particular, for any partition
of $A$ into nonempty $A_{1},A_{2}$, we must have $S-A_{1}+b\in\I$
or $S-A_{2}+b\in\I$. This proves the correctness of $\findExchange(\M,S,b,A)$
which just repeatedly performs this.

Since we halve the size of $A$ in each iteration, there can be at
most $O(\log|A|)$ iterations and the runtime follows.
\end{proof}
Given an independent set $S\in\I_{1}\cap\I_{2}$ for matroids $\M_{1}=(V,\I_{1})$
and $\M_{2}=(V,\I_{2})$ and an element $a\in\overline{S},$ $\findExchange$
gives us an efficient way to find both incoming and outgoing arcs
of $a$ in the exchange graph. We will leverage this procedure for
this purpose in Section~\ref{sec:Indep-Exact-Algorithm} and Section~\ref{sec:Indep-Approximation-Algorithm},
where we will exploit that this procedure only requires an independence
oracle.

In fact, the same procedure works for the rank oracle as well since
it easily implements the independence oracle:
\begin{lem}[Finding Exchange Vertices]
 \label{lem:exchange-1} Given matroid $\M=(V,\I)$, independent
set $S\in\I$, element $b\in\overline{S}$, and non-empty subset $A\subseteq S$,
the procedure $\findExchange(\M,S,b,A)$ (Algorithm~\ref{alg:findexchange})
outputs in $O(\log(|A|)\cdot\ranktime)= O(\log r\cdot \indeptime)$ time either $a\in A$
such that $(a,b)\in\exchange_{\M}(S)$, or $\emptyset$ if no such
element exists.
\end{lem}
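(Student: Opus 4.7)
The plan is to observe that the rank oracle trivially implements the independence oracle, and then invoke Lemma~\ref{lem:exchange} essentially verbatim. Concretely, for any subset $T\subseteq V$ one has $T\in\I$ if and only if $\rk_{\M}(T)=|T|$, as already noted in the Oracles paragraph of Section~\ref{sub:notation}. Hence every test of the form ``is $S-A_i+b\in\I$?'' performed inside Algorithm~\ref{alg:findexchange} can be replaced by a single rank query comparing $\rk_{\M}(S-A_i+b)$ against $|S-A_i+b|$.

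With that substitution, correctness is inherited directly from Lemma~\ref{lem:exchange}: the algorithm's behavior is unchanged on every input, since each branch of the while loop and the initial feasibility check produce the same Boolean outcome as in the independence-oracle implementation. Therefore the output is either an element $a\in A$ with $(a,b)\in\exchange_{\M}(S)$ or $\emptyset$ when no such element exists.

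For the running time, note that $\findExchange$ makes $O(\log|A|)$ independence-oracle calls, each of which is now replaced by one rank-oracle call of cost $\ranktime$. The total cost is therefore $O(\log|A|\cdot\ranktime)$, and since $|A|\leq|S|\leq r$ this is also $O(\log r\cdot\ranktime)$.

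There is no real obstacle here: the statement is essentially a recording of the fact that the algorithm of the previous subsection is oracle-agnostic in the sense that it only asks independence questions, and any such question is a one-call reduction from the rank oracle. The only thing to be a touch careful about is that we use $\rk_{\M}(T)=|T|$ as the independence test (not, say, $\rk_{\M}(T)\geq|T|$), and that we apply it on the set $T=S-A_i+b$ whose cardinality is known at no oracle cost.
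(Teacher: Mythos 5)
Your proof is correct and follows the same route as the paper: observe that one rank query implements an independence test via $\rk_{\M}(T)=|T|$, then inherit correctness and the $O(\log|A|)$ query bound from Lemma~\ref{lem:exchange}. The paper states this in two sentences; your version just spells out the substitution more explicitly.
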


\begin{proof}
A set $J$ is independent iff $\rk(J)=|J|$, so each independence
oracle call can be implemented by exactly one rank oracle call. Our
result then follows directly from Lemma \ref{lem:exchange}.
\end{proof}

\section{Exact and Approximation  Algorithms using Rank Oracle\label{sec:Rank-Oracle}}

In this section we consider the matroid intersection problem in the
rank oracle model. We assume throughout this section that $\M_{1}=(V,\I_{1})$
and $\M_{2}=(V,\I_{2})$ with $n\defeq|V|$ are given by rank oracles
and our goal is to provide both faster approximate and exact algorithms
for finding a maximum cardinality set in $\I_{1}\cap\I_{2}$. We split
the derivation of these algorithms into three parts.

In Section~\ref{sec:rkexplore} we provide basic primitives for exploring
the exchange graph with a rank oracle. Then, in Section~\ref{sec:rkblockflow}
we provide the main subroutine for our intersection algorithm which
efficiently computes a type of blocking flow in the exchange graph.
Finally, in Section~\ref{sec:ranksolve} we show how to use this
blocking flow subroutine to obtain our desired algorithms for solving
matroid intersection with a rank oracle.

\subsection{Exploring the Exchange Graph with a Rank Oracle}

\label{sec:rkexplore}

Here we show how to use the algorithms $\findFree$ (Algorithm~\ref{alg:findfree})
and $\findExchange$ (Algorithm~\ref{alg:findexchange}) of Section~\ref{sec:exchange_explore}
to efficiently find edges and distances in the exchange graph. First,
we provide the algorithm $\outarc$ (Algorithm~\ref{alg:outarc})
which, given any vertex $a$ and set $B$ of the exchange graph, quickly
finds an arc from $a$ to $B$ (if it exists). This algorithm simply
proceeds case by case finding the possible arcs. If $a=s$ then the
arcs can be found simply by looking for free vertices in $B$. If
$a\in S$ then the arcs can be found simply by looking for vertices in $B\cap \bar{S}$
that are free with respect to $S-a$. Similarly, if $a\notin S$
and $t\in B$ then edges to $t$ can be find just by checking if $S+a$
is independent. Finally, we can use $\findExchange$ to find arcs
from $a\in\bar{S}$ to $b\in S$. 

\begin{algorithm2e}[H]

\label{alg:outarc}

\caption{$\outarc$($S\in\I_{1}\cap\I_{2}$, $a\in V\cap\{s,t\}$, $B\subseteq V_{S}-a$)}

\SetAlgoLined

\textbf{Input}: independent set $S\in\I_{1}\cap\I_{2}$, vertex $a\in V_{S}$
of the exchange graph, and non-empty subset $B\subseteq V_{S}-a$
of the exchange graph. 

\textbf{Output}: $b\in B$ such that $(a,b)\in E_{S}$, or $\emptyset$
if no such element exists

\lIf{ $a=s$ }{ \textbf{return} $\findFree(\M_{1},S,B\setminus\{s,t\})$
}

\lIf{ $a\in S$ }{ \textbf{return} $\findFree(\M_{1},S-a,B\cap\bar{S})$
}

\lIf{ $a\in\bar{S}$, $t\in B$, and $S+a\in\I_{2}$}{ \textbf{return}
$t$ }

\lIf{ $a\in\bar{S}$ }{ \textbf{return} $\findExchange(\M_{2},S,a,B\cap S)$
}

\textbf{return:} $\emptyset$

\end{algorithm2e}
\begin{lem}[Finding Out Arcs Vertices]
 \label{lem:outarcs} Given independent set $S\in\I_{1}\cap\I_{2}$,
a vertex $a\in V_{S}$ of the exchange graph, and non-empty subset $B$, the
algorithm $\outarc(S,a,B)$ (Algorithm~\ref{alg:outarc}) outputs
in $O(\log(|B|)\cdot\ranktime)$ time either $b\in B$ such that $(a,b)\in E_{S}$,
or $\emptyset$ if no such element exists.
\end{lem}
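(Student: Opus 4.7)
The plan is a straightforward case analysis on the type of the vertex $a$, mirroring the four cases in Algorithm~\ref{alg:outarc} and the four kinds of arcs in the exchange graph $G(S) = (V_S, E_S)$. For each case I will (i) show that what the algorithm returns is a valid out-arc to $B$ when it returns a vertex, and (ii) show that if no out-arc from $a$ into $B$ exists then the algorithm returns $\emptyset$. The running-time bound will then follow by invoking Lemma~\ref{lem:findfree} and Lemma~\ref{lem:exchange-1} in each branch.

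Concretely, I would reason as follows. First, if $a = s$, the only out-arcs of $s$ in $E_S$ go to $\free_1(S) \subseteq \bar S$; since $s,t$ are not in the ground set, looking for a free element inside $B \setminus \{s,t\}$ with respect to $\M_1$ is exactly what is needed, and Lemma~\ref{lem:findfree} finishes this case. Second, if $a \in S$, the out-arcs from $a$ are exactly pairs $(a,b)$ with $b \in \bar S$ and $S - a + b \in \I_1$; equivalently $b \in \free_{\M_1}(S - a) \cap \bar S$, which is precisely what $\findFree(\M_1, S - a, B \cap \bar S)$ searches for. Note $S - a$ is independent in $\M_1$ since $S \in \I_1$, so Lemma~\ref{lem:findfree} applies. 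Third, if $a \in \bar S$, then the only possible out-arcs are $(a,t)$ when $a \in \free_2(S)$ (i.e. $S + a \in \I_2$), and $(a,b)$ with $b \in S$ and $(b,a) \in \exchange_2(S)$; the algorithm first checks the former in $O(1)$ rank queries, and otherwise invokes $\findExchange(\M_2, S, a, B \cap S)$, whose correctness and $O(\log |B \cap S| \cdot \ranktime)$ runtime is Lemma~\ref{lem:exchange-1}. It is important to observe that when $t \in B$ but $S + a \notin \I_2$, no $(a,t)$ arc exists anyway, so falling through to the $\findExchange$ branch is still correct.

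For completeness, I need to confirm the ``empty'' case: if none of the four branches produces a vertex, then $a$ has no out-arc into $B$. This is immediate once the case analysis above is in place, because the four branches collectively cover every possible out-arc type for every possible location of $a$, and in each branch the sub-procedure returns $\emptyset$ only when no such out-arc exists (by Lemma~\ref{lem:findfree} or Lemma~\ref{lem:exchange-1}). The running-time accounting is uniform: each branch costs either $O(1)$ rank queries or $O(\log |B| \cdot \ranktime)$, since the relevant sets passed to $\findFree$ or $\findExchange$ have size at most $|B|$.

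There is no real obstacle here; the only mild subtlety is to be careful about $s$ and $t$ (they are not in $V$ and must be stripped from $B$ before calling $\findFree$, and the arc to $t$ is handled as a separate constant-time check rather than via binary search). Once those bookkeeping items are settled, the lemma follows by putting the correctness of each case together with the running-time bounds from Lemmas~\ref{lem:findfree} and \ref{lem:exchange-1}.
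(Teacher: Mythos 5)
Your proposal is correct and takes essentially the same route as the paper's proof, which simply points to Lemmas~\ref{lem:findfree} and~\ref{lem:exchange-1} and observes that the algorithm checks each arc type in the exchange graph one at a time. You have written out that case analysis explicitly, which adds detail but no new ideas beyond what the paper asserts.
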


\begin{proof}
The correctness and runtime follow almost immediately from Lemma~\ref{lem:findfree}
which provides guarantees on $\findFree$, and Lemma~\ref{lem:exchange-1}
which provides guarantees on $\findExchange$. To see the correctness,
simply consider the definition of the exchange graph; this algorithm
directly checks for the existence of each type of edge one at a time.
\end{proof}
Next, leveraging $\outarc$ (Algorithm~\ref{alg:outarc}) and its
analysis in Lemma~\ref{lem:outarcs} we show that we can efficiently
compute all distances from $s$ in the exchange graph. The
algorithm $\getdistances$ (Algorithm~\ref{alg:getdistances}) given below achieves
this goal by essentially performing  breadth first search (BFS) in
the exchange graph. For the sake of completeness, we include the details
below.

\begin{algorithm2e}[H]

\label{alg:getdistances}

\caption{$\getdistances$($S\in\I_{1}\cap\I_{2}$)}

\SetAlgoLined

\textbf{Input}: independent set $S\in\I_{1}\cap\I_{2}$. 

\textbf{Output}: $d\in\R^{V_{S}}$ such that for $a\in V\cap\{s,t\}$ we
have $d_{a}=d(s,a)$ the distance between $s$ and $a$ in $G(S)$.

Let $d_{a}=\infty$ for all $a\in V_{S}$

Let $Q\leftarrow s$, and $B\leftarrow S\backslash s$

\While{ $Q \neq\emptyset$ }{
Let $a$ be the element added to $Q$ earliest. 

\While{ $b=\outarc(S,a,B)$ satisfies $b\neq\emptyset$ }{

Set $d_{b}=d_{a}+1$, $Q\leftarrow Q+b$, $B\leftarrow B-b$
}

}

\textbf{return:} $d$

\end{algorithm2e}
\begin{lem}[Finding Distances]
 \label{lem:finddistances} Given independent set $S\in\I_{1}\cap\I_{2}$
the $\getdistances(S)$ (Algorithm~\ref{alg:outarc}) outputs
in $O(n\log n\cdot\ranktime)$ time $d\in\R^{V_{S}}$ such that for
$a\in V_{S}$ we have $d_{a}=d(s,a)$ is the distance between $s$
and $a$ in the exchange graph $G(S)$.
\end{lem}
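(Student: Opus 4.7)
The plan is to establish correctness by arguing $\getdistances$ is a textbook BFS on $G(S)$ with the only non-standard feature being how neighbors are enumerated, and then to bound the running time by charging each $\outarc$ call either to a freshly discovered vertex or to a dequeue operation.

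For correctness, I would first observe that the set $B$ is maintained throughout the algorithm as exactly the set of vertices that have not yet been assigned a finite distance (equivalently, have not yet been enqueued). The queue $Q$ is processed in FIFO order ("the element added to $Q$ earliest"), which is the BFS invariant. When we dequeue a vertex $a$, the inner while loop repeatedly invokes $\outarc(S, a, B)$; by Lemma~\ref{lem:outarcs}, each such call either returns some $b \in B$ that is an out-neighbor of $a$ in $G(S)$, or returns $\emptyset$, in which case there is no edge from $a$ to any vertex of $B$. Hence, by the time the inner loop terminates, every undiscovered out-neighbor of $a$ has been discovered and assigned distance $d_a + 1$. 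A standard induction on distance then shows that $d_a = d_{G(S)}(s, a)$ for every $a$ eventually dequeued, and that $d_a$ remains $\infty$ for vertices unreachable from $s$.

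For the running time, I would count calls to $\outarc$ as follows. Each execution of the inner while loop body corresponds to exactly one call to $\outarc$. Either the call returns a non-empty $b$, in which case one vertex is removed from $B$ and added to $Q$, or it returns $\emptyset$, in which case the inner loop terminates for the current $a$. The first kind of call can occur at most $|V_S| = n + 2$ times in total, since each one permanently removes an element from $B$. The second kind of call can occur at most once per iteration of the outer while loop, i.e., at most once per dequeued vertex, which is also bounded by $n + 2$. So there are $O(n)$ calls to $\outarc$ in total, each costing $O(\log n \cdot \ranktime)$ by Lemma~\ref{lem:outarcs} (since $|B| \le n$), yielding the claimed $O(n \log n \cdot \ranktime)$ bound.

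I do not expect any substantive obstacle here; the lemma is essentially a careful accounting of a standard BFS where the adjacency query is implemented by the binary-search subroutine $\outarc$. The only subtle point is ensuring that when $\outarc(S, a, B)$ returns $\emptyset$, this really does certify that $a$ has no out-neighbor in $B$ (as opposed to missing one due to the partitioning inside the binary search), but this is exactly what Lemma~\ref{lem:outarcs} guarantees, so the argument is complete.
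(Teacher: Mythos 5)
Your proposal is correct and follows essentially the same approach as the paper: both treat $\getdistances$ as a standard BFS whose adjacency queries are implemented by $\outarc$, and both bound the total number of $\outarc$ calls by $O(n)$ (the paper phrases this as each vertex entering $Q$ at most once and leaving $B$ at most once; you make this explicit by separating successful calls, charged to removals from $B$, from failed calls, charged to dequeues). Your write-up is more detailed, but no new idea is introduced.
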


\begin{proof}
The procedure $\getdistances$ simply computes distances from $s$
using breadth first search. Note that each vertex is added to 
$Q$
at most once and removed from 
$B$
at most once. Consequently, the cost of the procedure is simply $O(n)$
plus the cost of $O(n)$ calls to $\outarc$. So the correctness and
runtime of the procedure follow from the analysis of $\outarc$ given
by Lemma~\ref{lem:outarcs}.
\end{proof}

\subsection{Blocking Flow with Rank Oracle}

\label{sec:rkblockflow}

In this section we provide our main DFS-like subroutine for improving
an independent set $S\in\I_{1}\cap\I_{2}$. This algorithm, $\blockflow$
(Algorithm~\ref{alg:blockflow}), efficiently performs an analog
of blocking flow for the exchange graph. This is essentially a fast
rank-oracle implementation of each phase in Cunningham's algorithm
\cite{Cunningham-SICOMP86}.

Given a set $S\in\I_{1}\cap\I_{2}$ it first computes the distance
from $s$ to every vertex in the exchange graph using $\getdistances$.
Using these distances, the algorithm subdivides $V$ into sets $L_{i}$
such that $L_{i}$ has all vertices at distance $i$ from $s$ in
the exchange graph of $S$. The algorithm then uses $\outarc$ to
look for a $s,a_{1},a_{2},...,a_{d_{t}-1},t$ path in $G(S)$ where
each $a_{i}\in L_{i}$. If found, the algorithm augments along such
a path removing those vertices and updating $S$ and looking for a
new path. However, whenever the algorithm concludes that some vertex
in a $L_{i}$ is not on such a path (i.e., a dead end), it removes
it from the graph. By carefully searching for these paths we show
that ultimately this algorithm is asymptotically no more expensive
then $\getdistances$ itself and will always find a larger set in
$\I_{1}\cap\I_{2}$ (if it exists) where the distance from $s$ to
$t$ in the exchange graph has increased.

\begin{algorithm2e}[H]

\label{alg:blockflow}

\caption{$\blockflow$($S\in\I_{1}\cap\I_{2}$)}

\SetAlgoLined

\textbf{Input}: independent set $S\in\I_{1}\cap\I_{2}$

\textbf{Output}: $S'\in\I_{1}\cap\I_{2}$ such that $|S'|>|S|$ and
$d_{G(S')}(s,t)\geq d_{G(S)}(s,t)+1$, or $S$ if no such $S'$ exists.

$d\leftarrow\getdistances$($S\in\I_{1}\cap\I_{2}$)

\lIf{ $d_{t}=\infty$ }{ \textbf{return} $S$ }

For all $i\in[d_{t}-1]$ let $L_{i}\leftarrow\{a\in V\,|\,d_{a}=i\}$

Let $L_{d_{t}}\leftarrow \{t\}$, $\ell \leftarrow 0$, $a_0 \leftarrow s$, and $L_{0} \leftarrow \{s\}$

\While{ $\ell\geq0$ }{

\If{ $\ell<d_{t}$ }{

\lIf{ $L_{\ell}=\emptyset$ }{ \textbf{return} $S$ }

$a_{\ell+1} \leftarrow \outarc(S,a_{\ell},L_{\ell+1})$ \tcp*{This takes $O(\ranktime \log n)$ time.}

\lIf{ $a_{\ell+1}=\emptyset$ }{ $L_{\ell}\leftarrow L_{\ell}-a_{\ell}$
and $\ell\leftarrow\ell-1$ }\tcp{This may set $\ell$ to $-1$ and is the only way the while loop exits.}

\lElse{ $\ell\leftarrow\ell+1$ }

}

\If{ $\ell=d_{t}$ }{

Augment along the path $s,a_{1},...,a_{d_{t}-1},t$ to obtain $S'\in\I_{1}\cap\I_{2}$
with $|S'|>|S|$

For all $i\in[d_{t}-1]$ set $L_{i}\leftarrow L_{i}-a_{i}$

$\ell\leftarrow 0$ and $S\leftarrow S'$

}

}
 
\textbf{return} $S$

\end{algorithm2e}
\begin{lem}[Blocking Flow]
 \label{lem:block_flow} Given independent set $S\in\I_{1}\cap\I_{2}$
the algorithm $\blockflow$ (Algorithm~\ref{alg:blockflow}) outputs
in $O(n\log n\cdot\ranktime)$ time $S'\in\I_{1}\cap\I_{2}$ such
that $|S'|>|S|$ and $d_{G(S')}(s,t)\geq d_{G(S)}(s,t)+1$, or $S$
if no such $S'$ exists (i.e., $S$ is the maximum cardinality set
in $\I_{1}\cap\I_{2}$).
\end{lem}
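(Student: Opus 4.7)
The plan is to establish correctness and runtime separately, following Cunningham's blocking-flow framework adapted to the rank-oracle setting. For correctness, I would first handle the base case: if the initial $d_t = \infty$, then the exchange graph contains no $s$-$t$ path, and by the standard augmenting-path characterization of matroid intersection $S$ is already a maximum common independent set, so returning $S$ is correct. Otherwise, I would prove by induction on the sequence of augmentations that each path $s, a_1, \ldots, a_{d_t-1}, t$ with $a_i \in L_i$ discovered by the DFS is an actual shortest $s$-$t$ path in the current exchange graph: its length is $d_t$ by construction, and by repeated application of the monotonicity Lemma~\ref{lem:cunning-monot}, the distance from $s$ to $t$ in the current graph is at least $d_t$, so the path is indeed shortest. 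Lemma~\ref{lem:basic_sap} then guarantees that each augmentation preserves independence and strictly enlarges $S$.

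To conclude that $d_{G(S')}(s,t) \geq d_t + 1$ when the algorithm exits (i.e., backtracks to $\ell = -1$), I would argue that no length-$d_t$ augmenting path can remain. By monotonicity any hypothetical such path $s, v_1, \ldots, v_{d_t - 1}, t$ in the exit-time exchange graph must have $v_i$ at distance exactly $i$ from $s$, and these distances are monotone nondecreasing across augmentations, which restricts each $v_i$'s original distance to lie in $\{0, 1, \ldots, i\}$. A careful exchange argument (mirroring Cunningham's original proof) then lets us assume without loss of generality that $v_i \in L_i$ originally, so the DFS should have found this path; the contradiction is that every vertex that could serve as $v_i$ has been removed from $L_i$, either because it was used in some earlier augmenting path or because it was declared a dead end (and the dead-end status persists within the phase).

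For the runtime, $\getdistances$ costs $O(n \log n \cdot \ranktime)$ by Lemma~\ref{lem:finddistances}, and each $\outarc$ call in the main loop costs $O(\log n \cdot \ranktime)$ by Lemma~\ref{lem:outarcs}. To bound the number of $\outarc$ calls by $O(n)$, I would use a stack-depth potential argument tracking $\ell$: a successful call pushes one vertex (increments $\ell$), a failed call pops one vertex while permanently removing it from some $L_\ell$ (so there are at most $n$ failed calls), and an augmentation pops $d_t$ levels while permanently removing $d_t - 1$ internal vertices from the $L_i$'s (so there are at most $n/(d_t - 1)$ augmentations, using that $d_t \geq 2$ since the first and last edges on any $s$-$t$ path are distinct). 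The total successful-call count equals the total pops plus $O(1)$, hence is $O(n)$, so the aggregate $\outarc$ cost is $O(n \log n \cdot \ranktime)$, matching the runtime bound.

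The step I expect to be the main obstacle is the termination argument in the second paragraph: rigorously justifying the claim that a dead-end vertex remains dead throughout the phase, and that the DFS over the $L_i$-layered structure really captures every length-$d_t$ augmenting path. The exchange graph genuinely changes after each augmentation, so in principle a vertex $v \in L_\ell$ could gain new outgoing arcs to $L_{\ell+1}$, and one needs both parts of Lemma~\ref{lem:cunning-monot} (not just part (i)) together with the symmetric analysis using $t$-distances to rule this out. This is exactly what the strengthened second part of the monotonicity lemma was designed to handle, so I would lean heavily on the full statement of that lemma when writing this step.
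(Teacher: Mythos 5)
Your proposal is correct and follows essentially the same route as the paper: Lemma~\ref{lem:basic_sap} for validity of each augmentation, a push/pop amortization on $\ell$ (each failed $\outarc$ or augmentation permanently removes vertices from the $L_i$'s) for the $O(n)$ bound on $\outarc$ calls, and the full monotonicity Lemma~\ref{lem:cunning-monot} (both parts, applied from both $s$ and $t$) for the $d(s,t)$-increase --- you correctly single out these ingredients and correctly flag dead-end persistence as the crux. One small refinement for that step: you need not show that \emph{every} $v_i$ on a hypothetical residual length-$d_t$ path has been removed from $L_i$; the cleanest realization is a reverse induction on levels proving that any vertex removed from $L_\ell$ has, from removal onward, $t$-distance strictly greater than $d_t - \ell$ (for dead ends, because all their original-$L_{\ell+1}$ out-neighbors were removed earlier and are hence already too far from $t$; for path vertices, because changing $S$-membership forces their level to jump by parity), after which $v_1$'s removal alone yields the contradiction. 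Also, the step you call a ``careful exchange argument'' for placing $v_i \in L_i$ originally is not a matroid exchange argument but simply $s$- and $t$-distance monotonicity combined with $d(s,v_i) + d(v_i,t) = d_t$.
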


\begin{proof}
We first analyze the running time. By Lemma~\ref{lem:finddistances},
in $O(n\cdot\ranktime\log n)$ the algorithm $\getdistances$ will
ensure that $d_{a}=d_{G(S)}(s,a)$ for all $a\in V$. Once these distances
are computed the algorithm spends $O(n)$ time to subdivide $V$ into
the $L_{i}$.

Further, by Lemma~\ref{lem:outarcs} each invocation of $\outarc$
takes $O(\ranktime\log n)$ time and finds an edge from $a_{\ell}$
to $L_{\ell+1}$ if one exists. Now, in every iteration of the while
loop $\ell$ is either increased or decreased. Every time it is decreased,
either it is by a value of $1$, in which case $a_{\ell}$ is removed
from $L_{\ell}$ and there is no path of the form $s,a_{1},...,a_{d_{t}-1},t$
where each $a_{i}\in L_{i}$ and $a_{\ell}$ is included, or a path
is found and all the vertices on the path are removed. Consequently,
there are only $O(n)$ iterations of the while loop and the total
cost of the algorithm is as desired.

Now we show $S'\in\I_{1}\cap\I_{2}$. The reasoning in the preceding
paragraph shows that this algorithm simply finds a $s$ to $t$ path
through the $L_{i}$ if there is one, removing those vertices, and
repeating until there are no more paths. Such augmenting paths are
of length $d_{t}$ by design and hence they are in fact shortest augmenting
paths. This implies $S'\in\I_{1}\cap\I_{2}$ as augmenting along a
shortest augmenting path preserves independence (Lemma \ref{lem:basic_sap}).
Further there must be at least one augmentation since $\getdistances$
computed $d_{t}$. So $|S'|>|S|$.

It remains to show $d_{G(S')}(s,t)\geq d_{G(S)}(s,t)+1$. 
This is similar to the argument in Cunningham~\cite{Cunningham-SICOMP86}.
We argue
that there is no more augmenting path of length $d_{t}$ and consequently
the distance from $s$ to $t$ must increase. To this end, we must
prove that any elements removed from $L_{i}$ can no longer be on
any augmenting path of length $d_{t}$.

By Lemma~\ref{lem:cunning-monot}, distance of every vertex from
$s$ and to $t$ increases monotonically after each such augmentation.
There are two ways a vertex is
removed. In the former case $a_{\ell}$ has no outgoing arc into $L_{\ell+1}$
and hence is a dead end. In the latter case they are on an augmenting
path of length $d_{t}$.

For the elements on such a path, since they enter or exit $S$ after
an augmentation, their distances from $s$ and to $t$ must strictly
increase. Consequently they cannot be on such a path anymore. Further,
since monotonicity implies that to stay on a $s$ to $t$ path of
distance $d_{G(S)}(s,t)$ for the original $G$ their distance from
$s$ cannot increase this implies that every vertex removed is not
on a $s$ to $t$ path of length $d_{t}$.
\end{proof}

\subsection{Matroid Intersection with a Rank Oracle}

\label{sec:ranksolve}

In this section we show how to use $\blockflow$ (Algorithm~\ref{alg:blockflow})
from the previous section to obtain our desired rank oracle based
algorithms for solving matroid intersection. All our algorithms simply
apply $\blockflow$ repeatedly to find the desired set. First, we
provide our result about exactly solving matroid intersection with
a rank oracle and then we provide our approximate result.

Both of our algorithms follow the structure of Cunningham's \cite{Cunningham-SICOMP86}.
The key difference is our fast subroutine $\blockflow$ for handling
each phase.
\begin{thm}[Exact Rank Oracle Algorithm]
\label{thm:exact_rank} Given matroids $\M_{1}=(V,\I_{1})$ and $\M_{2}=(V,\I_{2})$
there is an algorithm which finds the largest common independent set
of two matroids in $O(n\sqrt{r}\log n\cdot\ranktime)$  time  where $r$
is the size of the largest common independent set.
\end{thm}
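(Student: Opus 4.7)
The plan is to iteratively apply $\blockflow$ (Lemma~\ref{lem:block_flow}) to the current common independent set $S$ until $\blockflow$ returns $S$ unchanged, at which point $S$ is certified to be of maximum cardinality. Starting from $S=\emptyset$, each invocation of $\blockflow$ runs in $O(n\log n\cdot\ranktime)$ time and either returns an $S'$ with $|S'|>|S|$ and $d_{G(S')}(s,t)\geq d_{G(S)}(s,t)+1$, or certifies optimality. So the correctness is immediate, and the entire analysis reduces to bounding the number of $\blockflow$ calls by $O(\sqrt{r})$.

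The bound will be obtained by the standard Hopcroft--Karp style split of the execution into an ``early'' and a ``late'' phase, driven by Lemma~\ref{lem:augm-path-length}. In the early phase, as long as $d_{G(S)}(s,t)\leq 2\sqrt{r}$, each call to $\blockflow$ strictly increases $d_{G(S)}(s,t)$ by at least one, so there can be at most $2\sqrt{r}$ such calls. For the late phase, once $d_{G(S)}(s,t)>2\sqrt{r}$, Lemma~\ref{lem:augm-path-length} implies $2\sqrt{r}<1+2|S|/(r-|S|)$, which after rearranging gives $r-|S|<\sqrt{r}$ (up to constants). Since every further invocation of $\blockflow$ strictly increases $|S|$ until optimality is reached, at most $O(\sqrt{r})$ additional calls occur. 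Summing the two phases yields $O(\sqrt{r})$ total calls to $\blockflow$, and multiplying by the per-call cost of $O(n\log n\cdot\ranktime)$ gives the claimed $O(n\sqrt{r}\log n\cdot\ranktime)$ running time.

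The main conceptual point to verify carefully is that $\blockflow$ truly provides the monotone increase in $d_{G(S)}(s,t)$ claimed in Lemma~\ref{lem:block_flow}; this is the analog of the blocking-flow guarantee underlying Cunningham's phase analysis, and was already handled in the proof of that lemma via Lemma~\ref{lem:cunning-monot}. Given this, the only other small subtlety is the algebraic manipulation in the late phase to show that $d_{G(S)}(s,t)>2\sqrt{r}$ forces $r-|S|=O(\sqrt{r})$, which is a routine rearrangement of Lemma~\ref{lem:augm-path-length}. There is no deeper obstacle: the new ingredient supplied by Section~\ref{sec:rkblockflow} is exactly the fast $O(n\log n\cdot\ranktime)$ implementation of a phase, and Cunningham's $O(\sqrt{r})$ phase bound then gives the theorem directly.
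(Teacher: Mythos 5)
Your proof is correct and takes essentially the same approach as the paper: iterate $\blockflow$ until fixed point, use Lemma~\ref{lem:block_flow} for correctness and per-call cost, and bound the number of calls by $O(\sqrt{r})$ via a Hopcroft--Karp split driven by the fact that the shortest augmenting path length increases each call. The only cosmetic difference is that you invoke Lemma~\ref{lem:augm-path-length} directly in the late phase while the paper invokes its consequence Corollary~\ref{cor:approx-algo-certif}, but the underlying argument is identical.
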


\begin{proof}
Our algorithm simply starts with $S_{0}=\emptyset$ and iterates $S_{i+1}:=\blockflow(S_{i})$
until $S_{k+1}=S_{k}$ for some $k$ at which point the algorithm
outputs $k$. By Lemma~\ref{lem:block_flow} we have that $S_{i}\in\I_{1}\cap\I_{2}$
for all $\I$ and that $S_{k}$ is the desired largest common independent
set. Further, Lemma~\ref{lem:block_flow} implies that the running
time of this algorithm is $O(nk\ranktime\log n)$. Consequently, it
only remains to bound $k$.

Now, Lemma~\ref{lem:block_flow} also implies that for all $i<k$
we have $d_{G(S_{i+1})}(s,t)\geq d_{G(S_{i})}(s,t)+1$. So $d_{G(S_{i})}\geq i$
for all $i\in[k]$. Further, Corollary~\ref{cor:approx-algo-certif}
implies that $|S_{i}|\geq(1-O(1/i))r$. Consequently, $i=\Omega(\sqrt{r})$
implies that $|S_{i}|\geq r-O(\sqrt{r})$. Since every iteration of
blocking flow increases the size of $|S_{i}|$ this implies that increasing
$i$ by another $O(\sqrt{r})$ is enough to get a set in $\I_{1}\cap\I_{2}$
of size $r$. Consequently, $k=O(\sqrt{r})+O(\sqrt{r})=O(\sqrt{r})$.
\end{proof}
\begin{thm}[Approximate Rank Oracle Algorithm]
 \label{thm:approx-rank}Given matroids $\M_{1}=(V,\I_{1})$ and
$\M_{2}=(V,\I_{2})$ there is an algorithm which finds a $(1-\epsilon)$
approximation to the largest common independent set of the two matroids
in time  $O(n\epsilon^{-1}\log n\cdot\ranktime)$.
\end{thm}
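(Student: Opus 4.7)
The plan is to mimic the proof of Theorem~\ref{thm:exact_rank}, but halt the iterative procedure early using Corollary~\ref{cor:approx-algo-certif} as the approximation certificate. Concretely, run the same outer loop: start with $S_0 = \emptyset$ and repeatedly set $S_{i+1} := \blockflow(S_i)$, but now terminate as soon as either $\blockflow$ returns $S_i$ unchanged (in which case $S_i$ is already an exact optimum) or after $k = \lceil c/\epsilon \rceil$ iterations for an appropriate absolute constant $c$ (chosen to cancel the $O(\cdot)$ inside Corollary~\ref{cor:approx-algo-certif}).

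The first step is the correctness analysis. By Lemma~\ref{lem:block_flow}, each nontrivial call to $\blockflow$ strictly increases $d_{G(S_i)}(s,t)$ by at least one, and $S_i \in \I_1 \cap \I_2$ is maintained throughout. Hence after $k$ iterations either the procedure has already terminated with the exact optimum, or $d_{G(S_k)}(s,t) \geq k \geq c/\epsilon$. In the latter case, Corollary~\ref{cor:approx-algo-certif} immediately gives $|S_k| \geq (1 - O(\epsilon)) r$, which after choosing $c$ appropriately yields the desired $(1-\epsilon)$-approximation guarantee.

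The second step is the runtime accounting, which is immediate. Lemma~\ref{lem:block_flow} bounds the cost of each call to $\blockflow$ by $O(n \log n \cdot \ranktime)$. Since there are at most $O(1/\epsilon)$ such calls, the total running time is $O(n\epsilon^{-1} \log n \cdot \ranktime)$, as required.

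There is no real obstacle here beyond collecting the ingredients already proved: the distance-monotonicity given by $\blockflow$ (via Lemma~\ref{lem:cunning-monot}) ensures that the shortest augmenting path length is a monotone progress measure, and Corollary~\ref{cor:approx-algo-certif} directly converts a lower bound on that length into an approximation guarantee. The only mild subtlety is absorbing the $O(\cdot)$ in the corollary's conclusion by choosing the termination threshold $k$ to be a sufficiently large constant multiple of $1/\epsilon$; this changes the running time only by a constant factor.
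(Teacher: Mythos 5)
Your proposal matches the paper's proof: both iterate $\blockflow$ starting from $S_0=\emptyset$ for $\Theta(1/\epsilon)$ rounds, invoke Lemma~\ref{lem:block_flow} for the per-round cost and the monotone increase of $d_{G(S_i)}(s,t)$, and apply Corollary~\ref{cor:approx-algo-certif} to convert the distance lower bound into the $(1-\epsilon)$-approximation guarantee. The reasoning and runtime accounting are the same.
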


\begin{proof}
Similar to proof of Theorem~\ref{thm:exact_rank} our algorithm simply
starts with $S_{0}=\emptyset$ and iterates $S_{i+1}:=\blockflow(S_{i})$.
However, here instead of repeating until $S_{k+1}=S_{k}$ we simply
output $S_{k}$ for $k=\Theta(\epsilon^{-1})$. Again, by Lemma~\ref{lem:block_flow}
we immediately have that the runtime is as desired and we have that
$S_{i}\in\I_{1}\cap\I_{2}$  and $d_{G(S_{i+1})}(s,t)\geq d_{G(S_{i})}(s,t)+1$
for all $i$. Since this implies $d_{G(S_{k})}=\Omega(\epsilon^{-1})$,
we have by Corollary$\ $\ref{cor:approx-algo-certif} that $|S_{k}|\geq(1-\epsilon)r$
where $r$ is the size of the largest common independent set of two
matroids and we have the desired result.
\end{proof}
In fact, both of our exact and approximate results hold even if we
have access to a rank oracle for one matroid and access to only an
independence oracle for the second. This is because $\findExchange$
requires only an independence oracle. We omit the detials.

\selectlanguage{american}%
\global\long\def\M{\mathcal{M}}%
 
\global\long\def\P{\mathcal{P_{\M}}}%
\global\long\def\x{\textbf{\ensuremath{\mathbf{\mathbf{x}}}}}%
\global\long\def\z{\textbf{\ensuremath{\mathbf{z}}}}%
\global\long\def\I{\mathcal{\mathcal{I}}}%
\global\long\def\rk{\mathcal{\textsf{rank}}}%
\global\long\def\circ{\mathcal{\textsf{circuit}}}%
\global\long\def\onepath{\texttt{OnePath}}%
\global\long\def\augpath{\texttt{AugmentingPaths}}%
\global\long\def\getdistancesIndep{\texttt{GetDistancesIndep}}%

\selectlanguage{english}%

\section{Exact Algorithm using Independence Oracle\label{sec:Indep-Exact-Algorithm}}

In this section we show how to solve matroid intersection exactly
using independence oracles. We assume throughout this section that
we are given matroids $\M_{1}=(V,\I_{1})$ and $\M_{2}=(V,\I_{2})$
with $n\defeq|V|$ and they can only be accessed via an independence
oracle. The goal is to compute the largest common independent set
and the following is our main result of this section. 
\begin{thm}
\label{thm:exact-indep}We can find the largest common independent
set of two matroids using $O(nr\log r\cdot\indeptime)$ time.
\end{thm}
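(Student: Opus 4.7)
The plan is to follow Cunningham's shortest-augmenting-path framework: start with $S = \emptyset$ and repeatedly augment $S \in \I_1 \cap \I_2$ along a shortest $s$-$t$ path in the exchange graph $G(S)$ (Lemma \ref{lem:basic_sap}). There are at most $r$ augmentations, so the runtime budget is $O(n \log r \cdot \indeptime)$ amortized per iteration, and in each iteration the dominant cost is a BFS of $G(S)$.

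The core primitive is $\findExchange$ (Lemma \ref{lem:exchange}), which in $O(\log r \cdot \indeptime)$ time decides whether a vertex $b \in \overline{S}$ has an exchange partner (in either matroid) inside any subset $A \subseteq S$. Since $G(S)$ is bipartite between $S$ and $\overline{S}$, every arc can be probed from its $\overline{S}$-endpoint. My BFS would populate the layers accordingly: to form an even layer $L_{2k} \subseteq S$, iterate over $b \in L_{2k-1}$ and call $\findExchange(\M_2, S, b, S \setminus \text{placed})$ repeatedly until it returns $\emptyset$, enumerating all new $\M_2$-neighbors of $b$; to form an odd layer $L_{2k+1} \subseteq \overline{S}$, iterate over unplaced $b \in \overline{S}$ and issue one call $\findExchange(\M_1, S, b, L_{2k})$ per vertex. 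Layer $L_1$ and arcs to $t$ require just one independence query per $\overline{S}$-vertex, which fits within the budget.

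A from-scratch BFS done this way costs $O(n \log r \cdot \indeptime)$ per odd layer in the worst case, giving $O(nd \log r \cdot \indeptime)$ per iteration and $O(nr^2 \log r \cdot \indeptime)$ overall, which is too slow. To shave the extra $r$ factor I would amortize across iterations using Cunningham's monotonicity lemma (Lemma \ref{lem:cunning-monot}): distances from $s$ are non-decreasing after each augmentation, so the total number of distance-label changes across all $r$ iterations is $O(nr)$. By carrying distance labels between iterations, each BFS can (i) pay an $O(n \log r \cdot \indeptime)$ overhead to verify that each vertex is still at its previous distance via a single $\findExchange$ call against the new level below it, and (ii) pay $O(\log r \cdot \indeptime)$ per vertex whose label actually changes. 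Summed over all $r$ iterations this totals $r \cdot O(n \log r \cdot \indeptime) + O(nr) \cdot O(\log r \cdot \indeptime) = O(nr \log r \cdot \indeptime)$.

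The main obstacle will be the verification step for $S$-vertices, since $\findExchange$ only operates from the $\overline{S}$-side and there is no analogous $O(\log r)$ probe in the opposite direction. I expect to handle this by organizing the BFS so that $S$-vertices at level $2k$ are placed as a byproduct of exhaustively enumerating the out-neighbors of $\overline{S}$-vertices at level $2k-1$, thereby keeping every call to $\findExchange$ in the direction where binary search is cheap. Establishing that every $\findExchange$ call can in fact be charged either to the $O(n \log r)$ per-iteration overhead or to a genuine distance-label increase — i.e.\ that previously-verified vertices are never redundantly re-queried across iterations — is the most delicate part of the argument, and the place where Cunningham's monotonicity (Lemma \ref{lem:cunning-monot}) is indispensable.
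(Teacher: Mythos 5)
Your proposal matches the paper's approach essentially line for line: the paper's $\getdistancesIndep$ is exactly the amortized BFS you describe, carrying distance lower bounds $d'$ between augmentations and costing $O(\sum_a(1+d_a-d'_a)\log r\cdot\indeptime)$, and the delicate charging you flag is resolved by keeping, for each odd level $2k+1$, a candidate list of $\overline{S}$-vertices whose current lower bound is exactly $2k+1$, so that each $\overline{S}$-vertex incurs one $\findExchange$ probe per two-step promotion plus one for final confirmation, while $S$-vertices are placed for free as byproducts of exhaustive enumeration from the odd layer below, exactly as you suggest. The one step you leave implicit --- extracting an actual shortest $s$--$t$ path once the layers are fixed --- is the paper's $\onepath$, a backward walk from $t$ that tests each vertex in the preceding layer with a single independence query, costing $O(n\cdot\indeptime)$ per augmentation and hence dominated by the BFS cost.
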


To recall, the main challenge with independence oracles is that we do not have $O(\indeptime\log n)$ oracles to implement
the $\findFree$ subroutine. Nevertheless, we can obtain a speed-up using binary search ideas.

The run time of this algorithm can be broken into two components.
First, we show in Section$\ $\ref{subsec:indep-finding-layers} that
given any common independent set $S$, we can find the distance of
every element from $s$ in the exchange graph $G(S)$ in $O(n\log r\cdot\indeptime)$
time plus an ``amortized'' cost. This amortized cost will turn out
to be $O(\log r\cdot\indeptime)$ per element whenever its distance
increases from the sources. In our final $\augpath$ algorithm in
Section$\ $\ref{subsec:indep-augmenting-paths} we will exploit the
fact that this increase in distance can happen at most $O(r)$ times
for an element, so the overall amortized cost over all elements is
$O(nr\log r\cdot\indeptime)$. 

Second, given the distances in the exchange graph $G(S)$, in Section
\ref{subsec:Finding-One-Augmenting} we spend $O(n\log r\cdot\indeptime)$
time to find one augmenting path. Since there can be at most $r$
augmentations during the entire execution of the algorithm, this second
component will also be at most $O(nr\log r\cdot\indeptime)$.

\subsection{Finding the Distances\label{subsec:indep-finding-layers}}

We show how to compute all distances from $s$ in the exchange graph.
The following algorithm $\getdistancesIndep$ (Algorithm~\ref{alg:getdistances-indep})
achieves this goal simply by proceeding case by case. Starting with
the source $s$, it finds the distances by running a variant of BFS.
Going from an odd layer $\ell=2k+1$ to layer $\ell+1$ is easy as
we can just use the $\findExchange$ algorithm. To go from even layer
$\ell=2k$ to $\ell+1$, the idea is that we know up till $D_{\ell}$
where $D_{\ell}=\{a\in V|d_{a}=\ell\}$, and that we maintain some
candidates $L_{\ell+1}$ for $D_{\ell+1}$ (here $L_{\ell+1}$ are elements
we know are at distance at least $\ell+1$ from the source). Consider
any such candidate element $v\in L_{\ell+1}$. By one call to $\findExchange$
we can determine if $v$ has a neighbor in $D_{\ell}$. If yes, then
we know $v\in D_{\ell+1}$, otherwise it is at distance at least $\ell+3$,
so we move it to $L_{\ell+3}$.

\begin{algorithm2e}[H]

\label{alg:getdistances-indep}

\caption{$\getdistancesIndep$($S\in\I_{1}\cap\I_{2}$,$d'\in\R^{V_{S}}$)}

\SetAlgoLined

\textbf{Input}: independent set $S\in\I_{1}\cap\I_{2}$, lower bounds
$d'\in\R^{V_{S}}$ on distances from $s$ in $G(S)$.
One basic lower bound is $d'(s) = d'(t) = 0$, $d'(v) = 1$ for all $v\notin S$, and $d'(v) = 2$ for all $v\in S$.

\textbf{Output}: $d\in\R^{V_{S}}$ such that for $a\in V_{S}$ $d_{a}=d(s,a)$
is the distance between $s$ and $a$ in $G(S)$.

For all $i\in[d'_{t}-1]$ let $L_{i}\leftarrow \{a\in V\,|\,d'_{a}=i\}$

Let $d_{a}\leftarrow \infty$ for all $a\in V_{S}$ 

Let $d_{s}\leftarrow 0$, $L_{0} \leftarrow \{s\}$, and $\ell\leftarrow 0$

\While{ $\ell\geq0$ }{

\If{ $\ell$ is odd }{

\If{ $L_{\ell+1}=\emptyset$ }{ 

\lIf{$\{a\in V$ with $d_{a}\leq\ell\}=V$}{ \textbf{$d_{t}=\ell+1$
}}

\textbf{return} $d$ 

}

Let $Q \leftarrow L_{\ell+1}$

\While{ $L_{\ell}\text{ contains some }b$ }{

\While{ $a \leftarrow \findExchange(\M_{2},S,b,Q)$ satisfies $a\neq\emptyset$ }{

Set $d_{a} \leftarrow \ell+1$ and $Q \leftarrow Q-a$

}

$L_{\ell} \leftarrow L_{\ell}-b$

}

$L_{\ell+1} \leftarrow L_{\ell+1}-Q$ and $L_{\ell+3}\leftarrow L_{\ell+3}+Q$

}

\If{ $\ell$ is even }{

Let $Q \leftarrow L_{\ell+1}$

\While{ $Q\text{ contains some }b$ }{

\lIf{$a\leftarrow \findExchange(\M_{1}, S,b,L_{\ell})$ satisfies $a\neq\emptyset$
}{ Set $d_{b} \leftarrow \ell+1$ }

\lElse{ $L_{\ell+1} \leftarrow L_{\ell+1}-b$ and $L_{\ell+3} \leftarrow L_{\ell+3}+b$
}

$Q \leftarrow Q-b$

}

}

$\ell \leftarrow \ell+1$

}

\end{algorithm2e}
\begin{lem}[Finding Distances]
 \label{lem:Get-Distances-Indep} Given independent set $S\in\I_{1}\cap\I_{2}$
and distance lower bounds $d'\in\R^{V_{S}}$ for $G(S)$ , i.e., for
$a\in V_{S}$ we know $d'(s,a)\leq d_{G(S)}(s,a)$. The algorithm
$\getdistancesIndep(S,d')$ (Algorithm~\ref{alg:getdistances-indep})
outputs $d\in\R^{V_{S}}$ in $O(\sum_{a\in V}(1+d_{a}-d'_{a})\log r\cdot\indeptime)$
time such that for $a\in V_{S}$ we have $d_{a}=d_{G(S)}(s,a)$.
\end{lem}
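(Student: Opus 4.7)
The plan is to verify both the correctness of the computed distances $d_a = d_{G(S)}(s, a)$ and the amortized running time bound.

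For correctness, I would proceed by induction on the outer-loop index $\ell$, maintaining the invariant that at the start of iteration $\ell$: every $a$ with $d_a < \infty$ satisfies $d_a = d_{G(S)}(s, a)$; every $a$ still with $d_a = \infty$ lies in some $L_{\ell'}$ with $\ell' \geq \ell$; and for every such $a \in L_{\ell'}$ we have $d_{G(S)}(s, a) \geq \ell'$. The hypothesis $d'_a \leq d_{G(S)}(s, a)$ together with $d_s = 0$ establishes the invariant at $\ell = 0$. For the inductive step, a vertex lies at distance exactly $\ell + 1$ in $G(S)$ iff it has an in-neighbor at distance $\ell$, and by the inductive hypothesis $L_\ell$ at the start of iteration $\ell$ equals $D_\ell = \{v : d_{G(S)}(s, v) = \ell\}$. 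In the odd branch, $\findExchange(\M_2, S, b, Q)$ finds type-4 arcs $(b, a)$ by Lemma~\ref{lem:exchange}, so repeatedly calling it until it returns $\emptyset$ enumerates all out-neighbors of $b$ inside $Q$, and iterating over $b \in L_\ell$ discovers every vertex of $L_{\ell+1}$ at true distance $\ell + 1$. In the even branch, $\findExchange(\M_1, S, b, L_\ell)$ tests, for each candidate $b \in L_{\ell+1}$, whether $b$ has an in-neighbor in $L_\ell$ via a type-3 arc (using the convention that calls with $L_0 = \{s\}$ reduce to the free-vertex check $S + b \in \I_1$). Vertices having such an in-neighbor receive $d_a = \ell + 1$, while the rest move from $L_{\ell+1}$ to $L_{\ell+3}$, since $b \in \overline{S}$ has odd true distance and every distance $\leq \ell + 1$ has been ruled out. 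This restores the invariant for iteration $\ell + 1$, and the invariant at termination yields correctness.

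For the running time, since each $\findExchange$ call costs $O(\log r \cdot \indeptime)$ by Lemma~\ref{lem:exchange}, it suffices to bound the total number of such calls by $O(\sum_a (1 + d_a - d'_a))$. I split the calls into those made in odd iterations and those made in even iterations. In an odd iteration the calls comprise one terminating $\emptyset$ call per $b \in L_\ell$ plus one successful call per newly assigned $a \in S$; summing over all odd $\ell$, this totals at most $|\{a \in \overline{S} : d_a < \infty\}| + |\{a \in S : d_a < \infty\}| = O(n)$. In an even iteration there is exactly one call per $b \in Q = L_{\ell+1}$ at the start of the iteration; since each $a \in \overline{S}$ passes through $L_j$ for $j \in \{d'_a, d'_a + 2, \ldots, d_a\}$, it is processed in exactly $(d_a - d'_a)/2 + 1 = O(1 + d_a - d'_a)$ even iterations. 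Summing gives $O\big(n + \sum_{a \in \overline{S}}(1 + d_a - d'_a)\big) = O\big(\sum_a (1 + d_a - d'_a)\big)$, which yields the claimed running time.

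The main subtlety I expect is verifying that the invariant is preserved through the interleaved modifications of $L_\ell$, $L_{\ell+1}$, $L_{\ell+3}$, and $Q$, and justifying that a vertex $a \in Q$ removed during an odd iteration (upon being paired with some $b$) or a vertex $b \in L_{\ell+1}$ moved to $L_{\ell+3}$ during an even iteration has its final layer correctly decided and need not be revisited. This hinges on parity (which forces the only legitimate subsequent candidate layer to be $\ell + 3$ or later), the monotonicity of the outer-loop index $\ell$, and the completeness of $L_\ell$ at the moment the odd iteration for $\ell$ begins, so that any vertex with an in-neighbor at distance $\ell$ is detected within that iteration.
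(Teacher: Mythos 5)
Your proposal is correct and follows essentially the same approach as the paper: the same invariant (correctness of assigned distances, together with the property that unassigned vertices sit in layers lower-bounding their true distances), the same parity-based inductive step distinguishing the odd and even branches, and the same overall conclusion. Your runtime accounting is, if anything, slightly more careful than the paper's: by splitting the $\findExchange$ calls into odd-iteration calls (bounded by $O(n)$, charging one failure to each $b\in\overline{S}$ at its final layer and one success to each newly assigned $a\in S$) versus even-iteration calls (charging one to each $b\in\overline{S}$ per odd layer it visits, totaling $O(\sum_{a}(1+d_a-d'_a))$), you correctly account for the unavoidable $O(n)$ baseline cost, whereas the paper's phrasing "at most one call per layer change" would on its face give only $\sum_a\lceil(d_a-d'_a)/2\rceil$ and technically needs the "+1" per element to cover the terminating-failure calls in odd iterations. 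Minor nitpick: in your invariant, "$\ell'\geq\ell$" should really be "$\ell'>\ell$" since at the start of iteration $\ell$ every element of $L_\ell$ has already been assigned; this does not affect the argument.
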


\begin{proof}
To prove correctness of $\getdistancesIndep$, we prove the following
invariant at the beginning of any iteration of the outer while loop.

\begin{enumerate}

\item $D_{i}\subseteq\bigcup_{j\leq i}L_{j}$ for $i>\ell$ and $L_{\ell}=D_{\ell}$.

\item $d$ gives correct distances of elements at distance at most
$\ell$ from the source $s$.

\end{enumerate}

Clearly the invariant implies correctness of the algorithm when we
take $\ell=d_{G(S)}(s,t)$. The invariant is true for $\ell=0$ from
the definition of $d'$ and as we set $d_{s}=0$. To prove the invariant
for $\ell+1$, we separately consider $\ell$ being even or odd. 

When $\ell$ is odd the algorithm essentially performs BFS. It considers
vertices in $L_{\ell}$ one-by-one and check which of the vertices
in $L_{\ell+1}$ are reachable. All vertices that we can reach in
$L_{\ell+1}$ are definitely at distance $\ell+1$ and the remaining
vertices are at distance at least $\ell+3$.

When $\ell$ is even the algorithm considers every vertex in $L_{\ell+1}$
one-by-one and check if it is reachable from $L_{\ell}$. All vertices
that we can reach from $L_{\ell}$ are definitely at distance $\ell+1$,
and the remaining vertices are at distance at least $\ell+3$ so we
update $L_{\ell+3}$ accordingly.

To analyze the running time of the algorithm, consider any element
$a$. Suppose it was initially in $L_{i}$ according to $d'$ and
is finally at distance $j$ according to $d$. Observe that we change
the layers of this element at most $\lceil(j-i)/2\rceil$ times. In
each of these changes, we spend at most one call to $\findExchange$.
By Lemma$\ $\ref{lem:exchange} each such call takes $O(\indeptime\log r)$
time, which means the overall time taken by the algorithm is $O(\sum_{a\in V}(1+d_{a}-d'_{a})\log r\cdot\indeptime)$.
\end{proof}

\subsection{Finding One Augmenting Path\label{subsec:Finding-One-Augmenting}}

Next, given distances $d$ from $s$ to every vertex in the exchange
graph using $\getdistances$, we show how to find a single augmenting
path in $O(n\indeptime)$ time. Let $D_{i}=\{a\in V|d_{a}=i\}$.

The idea is to start with $\ell=d_{t}$ at the sink $t$ and perform
a depth-first search. To go from $\ell$ to $\ell-1$, i.e, given
a vertex $v\in D_{\ell}$, to find an adjacent vertex in $D_{\ell-1}$,
we try every possible vertex in $D_{\ell-1}$ and check if it has
an edge to $v.$ Notice that we are guaranteed to find some vertex
$u\in D_{\ell-1}$ with edge $(u,v)$ in $G(S)$ because $v\in D_{\ell}$.
By continuing this procedure, we are guaranteed to find an augmenting
path of length $d_{t}$. Since we reach or check each vertex at most
once, the overall running time of finding this augmenting path will
be $O(n)$.

\begin{algorithm2e}[H]

\label{alg:one-Aug-indep}

\caption{$\onepath$($S\in\I_{1}\cap\I_{2}$,$d\in\R^{V_{S}}$)}

\SetAlgoLined

\textbf{Input}: independent set $S\in\I_{1}\cap\I_{2}$, $d\in\R^{V_{S}}$
such that for $a\in V_{S}$ $d_{a}=d(s,a)$ is the distance between
$s$ and $a$ in $G(S)$. 

\textbf{Output}: $S'\in\I_{1}\cap\I_{2}$ after performing one augmentation
along shortest path in $G(S)$, or $S$ if no such augmentation exists.

For all $i\in[d_{t}-1]$ let $L_{i}\leftarrow \{a\in V\,|\,d_{a}=i\}$

\lIf{ $d_{t}=\infty$ }{ \textbf{return} $S$ }

Let $\ell \leftarrow d_{t}$ and $a_{d_{t}} \leftarrow t$

\While{ $\ell  > 0$ }{

Let $Q \leftarrow L_{\ell-1}$

\While{ $Q\neq\emptyset$ }{

Pick arbitrary $b\in Q$

Set $j\leftarrow 1$ if $\ell$ is even abd $j\leftarrow 2$ if $\ell$ is odd.

\lIf{ $S-a_{\ell}+b\in\I_{j}$ }{Set $a_{\ell-1} \leftarrow b$ and \textbf{break}}

\lElse{$Q\leftarrow Q-b$}

}

$\ell \leftarrow \ell-1$

}

Augment along the path $s,a_{1},...,a_{d_{t}-1},t$ to obtain $S'\in\I_{1}\cap\I_{2}$
with $|S'|>|S|$

\textbf{return} $S'$

\end{algorithm2e}
\begin{lem}[One Augmentation]
 \label{lem:one-aug-Indep} Given independent set $S\in\I_{1}\cap\I_{2}$
and distance $d\in\R^{V_{S}}$ such that for $a\in V_{S}$, $d_{a}=d(s,a)$
is the distance between $s$ and $a$ in the exchange graph $G(S)$,
the algorithm $\onepath(S,d)$ (Algorithm~\ref{alg:one-Aug-indep})
outputs $S'\in\I_{1}\cap\I_{2}$ with $|S'|>|S|$ in $O(n \cdot \indeptime)$
time by performing an augmentation along a shortest path in $G(S)$,
or $S$ if there is no such $S'$.
\end{lem}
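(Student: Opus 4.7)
The plan is to prove two things: correctness (the returned $S'$ differs from $S$ by an augmentation along a shortest $s$-$t$ path in $G(S)$) and the runtime bound of $O(n\cdot\indeptime)$. Both follow from the observation that the algorithm is essentially a backwards traversal through the BFS layers $L_0 = \{s\}, L_1, \ldots, L_{d_t-1}, L_{d_t} = \{t\}$, where layer $L_i$ consists of all vertices at distance exactly $i$ from $s$ in $G(S)$, and by the parity structure of the exchange graph, $L_i \subseteq \overline{S}$ if $i$ is odd and $L_i \subseteq S$ if $i$ is even (excluding $s$ and $t$).

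First I would handle the trivial case: if $d_t = \infty$, there is no $s$-$t$ path and hence no augmenting path, so returning $S$ is correct. Otherwise, I would prove by downward induction on $\ell$ (from $\ell = d_t$ down to $\ell = 1$) that the inner while loop successfully assigns some $a_{\ell-1} \in L_{\ell-1}$ that is a predecessor of $a_\ell$ in $G(S)$. The inductive step is immediate from the BFS property: because $a_\ell$ lies at distance exactly $\ell$ from $s$, there must exist at least one $b \in D_{\ell-1} = L_{\ell-1}$ with $(b, a_\ell) \in E_S$. The independence-oracle test $S - a_\ell + b \in \I_j$ is exactly the characterization of such an edge in the exchange graph (with $j$ chosen according to the parity of $\ell$, since edges between $S$ and $\overline{S}$ at odd/even layer boundaries alternate between using $\exchange_1$ and $\exchange_2$, and the edge to $t$ is characterized by $\free_2$). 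Once all $a_\ell$ are produced, the resulting sequence $s, a_1, \ldots, a_{d_t-1}, t$ is a valid $s$-$t$ path of length $d_t$ in $G(S)$, hence a shortest augmenting path, and by Lemma~\ref{lem:basic_sap} the augmentation yields $S' \in \I_1 \cap \I_2$ with $|S'| > |S|$.

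For the runtime, the key point is that the layers $L_0, L_1, \ldots, L_{d_t-1}$ are pairwise disjoint subsets of $V_S$, and at each outer iteration $\ell$ we initialize $Q \leftarrow L_{\ell-1}$ and only probe vertices from $Q$. Each probe is a single independence-oracle call, and a probed vertex is either accepted (causing \textbf{break} and moving on to the next layer) or permanently removed from $Q$ and never reconsidered, because lower layers $L_{\ell'-1}$ for $\ell' < \ell$ are disjoint from $L_{\ell-1}$. Consequently, the total number of independence calls across all outer iterations is at most $\sum_{\ell=1}^{d_t} |L_{\ell-1}| \le n$, yielding $O(n \cdot \indeptime)$ total time.

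The main subtlety I expect is bookkeeping the parity-dependent independence test and verifying that the case $\ell = d_t$ (where $a_{d_t} = t$ is the sink rather than a ground-set element) is handled by the same template: the predecessors of $t$ are exactly elements $b \in \overline{S}$ with $S + b \in \I_2$, which is the correct specialization of the generic check at this boundary. Once this is spelled out, correctness and runtime follow cleanly from the two observations above.
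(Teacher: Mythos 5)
Your proof is correct and follows essentially the same argument as the paper: existence of a predecessor at each layer is guaranteed because $a_\ell$ is at distance exactly $\ell$, and the runtime bound follows from $\sum_\ell |L_{\ell-1}| \le n$ since the layers are disjoint and each candidate is probed at most once. The paper's proof is a bit more terse and does not spell out the parity/edge-type bookkeeping that you flag as a subtlety; your additional care there is warranted (and in fact the algorithm's pseudocode has a minor imprecision in how the test $S-a_\ell+b\in\I_j$ is written when one of $a_\ell,b$ lies outside $S$), but it does not change the substance of the argument.
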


\begin{proof}
Starting with the sink vertex $t$, the algorithm iteratively finds
an element $a_{\ell-1}$ at distance $\ell-1$ from $s$ such that
there is an edge $(a_{\ell-1},a_{\ell})$. It tries every vertex $b\in L_{\ell-1}$
and check if there is an edge $(b,a)$. We are guaranteed to find
some $b$ because $a_{\ell}$ is at distance $\ell$ from $s$. The
time taken by this step is $O(|L_{\ell-1}|\cdot\indeptime)$. Since
$\sum_{\ell}|L_{\ell-1}|\leq n$, the overall time taken by the algorithm
is $O(n\cdot\indeptime)$.
\end{proof}

\subsection{The Augmenting Paths Algorithm\label{subsec:indep-augmenting-paths}}

Now we use the subroutines from the last section to find the largest
common independent set in any two given matroids and prove Theorem$\ $\ref{thm:exact-indep}.
The $\augpath$ algorithms alternates between $\getdistancesIndep$
and $\onepath$ algorithms to perform augmentations.

\begin{algorithm2e}[H]

\label{alg:one-Aug-indep-1}

\caption{$\augpath$($\M_{1},\M_{2}$)}

\SetAlgoLined

\textbf{Input}: Two matroids $\M_{1}=(V,\I_{1})$ and $\M_{2}=(V,\I_{2})$. 

\textbf{Output}: $S\in\I_{1}\cap\I_{2}$ of maximum size

Let $S\in\I_{1}\cap\I_{2}$ be a maximal solution greedily found

Let $d\in\R^{V_{S}}$ be defined by $d_{s}=0$, $d_{t}=1$, $d_{v}=2$
for $v\in S$, and $d_{v}=1$ for $v\in\overline{S}$.

\While{$\mathrm{true}$}{

$d \leftarrow \getdistancesIndep(S,d)$

\lIf{ $d_{t}\neq\infty$ }{$S \leftarrow \onepath(S,d)$ }

\lElse{\textbf{return} $S$}

}

\end{algorithm2e}
\begin{proof}[Proof of Theorem$\ $\ref{thm:exact-indep}]
 Starting with a feasible solution $S$, since the $\augpath$ algorithm
finds all shortest augmenting paths one-by-one, it clearly returns
the optimal solution. 

Next we bound the running time. The maximal common independent set
$S\in\I_{1}\cap\I_{2}$ can be greedily computed by adding an element
$e$ to $S$ if $S+e\in\I_{1}\cap\I_{2}$. This only takes $O(n\cdot\indeptime)$
time. To bound the overall time taken by all calls to $\getdistancesIndep$
we use Lemma$\ $\ref{lem:Get-Distances-Indep}. Notice that an element
$a$'s distance $d_{a}$ only increases during the execution of the
algorithm. Since its final distance is at most $r$, we get the overall
time is $O(nr\log r\cdot\indeptime)$. Finally, each call to $\onepath$
takes $O(n\indeptime)$ time by Lemma$\ $\ref{lem:one-aug-Indep}.
Since the total number of augmentations performed using $\onepath$
is $O(r)$, these calls take $O(nr\cdot\indeptime)$ time.
\end{proof}
\begin{rem*}
In fact we can achieve a slightly better runtime of $O((nr+r^{2}\log r)\cdot\indeptime)$
time. When computing the distances from an even layer $L_{2k}$ to
an odd layer $L_{2k+1}$, for each $b\in L_{2k+1}$ we check if $b$
has an incoming edge from $L_{2k}$ using $\findExchange$ which takes
$O(\log r\cdot\indeptime)$ time. Instead of explicitly finding such
an edge, we can simply check if $S-L_{2k}+b$ is independent and determine
if $b$ has distance $2k+1$. This requires only one independence
call. We presented the slightly slower implementation because it is
easier to understand.
\end{rem*}

\begin{rem*}
Nguy\~{\^e}n's recent concurrent and independent note~\cite{nguyen2019arxiv} exploits a similar binary search idea and obtains an algorithm with $O(nr\log^2 r)$ independence queries. The reason for the extra $\log r$ factor seems to arise because he does not amortize the cost per vertex as we do in Lemma~\ref{lem:Get-Distances-Indep}. Indeed, if we do not have the $d'_a$ term, then every $a\in V$ would pay $O(d_a\log r \cdot \indeptime)$ and it is known (can be derived from Lemma~\ref{lem:augm-path-length}) that $\sum_a d_a = O(r\log r)$. This would give Nguy\~{\^e}n's factor~\cite{nguyen2019arxiv}.
\end{rem*}

\section{Approximation Algorithm using Independence Oracle\label{sec:Indep-Approximation-Algorithm}}

In this section we consider the problem of obtaining a $(1-\epsilon)$-approximate
common independent set in two matroids $\M_{1}=(V,\I_{1})$ and $\M_{2}=(V,\I_{2})$
using independence oracles. Our main result is the following theorem. 
\begin{thm}
\label{thm:approxMatrInters} There is an $O(\frac{n^{1.5}\sqrt{\log r}}{\eps^{1.5}}\cdot\indeptime)$-time
algorithm to obtain a $(1-\eps)$-approximation to the matroid intersection
problem.
\end{thm}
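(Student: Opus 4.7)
The plan is to execute the outline given in the introduction's ``augmenting sets'' paragraph, combining Cunningham's phased structure with a new parallel-augmentation primitive. The overall driver is the standard approximation reduction: by Corollary~\ref{cor:approx-algo-certif} it suffices to run phases until the shortest $s,t$-path in $G(S)$ has length at least $1/\eps$, so there are at most $1/\eps$ phases, and in phase of length $\ell$ (where $\ell$ ranges over $1,2,\ldots,1/\eps$) it suffices to exhaust \emph{all} augmenting paths of length $\ell$ before the distance $d_{G(S)}(s,t)$ strictly increases by Lemma~\ref{lem:cunning-monot}. The per-phase budget we aim for is $\tilde{O}(n^{1.5}\sqrt{\ell}\cdot\indeptime)$, which sums to $\tilde{O}(n^{1.5}/\eps^{1.5}\cdot\indeptime)$ as required.

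First I would formally define an \emph{augmenting set} at a fixed level $\ell$ as a sequence of disjoint subsets $B_1,A_1,B_2,\ldots,A_\ell,B_{\ell+1}\subseteq V$ with $|B_i|=|B_{i+1}|$ and $|A_i|=|B_i|$, where $A_i\subseteq S\cap D_{2i}$ and $B_i\subseteq\bar S\cap D_{2i-1}$ (using the distance sets of $G(S)$), such that $S\setminus(\bigcup_i A_i)\cup(\bigcup_i B_i)$ lies in $\I_1\cap\I_2$. I would then prove the equivalence with a sequentially-augmentable collection of length-$\ell$ shortest paths, and show that every \emph{maximal} (by inclusion) augmenting set has size at least a $1/O(\ell)$ fraction of the maximum one. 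This is the structural content that lets a greedy-style procedure replace Hopcroft--Karp style parallel matching.

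Next I would design the algorithm that, in time $O(n^2/p\cdot\indeptime)$ (up to log factors), produces an augmenting set that leaves at most $O(p\ell)$ length-$\ell$ augmentations unaccounted for. The idea is to iteratively try to extend the current augmenting set by scanning through candidate elements layer-by-layer, using $\findExchange$ (Lemma~\ref{lem:exchange}) to discover feasible exchanges in $O(\log r\cdot\indeptime)$ time each; the near-maximality guarantee follows from the $O(\ell)$-approximation of maximal to maximum, together with a potential argument that bounds the total work by $O(n^2/p)$ queries whenever fewer than $p\ell$ augmentations remain available. After this stage, I perform the remaining $O(p\ell)$ shortest augmentations one at a time using the $\onepath$/$\getdistancesIndep$ machinery of Section~\ref{sec:Indep-Exact-Algorithm}, paying $\tilde{O}(n\cdot\indeptime)$ per augmentation, i.e. $\tilde{O}(np\ell\cdot\indeptime)$ total for this tail. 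Balancing $n^2/p=np\ell$ yields $p=\sqrt{n/\ell}$ and a per-phase cost $\tilde{O}(n^{1.5}\sqrt{\ell}\cdot\indeptime)$; summing $\sqrt{\ell}$ from $\ell=1$ to $1/\eps$ gives $O((1/\eps)^{1.5})$ and hence the claimed $O(n^{1.5}\sqrt{\log r}/\eps^{1.5}\cdot\indeptime)$ bound, where the $\sqrt{\log r}$ absorbs the binary-search overhead of $\findExchange$.

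The main obstacle I anticipate is the ``near-maximality'' step: proving that a maximal augmenting set already captures a $1/O(\ell)$ fraction of all feasible length-$\ell$ augmentations requires a careful exchange argument between two matroids simultaneously (since naive matroid exchange gives only single-element swaps), and implementing it efficiently so that the total work to reach near-maximality is $O(n^2/p\cdot\indeptime)$ rather than, say, $O(n^2\cdot\indeptime)$, requires a charging scheme where each failed extension attempt either decreases the gap to the maximum or removes an element from future consideration. Establishing the correct amortized accounting — ensuring that distance monotonicity from Lemma~\ref{lem:cunning-monot} and the augmenting-set structure together prevent re-examining the same vertex too many times within a phase — is where I expect the delicate combinatorial work to lie.
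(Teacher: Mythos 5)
Your proposal matches the paper's proof strategy essentially step for step: same phased structure with $O(1/\eps)$ phases, same notion of augmenting sets with the maximal-vs-maximum width ratio of $O(\ell)$ (the paper's Lemma~\ref{lem:maximal-aug-sets-size-ratio}), same two-stage per-phase plan of a near-maximal-augmenting-set subroutine in $O(n^2/p)$ queries (the paper's \texttt{Refine}/\texttt{Hybrid}, with Corollary~\ref{cor:After-Refine} giving your charging scheme) followed by $O(p\ell)$ single augmentations via $\onepath$ and $\getdistancesIndep$, and the same $p \approx \sqrt{n/\ell}$ balance (the paper fixes one $p$ for all phases, a cosmetic difference). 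The obstacle you flag — establishing the $O(\ell)$ ratio between maximal and maximum augmenting sets via a simultaneous two-matroid exchange argument, and the amortized accounting via partial augmenting sets and monotonicity — is exactly where the paper spends its technical effort.
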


\paragraph{Overview.}

Our algorithm uses Cunningham's idea of performing augmentations in
phases. Let $S$ denote the solution of the algorithm at the start
of a phase. Denote the exchange graph of $S$ by $G(S)$. Let the
length of the shortest path in $G(S)$ be $2(\ell+1)$. Let $D_{k}$
be the set of elements at distance $k\leq2(\ell+1)$ from the source
$s$. Note by Lemma~\ref{lem:Get-Distances-Indep}, we can compute
these sets in $O(n\log r\cdot\Tind)$ time. The rough idea of Cunningham's
algorithm in a phase is to perform a maximal collection of augmentations
as long as the shortest path length remains the same. Once such a
maximal collection is augmented, the distance from source to sink
increase by at least 2 and we move on to the next phase. Note that, by Corollary~\ref{cor:approx-algo-certif},
one obtains a $(1-\epsilon)$-approximate solution after $O(1/\epsilon)$
phases. 
Thus the total running
time of the algorithm is $O(1/\epsilon)$ times the \textit{time}
(number of independence oracles) needed to process each phase. The
rest of this section therefore focuses on how to process a phase fast.

In each phase, Cunningham's algorithm takes one augmenting path at
a time. This becomes necessary for any ``augmenting paths'' style
algorithm because each augmentation might add or delete several edges
in the exchange graph. For example, taking one augmenting path may
destroy another {\em disjoint} augmenting path\footnote{As a concrete example, consider the maximum bipartite matching problem
(which is an intersection of two partition matroids) on the graph
being a path $\{e_{1},e_{2},\ldots,e_{6}\}$ of length 6. Suppose
the current matching is $\{e_{2},e_{5}\}$. The corresponding exchange
graph has two disjoint augmenting paths $\{e_{1},e_{2},e_{3}\}$ and
$\{e_{4},e_{5},e_{6}\}$. However, if we take both the augmenting
paths simultaneously then the obtained set $\{e_{1},e_{3},e_{4},e_{6}\}$
is not a valid matching.}. Unfortunately,
this means we cannot beat $O(nr)$ since we don't know how to do an
augmentation in better than $O(n)$ time. To overcome this barrier,
our main idea is to simultaneously find multiple augmenting paths.
Of course this needs care as one augmenting path might destroy another
path. Our crucial observation is that the union of a sequence of consecutive
shortest augmenting paths can be interpreted as ``augmentation sets''.
These augmentation sets satisfy many of the properties of an augmenting
path. Although finding a maximal collection of augmenting sets still
takes quadratic time, to obtain a sub-quadratic algorithm we halt
this Augmenting Sets algorithm early in Section~\ref{subsec:FasterAlgo}
and combine it with our Augmenting Path algorithm from Section~\ref{sec:Indep-Exact-Algorithm}.

Before describing our Augmenting Sets algorithm in Section~\ref{subsec:TheAugSetsAlgo},
we start by proving some basic facts about matroids and then studying 
the properties of \textit{augmenting sets}.

\subsection{Facts about Matroids}
\begin{fact}
\label{fact:1} Let $S$ be a common independent set in two matroids
$\M_{1}=(V,\I_{1})$ and $\M_{2}=(V,\I_{2})$. Let $G(S)$ be the
exchange graph w.r.t $S$. If there exists a set $A\subseteq S$ and
$b\notin S$ such that there is no edge of the form $(a,b)\in E(G(S))$
for any $a\in A$, then $S-A+b\notin\I_{1}$. Similarly, if there
is no edge of the form $(b,a)\in E(G(S))$ for any $a\in A$, then
$S-A+b\notin\I_{2}$. 
\end{fact}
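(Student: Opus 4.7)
The plan is to reduce the statement to the standard fact that a dependent superset stays dependent, by identifying a concrete dependent subset inside $S - A + b$ using the unique circuit characterization from Lemma~\ref{lem:circuits}. I will only write the argument for $\M_1$; the case of $\M_2$ is identical after swapping edge directions and replacing $\exchange_1$ by $\exchange_2$.

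First I would dispose of a trivial case: if $S + b \in \I_1$, then by downward closure $S - a + b \in \I_1$ for every $a \in S$, so $(a,b)$ would be an edge of $G(S)$ of type (3) for every $a \in A$, contradicting the hypothesis (assuming $A$ is nonempty, which is the only nontrivial case). Hence $S + b \notin \I_1$, and since $S \in \I_1$, there is a unique circuit $C = \circ_{\M_1}(S+b) \subseteq S + b$.

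Next I would apply Lemma~\ref{lem:circuits} to rewrite $C$ explicitly as
\[
C = \{c \in S+b \,:\, S + b - c \in \I_1\}.
\]
Since $S + b - b = S \in \I_1$, we have $b \in C$. For $c \in S$, membership $c \in C$ is equivalent to $S - c + b \in \I_1$, i.e.\ to $(c,b)$ being an edge of type (3) in $G(S)$. The hypothesis that no such edge exists for any $a \in A$ therefore translates precisely into $A \cap C = \emptyset$.

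Now I would conclude: since $C \subseteq S+b$ and $A \cap C = \emptyset$, we get $C \subseteq (S+b) \setminus A = S - A + b$. Because $C$ is a circuit, it is dependent in $\M_1$, and by the hereditary property of $\I_1$ every superset of a dependent set is dependent. Thus $S - A + b \notin \I_1$, as desired. The $\M_2$ statement follows from the analogous argument applied to the circuit $\circ_{\M_2}(S+b)$ and the edges $(b,a)$ of type (4), whose existence encodes exactly when $S - a + b \in \I_2$. There is no substantive obstacle here; the only subtlety is the reduction of the trivial case $S + b \in \I_1$, which is itself handled by downward closure.
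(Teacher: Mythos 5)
Your proof is correct. The paper itself offers no argument, stating only that the fact is ``direct from the definition of matroids,'' so there is no official proof to compare against. Your route via Lemma~\ref{lem:circuits} is clean: you reduce to the unique circuit $C = \circ_{\M_1}(S+b)$, observe that for $c\in S$ one has $c \in C$ if and only if $S-c+b\in\I_1$, i.e.\ if and only if $(c,b)$ is a type-(3) arc, so the hypothesis forces $C \cap A = \emptyset$, hence $C \subseteq S-A+b$, and heredity finishes. A slightly more elementary alternative avoids circuits altogether: if $S-A+b \in \I_1$ with $A \neq \emptyset$, repeatedly apply the exchange axiom against the larger independent set $S$ to grow $S-A+b$ by elements of $S\setminus(S-A+b)=A$ up to cardinality $|S|$; the final set has the form $S-a+b \in \I_1$ for some $a \in A$, which is exactly an arc $(a,b)$, contradiction. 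Both arguments are valid and of comparable length.

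One caveat you flagged only in passing deserves to be made explicit: the fact as stated is literally false when $A = \emptyset$ and $b \in \free_1(S)$, since the edge hypothesis is then vacuous while $S-\emptyset+b = S+b \in \I_1$. Your disposal of the case $S+b\in\I_1$ implicitly assumes $A\neq\emptyset$. This is harmless because every invocation of the fact in the paper has $A$ nonempty, but the hypothesis $A\neq\emptyset$ should be stated up front rather than smuggled in parenthetically.
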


\begin{proof}
Direct from the defintion of matroids.
\end{proof}

%
%
\begin{fact}
\label{fact:2} Let $\M=(V,\I)$ be any matroid and let $S\in\I$
be an independent set. Let $X\subseteq S$ and $Y,Z\subseteq\bar{S}$
with $Y\cap Z=\emptyset$ satisfy the following properties:\setenumerate{label=(\alph*),noitemsep}
\begin{enumerate}
\item $S+Z\in\I$
\item $S-X+Y\in\I$
\item $Y\in\Span(S)$, that is, $\rk(S+Y)=\rk(S)=|S|$.
\end{enumerate}
Then, $S+Z-X+Y\in\I$.
\end{fact}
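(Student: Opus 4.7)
My plan is to produce, between $S - X + Y$ and the target set $S - X + Y + Z$, an enlarged independent set $\tilde{S}$ with the same span as $S$; then use the hypothesis $S + Z \in \I$ to conclude that $\tilde{S} + Z$ is independent, and finally take a subset.

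Concretely, I first use condition (c) that $\rk(S+Y)=|S|$ together with condition (b). Since $S-X+Y$ is an independent subset of $S+Y$ of size $|S|-|X|+|Y|$, the matroid augmentation axiom applied inside $S+Y$ lets me add $|X|-|Y|$ elements of $(S+Y)\setminus(S-X+Y)=X$ to extend $S-X+Y$ to a basis of $S+Y$. That basis can be written as $S - X'' + Y$ where $X''\subseteq X$ and $|X''| = |Y|$; by construction it is independent and of rank $|S|$, so $\Span(S - X'' + Y) = \Span(S + Y) = \Span(S)$.

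Next I use condition (a). In any matroid, $\rk(A\cup W)=\rk(\Span(A)\cup W)$ because span is idempotent under union, so
\[
\rk\bigl((S-X''+Y)\cup Z\bigr)=\rk(\Span(S)\cup Z)=\rk(S\cup Z)=|S|+|Z|,
\]
where the last equality uses (a). On the other hand, since $X''\subseteq S$, $Y\cap S=\emptyset$, $Z\cap S=\emptyset$, and $Y\cap Z=\emptyset$, the three pieces $S-X''$, $Y$, $Z$ are pairwise disjoint, so
\[
|S-X''+Y+Z|=(|S|-|Y|)+|Y|+|Z|=|S|+|Z|.
\]
Cardinality equals rank, hence $S-X''+Y+Z\in\I$. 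Finally $X''\subseteq X$ gives $S-X+Y+Z\subseteq S-X''+Y+Z$, and since subsets of independent sets are independent, $S-X+Y+Z\in\I$, as desired.

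The main delicate point is producing the ``upgrade'' $X''\subseteq X$ with $|X''|=|Y|$ so that $S-X''+Y$ is a full basis of $S+Y$; this is where conditions (b) and (c) combine, and it is what makes the spans of $S-X''+Y$ and $S$ coincide. The remaining arithmetic about cardinalities and disjointness is routine given that $Y,Z\subseteq\bar{S}$ and $Y\cap Z=\emptyset$.
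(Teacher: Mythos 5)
Your proof is correct, but it takes a genuinely different route from the paper. The paper's proof is a one-liner that applies submodularity of the rank function to the pair $S-X+Y+Z$ and $S+Y$ (whose union is $S+Y+Z$ and whose intersection is $S-X+Y$), and then plugs in (a), (b), (c) to deduce $\rk(S-X+Y+Z)=|S|-|X|+|Y|+|Z|$. You instead reason constructively: you use the exchange axiom to grow $S-X+Y$ to a basis $S-X''+Y$ of $S+Y$ (the augmentation elements necessarily come from $X$, which is exactly $(S+Y)\setminus(S-X+Y)$), observe that this basis has the same span as $S$ by condition (c), and then transfer independence of $S+Z$ across the span equality using the fact that $\rk(A\cup W)=\rk(\sigma(A)\cup W)$. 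All the cardinality bookkeeping is handled correctly by the disjointness hypotheses, and the final step of passing to the subset $S-X+Y+Z$ is sound. The paper's argument is shorter and needs only submodularity as a black box; yours is longer but more revealing of the mechanism, since it exhibits an explicit basis $S-X''+Y$ of $\Span(S)$ to which $Z$ can be safely appended, and it also shows exactly where conditions (b) and (c) combine to force $|X''|=|Y|$.
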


\begin{proof}
By submodularity of rank, 
\[
\rk(S-X+Y+Z)+\rk(S+Y)\geq\rk(S-X+Y)+\rk(S+Y+Z).
\]
Now, $\rk(S+Y)=|S|$ (given as (c)), $\rk(S-X+Y)=|S|-|X|+|Y|$ (given
as (b)), and $\rk(S+Y+Z)\geq\rk(S+Z)=|S|+|Z|$. Thus, $\rk(S-X+Y+Z)=|S|-|X|+|Y|+|Z|$. 
\end{proof}

\subsection{Augmenting Sets}

Let $S$ be a common independent set in $\I_{1}\cap\I_{2}$. Recall
the exchange graph $G(S)$ with respect to this common set $S$, whose
vertex set $V(G(S))=V\cup\{s,t\}$. Also recall that $D_{i}$ is the
set of elements of $\GS$ which are at distance \emph{exactly} $i$
from the source $s\in G(S)$. Next, we define the notion of augmenting
sets in the graph $G(S)$. Let $2(\ell+1)$ be the length of the shortest
path from $s$ to $t$ in the augmenting graph. We define augmenting
sets with respect to this graph $G(S)$.

\begin{defn}[Augmenting Sets]
 Let $S\in\I_{1}\cap\I_{2}$ and $G(S)$ be the corresponding exchange
graph with shortest path $2(\ell+1)$. A collection of sets $\Pi_{\ell}:=(B_{1},A_{1},B_{2},A_{2},\ldots,A_{\ell},B_{\ell+1})$
form an augmenting collection of sets, or simply augmenting sets,
in $G(S)$ if the following conditions are satisfied: 
\global\long\def\width{w}%
\setenumerate{label=(\alph*),noitemsep}
\begin{enumerate}
\item For $1\leq k\leq\ell+1$, we have $A_{k}\subseteq D_{2k}$ and $B_{k}\subseteq D_{2k-1}$.
\item $|B_{1}|=|A_{1}|=|B_{2}|=\cdots=|B_{\ell+1}|=\width$
\item $S+B_{1}\in\mathcal{I}_{1}$
\item $S+B_{\ell+1}\in\mathcal{I}_{2}$
\item For all $1\leq k\leq\ell$, we have $S-A_{k}+B_{k+1}\in\mathcal{I}_{1}$
\item For all $1\leq k\leq\ell$, we have $S-A_{k}+B_{k}\in\mathcal{I}_{2}$
\end{enumerate}
See Figure~\ref{fig:augSets} for an illustration. We let $\width$ denote the \emph{width} of $\Pi_{\ell}$. 
\end{defn}

There is some redundancy in the above definition. For instance, $S+B_{1}\in\I_{1}$
does imply $B_{1}\subseteq D_{1}$, the set of free vertices for $S$
in $\I_{1}$. Nevertheless, we include this in the definition for
better understanding. Observe that if each $|A_{k}|$ was of size $1$,
an
augmenting set corresponds to an augmenting path. Thus, the
concept generalizes the concept of augmenting paths in the exchange
graph and in the next subsection we prove many analogous properties.



\usetikzlibrary{decorations.markings}
\usetikzlibrary{arrows}
\tikzstyle{graphnode}=[circle, draw, fill=black!20, inner sep=0pt, minimum width=3pt]

\tikzset{
    >=stealth',
    pil/.style={
           ->,
           thick,
           shorten <=2pt,
           shorten >=2pt,}
}
\tikzset{->-/.style={decoration={
  markings,
  mark=at position .5 with {\arrow{>}}},postaction={decorate}}}


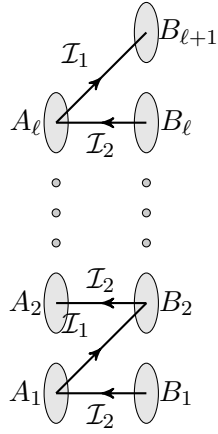
\begin{figure}
	\centering
	\begin{tikzpicture} [thin,scale=0.8]
		\draw (0,-1.5) ellipse (0.2cm and 0.5cm)[fill=black!10];
		\draw (0,0) ellipse (0.2cm and 0.50cm)[fill=black!10];
		\draw (0,3) ellipse (0.2cm and 0.50cm)[fill=black!10];

		\draw (1.5,-1.5) ellipse (0.2cm and 0.50cm)[fill=black!10];
		\draw (1.5,0) ellipse (0.2cm and 0.5cm)[fill=black!10];
		\draw (1.5,3) ellipse (0.2cm and 0.50cm)[fill=black!10];
		\draw (1.5,4.5) ellipse (0.2cm and 0.50cm)[fill=black!10];

		\draw [->-,   thick] (1.5,-1.5) to node[auto]{$\mathcal{I}_2$} (0,-1.5);				
		\draw [->-,   thick] (0,-1.5) to node[auto]{$\mathcal{I}_1$} (1.5,0);
		\draw [->-,   thick] (1.5,0) to node[auto,label=above:$\mathcal{I}_2$]{} (0,0);				
%
		\draw [->-,   thick] (1.5,3) to node[auto]{$ \mathcal{I}_2$} (0,3);				
		\draw [->-,   thick] (0,3) to node[auto]{$\mathcal{I}_1$} (1.5,4.5);

		\node at (1.5 , 1) [graphnode]{};
		\node at (1.5 , 1.5) [graphnode]{};		
		\node at (1.5 , 2) [graphnode]{};

		\node at (0 , 1) [graphnode]{};
		\node at (0 , 1.5) [graphnode]{};		
		\node at (0 , 2) [graphnode]{};

		\node at (-0.5,-1.5) {$A_1$};		
		\node at (2,-1.5) {$B_1$};
		\node at (-0.5,0) {$A_2$};
		\node at (2,0) {$B_2$};
		
		\node at (-0.5,3) {$A_{\ell}$};
		\node at (2,3) {$B_{\ell}$};
		\node at (2.2,4.5) {$B_{\ell+1}$};

	\end{tikzpicture}
    \caption[margin=5cm]{Augmenting sets where each $A_k \subseteq S$ and $B_k \subseteq \overline{S}$. They satisfy $S - A_k + B_{k+1} \in \mathcal{I}_1$ and $S - A_k + B_{k} \in \mathcal{I}_2$.} 
        \label{fig:augSets}
\end{figure}


\subsubsection{Properties of augmenting sets}

Our first goal is to show that given an augmenting set $\Pi_{\ell}$,
we can do the natural swap operation to obtain a larger common independent
set. 
\begin{thm}
\label{thm:aug-means-aug} Let $\Pi_{\ell}:=(B_{1},A_{1},B_{2},A_{2},\cdots,B_{\ell},A_{\ell},B_{\ell+1})$
be the an augmenting set in the exchange graph $G(S)$ whose shortest
path length is $2(\ell+1)$. 
 Then the set $S':=S\oplus\Pi_{\ell}:=S+B_{1}-A_{1}+B_{2}-\cdots+B_{\ell}-A_{\ell}+B_{\ell+1}$
is a common independent set. 
\end{thm}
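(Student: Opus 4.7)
The plan is to show $S' := S \oplus \Pi_\ell \in \I_1$; the statement $S' \in \I_2$ follows by a symmetric argument obtained by swapping the roles of $\M_1, \M_2$ and reversing the indexing of the layers (so that (d) and (f) play the roles of (c) and (e), and distances from $t$ replace distances from $s$).

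The starting point is to extract structural consequences from the distances in $G(S)$. For every $b \in B_k$ with $k \geq 2$, the distance $2k-1 \geq 3$ rules out an $(s,b)$ edge, so $S + b \notin \I_1$ and therefore $b \in \Span_1(S)$; hence $B_2 \cup \cdots \cup B_{\ell+1} \subseteq \Span_1(S)$. An analogous distance argument shows that for $b \in B_{k+1}$ and $a \in A_j$ with $j < k$, there is no $\M_1$-exchange arc $(a,b)$: otherwise the distance to $b$ would be at most $2j+1 < 2k+1$. By Lemma~\ref{lem:circuits} this means $a \notin \circ_{\M_1}(S + b)$, i.e., the fundamental circuits of elements of $B_{k+1}$ avoid the ``earlier'' $A$-layers. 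This is the key triangular property of the exchange graph.

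The main step is a single application of Fact~\ref{fact:2} with $\M = \M_1$, base set $S$, $X = \bigcup_{k=1}^\ell A_k \subseteq S$, $Y = \bigcup_{k=2}^{\ell+1} B_k \subseteq \overline{S}$, and $Z = B_1 \subseteq \overline{S}$. Precondition~(a) is $S + B_1 \in \I_1$ from property~(c); precondition~(c) is $Y \subseteq \Span_1(S)$ from the first observation above; and the conclusion $S + B_1 - X + Y = S'$ is exactly what we want. It only remains to verify precondition~(b),
\[
L_\ell := S - (A_1 \cup \cdots \cup A_\ell) + (B_2 \cup \cdots \cup B_{\ell+1}) \in \I_1,
\]
which I would establish by induction on $k$ for the intermediate sets $L_k := S - (A_1 \cup \cdots \cup A_k) + (B_2 \cup \cdots \cup B_{k+1})$, with the base case $L_1 \in \I_1$ following directly from property~(e).

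The inductive step $L_{k-1} \Rightarrow L_k$ is the technical heart of the argument. A direct application of Fact~\ref{fact:2} to $L_{k-1}$ with the single-layer swap $A_k \leftrightarrow B_{k+1}$ is circular, because the precondition coincides with the desired conclusion. To push through, I would combine the observation that $L_{k-1}$ is a basis of the restricted matroid on $S + B_2 + \cdots + B_k$ (since it is independent and has the maximum possible size $|S|$ in a set of rank $|S|$, using $B_2,\ldots,B_k \subseteq \Span_1(S)$), which yields $S \subseteq \Span_1(L_{k-1})$ and hence $B_{k+1} \subseteq \Span_1(S) \subseteq \Span_1(L_{k-1})$, with the triangular property above. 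The latter guarantees that the Brualdi-style perfect matching in the $\M_1$-exchange graph between $A_k$ and $B_{k+1}$ provided by property~(e) composes, layer by layer, with the analogous matchings from earlier levels into a consistent witness that $L_k$ is independent in $\M_1$. This matching-with-triangularity step is where I expect the main difficulty to lie; once it is in hand, the Fact~\ref{fact:2} application above closes the proof of $S' \in \I_1$, and $S' \in \I_2$ is handled by the symmetric argument.
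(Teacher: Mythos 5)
Your overall architecture matches the paper's exactly: prove by induction that $L_\ell := S - (A_1 \cup \cdots \cup A_\ell) + (B_2 \cup \cdots \cup B_{\ell+1}) \in \I_1$ (this is the paper's Lemma~\ref{lem:augset}), observe that $B_2 \cup \cdots \cup B_{\ell+1} \subseteq \Span_1(S)$, and then apply Fact~\ref{fact:2} with $Z = B_1$ to reattach the first layer. Your extraction of the ``triangular'' no-forward-edge property and your use of Lemma~\ref{lem:circuits} to interpret it are also the right ingredients.

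The gap is in the inductive step, which you yourself flag as ``where I expect the main difficulty to lie.'' The sketch you give — $L_{k-1}$ is a basis of the restriction to $S + B_2 + \cdots + B_k$, hence $B_{k+1} \subseteq \Span_1(L_{k-1})$, and then ``Brualdi-style matchings compose layer by layer'' — is not a proof. It is not clear how to define a consistent system of exchange matchings across layers, nor why the triangular property alone guarantees that they compose into a single independent witness; this is precisely the kind of step that tends to fail for matroids when it is only argued pictorially. The paper's inductive step avoids any matching machinery. It takes $X := S - \mathcal{A}_k + \mathcal{B}_{k+1}$ (inductive hypothesis, size $|S|$) and $Y := S - A_{k+1} + B_{k+2} - \mathcal{A}_k$ (from property~(e), size $|S| - |\mathcal{A}_k|$), applies the matroid exchange axiom to produce $R \subseteq A_{k+1} \cup \mathcal{B}_{k+1}$ of size $|\mathcal{A}_k|$ with $Y + R \in \I_1$, and then rules out $R \neq \mathcal{B}_{k+1}$ by a cardinality count: any such $R$ would force a set $Q \subseteq (R \cap \mathcal{B}_{k+1}) \cup B_{k+2}$ of size $|\mathcal{A}_k|$ with $S - \mathcal{A}_k + Q \in \I_1$, but Fact~\ref{fact:1} and the triangular property prevent $Q$ from meeting $B_{k+2}$, forcing $|R \cap \mathcal{B}_{k+1}| \geq |\mathcal{B}_{k+1}|$ — a contradiction. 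You should replace the ``matching composition'' sketch with a concrete exchange-axiom argument of this kind; the rest of your proposal is correct and is the same argument as the paper's.
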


\begin{proof}
Let us prove that $S'\in\I_{1}$, and the proof of $S'\in\I_{2}$
is analogous and omitted.

The first observation is that there is no edge from a vertex $a\in A_{i}$
to a vertex $b\in B_{j}$ with $j>i+1$. The reason is that $A_{i}\subseteq D_{2i}$
and $B_{j}\subseteq D_{2j-1}$. If $j>i+1$, then $2j-1>2i+1$, and
since the $D_{i}$'s are vertices at distance exactly $i$, there
can't be an edge from any vertex in $D_{2i}$ to any vertex in $D_{2j-1}$.

First, in the following lemma, we prove $S'-B_{1}$ is independent in $\M_1$.
\begin{lem}
	\label{lem:augset} The set $S'-B_{1}:=S-A_{1}+B_{2}-A_{2}+\cdots+B_{\ell+1}\in \I_{1}$. 
\end{lem}
\begin{proof}
	\global\long\def\calA{\mathcal{A}}%
	
	\global\long\def\calB{\mathcal{B}}%
	For brevity, let us use $\calA_{k}$ to denote the union of the sets
	$A_{1}+A_{2}+\cdots+A_{k}$ and let $\calB_{k+1}=B_{2}+B_{3}+\cdots+B_{k+1}$.
	Since all the sets are of equal size, we have $|\calA_{k}|=|\calB_{k+1}|$.
	
	The lemma asks us to prove $S-\calA_{\ell}+\calB_{\ell+1}\in\I_{1}$.
	The proof is by induction on $k$. Note that $S-\calA_{1}+\calB_{2}=S-A_{1}+B_{2}\in\I_{1}$
	by the definition of augmenting sets.
	
	Suppose we have already established $X:=S-\calA_{k}+\calB_{k+1}\in\I_{1}$
	and we wish to show $S-\calA_{k+1}+\calB_{k+2}\in\I_{1}$. From definition
	we know that $S-A_{k+1}+B_{k+2}\in\I_{1}$. So, $Y:=S-A_{k+1}+B_{k+2}-\calA_{k}$
	is also in $\I_{1}$. Using this definition, we want to show $Y+\calB_{k+1}\in\I_{1}$.
	
	Now, $|X|=|S|$ while $|Y|=|S|-|\calA_{k}|$. Therefore, using the
	exchange principle of matroids on $X$ and $Y$, we assert that there
	exists a set 
	\[
	R\subseteq A_{k+1}\cup\calB_{k+1},~~~|R|=|\calA_{k}|=|\calB_{k+1}|,~~~~\textrm{s.t.}~~~Z:=Y+R\in\I_{1}
	\]
	If $R=\calB_{k+1}$, we are done. Therefore suppose $R\neq\calB_{k+1}$.
	In particular, $|R\cap\calB_{k+1}|<|\calB_{k+1}|$. Below we show
	this leads to a contradiction, completing the proof of the lemma.
	
	Since $Z\in\I_{1}$ with $|Z|=|S|$ and $S-\calA_{k}\in\I_{1}$, we
	know there exists a subset $Q\subseteq Z\setminus(S-\calA_{k})$ such
	that $|Q|=|\calA_{k}|$ and $S-\calA_{k}+Q\in\I_{1}$. Staring at
	$Z\setminus(S-\calA_{k})$, we get that $Q\subseteq\left(R\cap\calB_{k+1}\right)\cup B_{k+2}$.
	
	Now, we use the fact that there is no edge in $G(S)$ from $a\in\calA_{k}$
	to $B_{k+2}$. Using Fact~\ref{fact:1}, we get that $S-\calA_{k}+b\notin\I_{1}$
	for any $b\in B_{k+2}$. That is, the set $Q$ defined above cannot
	intersect $B_{k+1}$ and thus must be a subset of $R\cap\calB_{k+1}$
	itself. Since $|Q|=|\calA_{k}|=|\calB_{k+1}|$, we get $|R\cap\calB_{k+1}|\geq|\calB_{k+1}|$
	which leads to the promised contradiction. 
\end{proof}
The theorem now follows from the above lemma.
Indeed, the sets $B_{2},B_{3},\ldots,B_{\ell+1}$, and in particular,
$B_{2}+\cdots+B_{\ell+1}$, all lie in $\Span(S)$ in $\M_{1}$. Otherwise,
they would lie in the set $D_{1}$ (by definition of the exchange
graph). Thus, $S+B_{1}\in\I_{1}$, and from the following lemma $S'-B_{1}=S-(A_{1}+A_{2}+\ldots+A_{\ell})+(B_{2}+\cdots+B_{\ell+1})\in\I_{1}$.
The theorem follows from Fact~\ref{fact:2}.
\end{proof}

In the remainder of the section, we want to show an \emph{equivalence}
between the family of augmentation sets in $G(S)$, and a collection
of {\em consecutive} shortest augmenting paths in $G(S)$. At a
high-level, this equivalence allows us to run a single phase of Cunningham
in ``one-shot''. The following lemma is the main structural fact.
\begin{lem}
\label{lem:converseaugset} Let $S$ be a common subset of $\I$
and let $G(S)$ be the corresponding exchange graph. Let $X_{1},X_{2},\ldots,X_{k}$
be {\em disjoint} subsets of $S$, and let $Y_{1},Y_{2},\ldots,Y_{k}$
be {\em disjoint} subsets of $\overline{S}$ satisfying $|X_i| = |Y_i|$ for all $1\le i\le k$.

If (a) there is no edge from a vertex in $X_{i}$ to a vertex in $Y_{j}$
with $j>i$, and (b) $S-X_{1}+Y_{1}-X_{2}+Y_{2}\cdots-X_{k}+Y_{k}\in\I_1$, then 
each set $S-X_{i}+Y_{i}\in\I_1$.

Similarly, if (a) there is no edge from a vertex in $Y_{j}$ to a vertex in $X_{i}$
with $i>j$, and (b) $S-X_{1}+Y_{1}-X_{2}+Y_{2}\cdots-X_{k}+Y_{k}\in\I_2$, then 
each set $S-X_{i}+Y_{i}\in\I_2$.
%
%
\end{lem}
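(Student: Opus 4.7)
My plan is to prove the statement by induction on $k$, handling the last swap $S - X_k + Y_k$ separately within each inductive step. By symmetry I focus on the $\I_1$ case; the $\I_2$ case is analogous with indices reversed (``prefix'' becoming ``suffix'').

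The key structural ingredient I will use is that for every $b \in Y_j$ we have $b \in \Span_1(S - X_1 - \cdots - X_{j-1})$. This will follow from hypothesis~(a) combined with Fact~\ref{fact:1} applied to $A = X_1 \cup \cdots \cup X_{j-1}$: the hypothesis rules out any exchange-graph edge from $A$ to $b$, so $S - A + b \notin \I_1$, which forces $b \in \Span_1(S - A)$.

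The base case $k = 1$ is immediate from hypothesis~(b). For the inductive step I first establish that the prefix $P_{k-1} := S - X_1 - \cdots - X_{k-1} + Y_1 + \cdots + Y_{k-1}$ lies in $\I_1$. Taking $j = k$ in the structural fact gives $Y_k \subseteq \Span_1(S - X_1 - \cdots - X_{k-1}) \subseteq \Span_1(P_{k-1})$, so $\rk_1(P_{k-1} \cup Y_k) = \rk_1(P_{k-1})$; since $T = P_{k-1} - X_k + Y_k \subseteq P_{k-1} \cup Y_k$ and $T \in \I_1$ has size $|S|$, this forces $\rk_1(P_{k-1}) \geq |S| = |P_{k-1}|$, giving $P_{k-1} \in \I_1$. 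Hypothesis~(a) is trivially inherited by the shorter sequence $(X_1, Y_1, \ldots, X_{k-1}, Y_{k-1})$, so the inductive hypothesis applied with $P_{k-1}$ as the new~(b) yields $S - X_i + Y_i \in \I_1$ for every $i < k$.

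The remaining case $i = k$ is the main obstacle, as it is not reachable by the prefix induction. Here I will set $W := S - X_1 - \cdots - X_{k-1}$ and observe that $W - X_k + Y_k \subseteq T$, so $W - X_k + Y_k$ is independent of size $|W|$. Combined with $Y_k \subseteq \Span_1(W)$, this gives $\Span_1(W - X_k + Y_k) = \Span_1(W)$ (both are flats of rank $|W|$, and the inclusion $W - X_k + Y_k \subseteq \Span_1(W)$ is immediate). Writing $S - X_k + Y_k$ as the disjoint union of $W - X_k + Y_k$ and $X_1 + \cdots + X_{k-1}$, the fact that $S = W + (X_1 + \cdots + X_{k-1}) \in \I_1$ shows $X_1 + \cdots + X_{k-1}$ is independent over $W$; the span equality then implies it is also independent over $W - X_k + Y_k$, so the rank of the union equals $|W| + \sum_{i<k}|X_i| = |S|$ and $S - X_k + Y_k \in \I_1$ as desired.
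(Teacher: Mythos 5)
Your proof is correct, but it proceeds by a genuinely different route than the paper's. The paper proves the $k=2$ case in detail via two direct exchange arguments (extend $S-X_1$ inside $S'$ to get $Q=Y_1$; then a proof by contradiction using the basis-exchange axiom to show $R=X_1$), and then handles general $k$ by two applications of the $k=2$ case to the groupings $(X_1+\cdots+X_i,\,X_{i+1}+\cdots+X_k)$ and $(X_1+\cdots+X_{i-1},\,X_i)$. You instead run an explicit induction on $k$ built around the closure observation $Y_j\subseteq\Span_1(S-X_1-\cdots-X_{j-1})$ (from Fact~\ref{fact:1} with $A=X_1\cup\cdots\cup X_{j-1}$), use a rank-of-union argument to first show the prefix $P_{k-1}$ is independent (which supplies the inductive hypothesis for $i<k$), and then handle $i=k$ by the span equality $\Span_1(W)=\Span_1(W-X_k+Y_k)$ together with the fact that contraction depends only on the closure of the contracted set. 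Your approach makes the underlying ``no forward edges $\Rightarrow$ spanned by prefix'' structure explicit and replaces the paper's two contradiction/exchange arguments with rank and closure calculations; it is somewhat more conceptual and more uniform across the $i<k$ and $i=k$ cases, at the cost of requiring the contraction/closure fact. Both proofs are comparable in length and difficulty. One small point worth making explicit in your write-up: the sentence ``$X_1+\cdots+X_{k-1}$ is independent over $W$; the span equality then implies it is also independent over $W-X_k+Y_k$'' is the standard fact that $\cl(W)=\cl(W')$ implies $\rk(A\cup W)-\rk(W)=\rk(A\cup W')-\rk(W')$ for any $A$; a one-line justification (e.g.\ $\rk(A\cup W)=\rk(A\cup\cl(W))=\rk(A\cup\cl(W'))=\rk(A\cup W')$, or a submodularity inequality as in Fact~\ref{fact:2}) would make the argument self-contained.
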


\begin{proof} We prove the statement for $\I_1$; the proof for $\I_2$ is analogous.

The base case of $k=2$ has the main ideas and let us first focus on this.
Let $S':=S-X_{1}+Y_{1}-X_{2}+Y_{2}\in\I_1$. Let $T=S-X_{1}\in\I_1$.
Using the exchange property, we know there exists $Q\subseteq Y_{1}+Y_{2}$
of size $|Q|=|X_{1}|=|Y_{1}|$, such that $S-X_{1}+Q\in\I_1$. Since
there is no edge from any vertex in $X_{1}$ to any vertex in $Y_{2}$,
using Fact~\ref{fact:1} we know that $S-X_{1}+y\notin\I_1$ for any
$y\in Y_{2}$. Thus, $Q\subseteq Y_{1}$. Since $|Q|=|Y_{1}|$, we
get $Q=Y_{1}$. That is, $S-X_{1}+Y_{1}\in\I_1$.

For the second set, we need to work a bit more. We know that $S-X_{1}-X_{2}+Y_{2}\in\I_1$
and $S\in\I_1$. Thus, there exists a set $R\subseteq X_{1}+X_{2}$
such that $|R|=|X_{1}|$ and $S-X_{1}-X_{2}+Y_{2}+R\in\I_1$. If $R=X_{1}$,
we are done. If $R\neq X_{1}$, then since $|R|=|X_{1}|$, there must
exist an $x\in X_{2}\cap R$. That is, the set $T:=S-X_{1}-X_{2}+Y_{2}+x\in\I$.
Note $|T|=|S|-|X_{1}|+1$. Applying the exchange property with the
independent set $S-X_{1}$, we get the existence of $y\in T\setminus(S-X_{1})$
such that $S-X_{1}+y\in\I$. Since $X_{2}\subseteq(S-X_{1})$, we
get that $y\neq x$. In particular, $y\in Y_{2}$. But, as established
before, $S-X_{1}+y\notin\I$ for all $y\in Y_{2}$ (since there is
no edge from $X_{1}$ to $Y_{2}$). 

Now we handle the general case $k>2$ by applying the result for $k=2$ twice. By considering $X_1' = X_1+\ldots+X_i,X_2'=X_{i+1}+...+X_k$ (and set $Y_1',Y_2'$ similarly) we have $S-X_1'+Y_1' \in\I_1$. Finally, by considering $X_1''= X_1'-X_i,X_2''=X_i$ we have $S-X_{i}+Y_{i}\in\I_1$.
\end{proof}
Next, we define the notion of consecutive shortest augmenting paths.
Again, each phase of Cunningham's algorithm finds such an ordered
collection. 
\global\long\def\calP{\mathcal{P}}%
 
\begin{defn}
Let $S$ be a common independent set in $\I_{1}\cap\I_{2}$. Let $G(S)$
be the exchange graph with shortest path $2(\ell+1)$. Let $\calP=(p_{1},p_{2},\ldots,p_{k})$
be an ordered collection of {\em consective shortest augmenting
paths} if (a)$\ $each $p_{i}$ is of length $2(\ell+1)$, and (b)$\ $$p_{i}$
is a valid path in the exchange graph $G(S_{i})$ where $S_{i}$ is
obtained after augmenting along the paths $p_{1},\ldots,p_{i-1}$
in that order. 
\end{defn}

\begin{thm}
\label{thm:sap} Let $S$ be a common independent set in $\I_{1}\cap\I_{2}$.
Given an ordered collection $\calP$ of consecutive shortest augmenting
paths in $G(S)$, we can find an augmenting set $\Pi_{\ell}$ of width
$|\calP|$. 
\end{thm}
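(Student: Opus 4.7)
The plan is to construct $\Pi_\ell$ directly as the union of path vertices and verify the six conditions in the definition. I write each $p_i = (s, b_1^{(i)}, a_1^{(i)}, \ldots, a_\ell^{(i)}, b_{\ell+1}^{(i)}, t)$ inside $G(S_{i-1})$ (with $S_0 = S$ and $S_j$ being $S_{j-1}$ augmented along $p_j$), and define $A_j := \{a_j^{(i)} : i \in [k]\}$ and $B_j := \{b_j^{(i)} : i \in [k]\}$. To verify conditions (a) and (b), I would first show that every vertex $v$ appearing on $p_i$ at position $d$ satisfies $d_{G(S)}(s,v) = d$. This follows by iterating Lemma~\ref{lem:cunning-monot} across $G(S) \to G(S_1) \to \cdots \to G(S_{i-1})$: part~(ii) blocks any vertex whose distance ever reaches $2(\ell+1)$ from reappearing at a smaller distance, part~(i) forces $d_{G(S)}(s,v) \leq d_{G(S_{i-1})}(s,v) = d$, and the symmetric bound on $d(v,t)$ together with the constant $s$-$t$ distance $2(\ell+1)$ (by the definition of consecutive shortest paths) forces equality. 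Parity then places $A_j \subseteq D_{2j} \subseteq S$ and $B_j \subseteq D_{2j-1} \subseteq \overline{S}$, proving (a); disjointness across paths follows because any repeated vertex would need inconsistent $S$-memberships at two distinct augmentations, yielding $|A_j| = |B_j| = k$ and hence (b).

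Next, let $S_k := S \oplus \Pi_\ell \in \I_1 \cap \I_2$ be the set after all $k$ augmentations (existence of which is guaranteed by Theorem~\ref{thm:aug-means-aug} once (a)--(f) are proved, but is also clear here since $S_k$ is produced by the actual sequential augmentations). For (c), I would apply matroid exchange in $\M_1$ between $S$ and $S_k$: since $|S_k| - |S| = k$ and $S_k \setminus S = \bigcup_j B_j$, there exists $Q \subseteq \bigcup_j B_j$ of size $k$ with $S + Q \in \I_1$. Each $q \in Q$ then satisfies $S + q \in \I_1$, so $q$ is free in $\M_1$, i.e., $q \in D_1$; and since $B_j \cap D_1 = \emptyset$ for $j \geq 2$, this forces $Q \subseteq B_1$, which a size count promotes to $Q = B_1$, giving $S + B_1 \in \I_1$. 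Condition (d) follows by the symmetric argument in $\M_2$, using that the free-in-$\M_2$ elements are precisely the vertices at distance $1$ from $t$.

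For (e) and (f) I would invoke Lemma~\ref{lem:converseaugset} (whose proof uses only a single matroid at a time) after a base shift. For (e), take $S' := S + B_1 \in \I_1$ (given by (c)) as the new base and set $X_i := A_i$, $Y_i := B_{i+1}$; then by construction $S' - X_1 + Y_1 - \cdots - X_\ell + Y_\ell = S_k \in \I_1$, so the lemma yields $S' - A_j + B_{j+1} \in \I_1$ for every $j$, whence $S - A_j + B_{j+1} \in \I_1$ as a subset. The main obstacle is verifying the hypothesis ``no edge from $X_i$ to $Y_j$ with $j > i$'' in the $\M_1$-part of the exchange graph of $S'$, since the exchange graph can change nontrivially when the base shifts from $S$ to $S'$. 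I would discharge this via Lemma~\ref{lem:circuits}: for $a \in A_i \subseteq D_{2i}$ and $b \in B_{j+1} \subseteq D_{2j+1}$ with $j > i$, the absence of an edge in $G(S)$ forces $S - a + b \notin \I_1$, so $a$ lies outside the unique circuit $C$ of $S + b$ in $\M_1$; since $C \subseteq S + B_1 + b - a = S' - a + b$, the set $S' - a + b$ remains dependent and no such edge exists in the exchange graph of $S'$ either. Condition (f) follows by the symmetric argument in $\M_2$ with $S'' := S + B_{\ell+1} \in \I_2$ (from (d)) as the new base.
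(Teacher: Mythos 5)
Your proof is correct, and it takes the same core strategy as the paper (extract $A_k,B_k$ from the path vertices, use the monotonicity lemma to pin distances in $G(S)$, and invoke Lemma~\ref{lem:converseaugset}). Compared with the paper's very terse proof, yours is actually more complete in that you explicitly verify conditions (a)--(d) of the augmenting-set definition, in particular (c): the paper simply says the hypotheses of Lemma~\ref{lem:converseaugset} are satisfied without spelling out why $S+B_1\in\I_1$, whereas your matroid-exchange argument between $S$ and $S\oplus\Pi_\ell$ fills that gap cleanly.

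The one place where you take a genuinely different and unnecessarily complicated route is in (e) and (f). You base-shift to $S'=S+B_1$ so that the telescoping set in hypothesis~(b) of Lemma~\ref{lem:converseaugset} equals the full $S_k$, and then you need a circuit argument (via Lemma~\ref{lem:circuits}) to carry the ``no edge'' hypothesis over to the exchange graph at the new base. This works, but it is avoidable. Applying Lemma~\ref{lem:converseaugset} directly at base $S$ with $X_i=A_i$, $Y_i=B_{i+1}$, hypothesis (b) only requires that $S-A_1+B_2-\cdots-A_\ell+B_{\ell+1}\in\I_1$, and this set equals $S_k-B_1$, which is a subset of $S_k\in\I_1$ and hence automatically independent; the edge hypothesis is then a one-line consequence of the $D_{2i}$ vs.\ $D_{2j+1}$ distance gap in $G(S)$ that you already established. (Symmetrically for (f), the telescoped set is $S_k-B_{\ell+1}\subseteq S_k\in\I_2$.) So the base shift and the Lemma~\ref{lem:circuits} detour buy you nothing and add the delicate side-issue of working with an exchange graph at a base $S'$ that need not lie in $\I_1\cap\I_2$, where $G(S')$ is not formally defined in the paper. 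Finally, your disjointness claim (that a vertex cannot recur across two paths) is right but stated too loosely: the clean argument is that your distance-pinning already fixes each vertex's $G(S)$-distance, a vertex appearing on two paths would have to be an $A$-vertex (even distance) and later re-enter $S$ via a $B$-position (odd distance), and these distance parities are incompatible.
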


\begin{proof}
This follows almost immediately from Lemma~\ref{lem:converseaugset}.
Let $|\calP|=w$. Let the path \\
 $p_{i}:=(s,b_{1}^{(i)},a_{1}^{(i)},b_{2}^{(i)},\cdots,b_{\ell}^{(i)},a_{\ell}^{(i)},b_{\ell+1}^{(i)},t)$
where $b_{k}^{(i)}\in D_{2k-1}$ and $a_{k}^{(i)}\in D_{2k}$.

Define $X_{k}:=\{a_{k}^{(1)},a_{k}^{(2)},\cdots,a_{k}^{(w)}\}$
and $Y_{k}=\{b_{k+1}^{(1)},b_{k+1}^{(2)},\cdots,b_{k+1}^{(w)}\}$. Since
augmenting consecutive shortest paths preserve independence, and since augmentations don't introduce
new shortcuts by the monotonicity lemma (Lemma~\ref{lem:cunning-monot}),
we see that the sets satisfy the conditions of Lemma~\ref{lem:converseaugset} with $\I=\I_1$.

Similary, for  $\I_2$ we consider  $X_{k}:=\{a_{k}^{(1)},a_{k}^{(2)},\cdots,a_{k}^{(w)}\}$
and $Y_{k}=\{b_{k}^{(1)},b_{k}^{(2)},\cdots,b_{k}^{(w)}\}$ instead.
\end{proof}
Next, we show that any augmenting set of width $w$ can be peeled
out into a sequence of consecutive shortest augmenting paths. We start
with a few definitions.
\begin{defn}
Let $\Pi_{\ell}:=(B_{1},A_{1},B_{2},\cdots,B_{\ell},A_{\ell},B_{\ell+1})$
be an augmenting set in $G(S)$. Let $S'=S\oplus\Pi_{\ell}$ be as
defined in Theorem~\ref{thm:aug-means-aug}. We denote the exchange
graph of $G(S')$ as $G(S)|\Pi_{\ell}$. 
\end{defn}

\global\long\def\tPi{\widetilde{\Pi}}%
 
\global\long\def\tA{\widetilde{A}}%
 
\global\long\def\tB{\widetilde{B}}%
 
\begin{defn}
\label{def:subset} Let $S$ be a common independent set and $G(S)$
be the exchange graph with shortest path $2(\ell+1)$. Let $\Pi_{\ell}=(B_{1},A_{1},B_{2},\cdots,B_{\ell},A_{\ell},B_{\ell+1})$
and $\tPi_{\ell}=(\tB_{1},\tA_{1},\tB_{2},\cdots,\tB_{\ell},\tA_{\ell},\tB_{\ell+1})$
be two augmeting sets in $G(S)$. We say $\tPi_{\ell}$ {\em contains}
$\Pi_{\ell}$ if $B_{k}\subseteq\tB_{k}$ and $A_{k}\subseteq\tA_{k}$,
for all $k$. We use the notation $\Pi_{\ell}\subseteq\tPi_{\ell}$
to denote this. 
\end{defn}

\begin{defn}
Let $S$ be a common independent set and $G(S)$ be the exchange graph
with shortest path $2(\ell+1)$. Let $\Pi_{\ell}\subseteq\tPi_{\ell}$
be two augmenting sets, the former contained in the latter. Then we
use $\tPi_{\ell}\setminus\Pi_{\ell}$ to denote the sequence $(\tB_{1}\setminus B_{1},\tA_{1}\setminus A_{1},\cdots,\tB_{\ell+1}\setminus B_{\ell+1})$. 
\end{defn}


\begin{thm}
\label{thm:augconsistent} Let $\Pi_{\ell}\subseteq\tPi_{\ell}$ be
two augmenting sets in $G(S)$. Then, $\tPi_{\ell}\setminus\Pi_{\ell}$
is an augmenting set in $G(S)|\Pi_{\ell}$. 
\end{thm}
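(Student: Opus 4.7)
Set $S' := S \oplus \Pi_\ell$, which lies in $\I_1 \cap \I_2$ by Theorem~\ref{thm:aug-means-aug}, so $G(S)|\Pi_\ell$ is the exchange graph $G(S')$. Write $A'_k := \tA_k \setminus A_k$ and $B'_k := \tB_k \setminus B_k$, and let $A_+ := A_1 \cup \cdots \cup A_\ell$. The plan is to verify the six conditions (a)--(f) in the definition of an augmenting set for the sequence $\tPi_\ell \setminus \Pi_\ell = (B'_1, A'_1, \ldots, A'_\ell, B'_{\ell+1})$ relative to $G(S')$. The width condition (b) is immediate. Since $A_j, \tA_j \subseteq D_{2j}$ and $B_j, \tB_j \subseteq D_{2j-1}$ in $G(S)$ and the distance layers $D_i$ are disjoint, $A'_k$ is disjoint from every $A_j$ and hence $A'_k \subseteq S \setminus A_+ \subseteq S'$, and symmetrically $B'_k \subseteq \bar{S'}$.

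For the independence conditions (c)--(f) the plan is to combine Fact~\ref{fact:2} with a ``one-index replacement'' variant of Lemma~\ref{lem:augset}. Fix $k \in \{1, \ldots, \ell\}$ and define the modified sequence $A^*_j := A_j$ for $j \neq k$, $A^*_k := \tA_k$, together with $B^*_j := B_j$ for $j \neq k+1$, $B^*_{k+1} := \tB_{k+1}$. The proof of Lemma~\ref{lem:augset} transports verbatim to $(A^*, B^*)$ because (i) each adjacent-exchange hypothesis $S - A^*_j + B^*_{j+1} \in \I_1$ is either an original condition of $\Pi_\ell$ or, when $j = k$, the inherited condition $S - \tA_k + \tB_{k+1} \in \I_1$ from $\tPi_\ell$, and (ii) the distance-based no-arc property ``no arc in $G(S)$ from $A^*_i$ to $B^*_j$ with $j > i+1$'' survives since $A^*_i \subseteq D_{2i}$ and $B^*_j \subseteq D_{2j-1}$. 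This yields $S - (A^*_1 \cup \cdots \cup A^*_\ell) + (B^*_2 \cup \cdots \cup B^*_{\ell+1}) \in \I_1$. Applying Fact~\ref{fact:2} with $\M = \M_1$, base $S$, $X = A_+ \cup A'_k$, $Y = B^*_2 \cup \cdots \cup B^*_{\ell+1}$, and $Z = B_1$---whose hypotheses $S + B_1 \in \I_1$, $S - X + Y \in \I_1$, and $Y \subseteq \Span(S)$ in $\M_1$ all hold by inspection---produces exactly $S' - A'_k + B'_{k+1} \in \I_1$, which is condition (e). Condition (c) is the special case $k = 0$ with $A'_0 := \emptyset$ and $Z = \tB_1$, and conditions (d) and (f) follow from the symmetric argument in $\M_2$ using the corresponding $\M_2$-analog of Lemma~\ref{lem:augset} (which pairs $A_j$ with $B_j$ rather than $B_{j+1}$).

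For the distance condition (a), the lower bound $d_{G(S')}(s,v) \geq d_{G(S)}(s,v)$ for every $v \in \tA_k \cup \tB_k$ is monotonicity (Lemma~\ref{lem:cunning-monot}) applied to the augmentation $S \to S'$. For the matching upper bound, I induct on $k$: condition (c) gives $B'_1 \subseteq \free_1(S') = D'_1$; assuming $B'_k \subseteq D'_{2k-1}$, condition (f) together with the symmetric matroid-exchange property applied to the equal-cardinality independent sets $S'$ and $S' - A'_k + B'_k$ in $\M_2$ supplies, for each $v \in A'_k$, some $b \in B'_k$ with $S' - v + b \in \I_2$, i.e.\ an incoming arc $(b, v) \in G(S')$, placing $A'_k \subseteq D'_{2k}$; the identical argument using condition (e) then yields $B'_{k+1} \subseteq D'_{2k+1}$, closing the induction.

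The main obstacle is the one-index-replacement extension of Lemma~\ref{lem:augset}. Its validity rests on $\tA_k$ and $\tB_{k+1}$ inhabiting the same distance layers in $G(S)$ as $A_k$ and $B_{k+1}$, so both the exchange hypothesis and the distance-based no-arc property on which Lemma~\ref{lem:augset} relies are preserved under the swap. Once this extension is granted, Fact~\ref{fact:2} mechanically delivers conditions (c)--(f), and the distance bound (a) follows from a routine induction that converts the independence facts (e) and (f) into graph edges via the symmetric matroid-exchange property.
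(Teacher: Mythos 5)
Your proof is correct, but it routes around the paper's own argument in an interesting way. The paper establishes the independence conditions (c)--(f) by first observing that $S' + Q_1 - P_1 + \cdots + Q_{\ell+1} = S \oplus \tPi_\ell \in \I_1 \cap \I_2$, where $Q_k, P_k$ are your $B'_k, A'_k$, and then decomposing this single fact into per-layer exchanges via Lemma~\ref{lem:converseaugset} applied inside $G(S')$; the needed ``no short-cut arc'' hypothesis in $G(S')$ is argued via monotonicity. You instead do all the matroid work inside $G(S)$: you observe that the argument of Lemma~\ref{lem:augset} is local enough to tolerate enlarging exactly one layer pair to $(\tA_k, \tB_{k+1})$ (the per-layer cardinality balance $|A^*_j| = |B^*_{j+1}|$ and the layer-based no-arc property both survive the swap), and then a single clean invocation of Fact~\ref{fact:2} converts that into the desired statement about $S'$. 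This has the advantage of producing each condition (c)--(f) individually and explicitly, never needing to reason about edges of the unknown graph $G(S')$ in the independence part of the argument. You also spell out the distance-layer condition (a) by an explicit forward induction (monotonicity for the lower bound, a basis-exchange arc for the matching upper bound), where the paper leaves that induction implicit. Two small remarks: when you pass from $A'_k \subseteq D'_{2k}$ to $B'_{k+1} \subseteq D'_{2k+1}$, the exchange is not literally ``identical'' in direction --- for a fixed $b \in B'_{k+1}$ you need the dual form of basis exchange (for every element of one basis not in the other, some element of the symmetric difference can be swapped in), or equivalently the fundamental-circuit argument: $\circ_1(S'+b)$ must meet $A'_k$ because $S' - A'_k + b \in \I_1$; and when you invoke the $\M_2$-analog of Lemma~\ref{lem:augset} you should note the induction there naturally runs from $k = \ell$ down to $1$ so that the no-arc observation becomes ``no arc from $B_k$ into $\calA_{\geq k+1}$'', which is what the distance layers forbid. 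Finally, your appeal to Lemma~\ref{lem:cunning-monot} for the augmentation $S \to S'$ (a set-augmentation, not a single shortest path) carries the same implicit dependence on Theorem~\ref{thm:Equival-augmenting-sets-paths} that the paper's own proof does; neither write-up makes the width induction that discharges it explicit.
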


\begin{proof}
Let $P_{k}:=\tA_{k}\setminus A_{k}$ and let $Q_{k}=\tB_{k}\setminus B_{k}$.
Let $S'=S+B_{1}-A_{1}+B_{2}-A_{2}+\cdots+B_{\ell+1}$. Note that this
is $S\oplus\Pi_{\ell}$. For brevity, let $G':=G(S')=G(S)|\Pi_{\ell}$.
Let the layers of $G'$ be denoted as $D'_{1},D'_{2},\ldots$. 
 Furthermore, since $\tPi_{\ell}$ is an augmenting set, we get that
\[
S'+Q_{1}-P_{1}+Q_{2}+\cdots+Q_{\ell+1}\in\I_{1}\cap\I_{2}
\]
as well.

\global\long\def\calQ{\mathcal{Q}}%
 Using the above, we can show that each $P_{k}\subseteq D'_{2k}$
and $Q_{k}\subseteq D'_{2k-1}$. For instance, we can show $S'+Q_{1}\in\I_{1}$
(which would put $Q_{1}\subseteq D'_{2k-1}$). To see this, let $\calP:=P_{1}+\cdots+P_{\ell}$
and $\calQ=Q_{2}+\cdots+Q_{\ell+1}$. Note, $|\calP|=|\calQ|$ and
$S'+Q_{1}-\calP+\calQ\in\I_{1}$. Now observe $\calQ\cap D'_{1}=\emptyset$;
this follows from Cunningham's monotonicity lemma (Lemma~\ref{lem:cunning-monot}).
Thus, $\calQ\in\Span_{1}(S')$. Thus, the independence of $S+Q_{1}-\calP+\calQ$
implies\footnote{$|S|+|Q_{1}|=\rk_{1}(S+Q_{1}-\calP+\calQ)\leq\rk_{1}(S+Q_{1}+\calQ)\leq\rk_{1}(S+Q_{1})$}
$S+Q_{1}\in\I_{1}$. For the general case one proceeds inductively.
This proves condition (a) of the augmenting set definition.

Condition (b) holds since $\tPi$ and $\Pi$ were valid augmenting
sets. Condition (c),(d) is proved as above (the base case was (c)).

Finally, condition (e) and (f) follow from Lemma~\ref{lem:converseaugset}.
To see this, we need the observation that there is no edge from $P_{i}$
to $Q_{j}$ for $j>i+1$ and from $Q_{j}$ to $P_{i}$ for $i>j$,
in $G'$. To see this, this was true in $G(S)$ (shortest path property).
And since the levels do not change, and the monotonicity lemma (Lemma~\ref{lem:cunning-monot})
this continues to hold. Now Lemma~\ref{lem:converseaugset} (with
$X_{i}$'s and $Y_{i}$'s properly defined) implies conditions (e)
and (f).

Thus, $\tPi_{\ell}\setminus\Pi_{\ell}$ is a valid augmenting set
in $G'=G(S)|\Pi_{\ell}$. %
%
%
\end{proof}
Now we are armed to prove the following theorem. 
\begin{thm}
\label{thm:Equival-augmenting-sets-paths}\label{thm:equiv} Let $S\in\I_{1}\cap\I_{2}$
and $G(S)$ be the exchange graph. Let $\Pi_{\ell}$ be an augmenting
set of width $w$. Then there exists an ordered collection $\calP$
of $w$ consecutive shortest augmenting paths. 
\end{thm}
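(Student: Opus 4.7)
The plan is to proceed by induction on the width $w$ of $\Pi_\ell$. The base case $w=1$ is immediate: a width-$1$ augmenting set is a sequence of singletons $(\{b_1\},\{a_1\},\ldots,\{b_{\ell+1}\})$, and the six defining conditions of an augmenting set coincide exactly with the edge conditions for the path $s,b_1,a_1,\ldots,b_{\ell+1},t$ in $G(S)$. This path has length $2(\ell+1) = d_{G(S)}(s,t)$, so it is a shortest augmenting path, and by Lemma~\ref{lem:basic_sap} augmenting along it yields a valid common independent set, giving the desired collection of $w=1$ consecutive shortest augmenting paths.

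For the inductive step with $w \geq 2$, the key task is to extract a width-$1$ augmenting sub-set $\pi \subseteq \Pi_\ell$ in the sense of Definition~\ref{def:subset}. I plan to build $\pi$ via the matroid symmetric exchange theorem (Brualdi): for each $k \in [\ell]$, the sets $S$ and $S - A_k + B_{k+1}$ are independent in $\M_1$ and have equal cardinality, so there is a bijection $\mu_k : A_k \to B_{k+1}$ with $S - a + \mu_k(a) \in \I_1$ for all $a \in A_k$. Similarly, applying symmetric exchange to $S$ and $S - A_k + B_k$ in $\M_2$ produces a bijection $\nu_k : A_k \to B_k$ with $S - a + \nu_k(a) \in \I_2$. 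Now fix any $a_1 \in A_1$ and iteratively define $b_1 := \nu_1(a_1)$, and for $k = 1,\ldots,\ell-1$ set $b_{k+1} := \mu_k(a_k)$ and $a_{k+1} := \nu_{k+1}^{-1}(b_{k+1})$; finally let $b_{\ell+1} := \mu_\ell(a_\ell)$. Each consecutive pair in the sequence $s,b_1,a_1,b_2,\ldots,b_{\ell+1},t$ is then an arc of $G(S)$: the interior arcs are given by the matching properties of $\mu_k$ and $\nu_k$, while $s \to b_1$ and $b_{\ell+1} \to t$ come from $S + B_1 \in \I_1$ and $S + B_{\ell+1} \in \I_2$ respectively. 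Consequently $\pi := (\{b_1\},\{a_1\},\ldots,\{b_{\ell+1}\})$ is a width-$1$ augmenting set with $\pi \subseteq \Pi_\ell$.

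Having extracted $\pi$, I invoke Theorem~\ref{thm:augconsistent} on the pair $\pi \subseteq \Pi_\ell$ to conclude that $\Pi_\ell \setminus \pi$ is an augmenting set of width $w-1$ in the exchange graph $G(S)|\pi = G(S \oplus \pi)$. The induction hypothesis applied within $G(S\oplus\pi)$ then produces an ordered collection $(p_2,\ldots,p_w)$ of $w-1$ consecutive shortest augmenting paths. Viewing $\pi$ itself as a shortest augmenting path $p_1$ in $G(S)$ (whose augmentation produces exactly $S\oplus\pi$) and prepending it to this collection yields the desired ordered sequence $(p_1,p_2,\ldots,p_w)$ of $w$ consecutive shortest augmenting paths in $G(S)$.

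The main obstacle is showing that the iterative construction of $\pi$ is well-defined and gives a genuine shortest augmenting path. Well-definedness follows from the fact that each $\mu_k$ and each $\nu_{k+1}$ is a bijection onto $B_{k+1}$, so $\nu_{k+1}^{-1}(b_{k+1})$ is a unique element of $A_{k+1}$; the length of the extracted path is exactly $2(\ell+1) = d_{G(S)}(s,t)$ because the layering $A_k \subseteq D_{2k}$ and $B_k \subseteq D_{2k-1}$ built into the augmenting-set definition forces the path through the distance layers in order. Everything else reduces mechanically to Theorem~\ref{thm:aug-means-aug} and Theorem~\ref{thm:augconsistent}, which have already been established.
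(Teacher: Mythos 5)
Your proof is correct and follows essentially the same strategy as the paper's: extract a width-$1$ augmenting set $\pi \subseteq \Pi_\ell$, apply Theorem~\ref{thm:augconsistent} to peel it off as a width-$(w-1)$ augmenting set in $G(S)|\Pi_\ell$, and iterate. The only technical difference is that you extract the width-$1$ sub-path via Brualdi's bijective symmetric exchange theorem (which, to apply, implicitly uses that $B_k,B_{k+1}\subseteq\Span(S)$ in the relevant matroid so that $S$ and $S-A_k+B_{k+1}$, resp.\ $S-A_k+B_k$, are bases of the restriction), whereas the paper argues the needed edge existence layer-by-layer directly, using the contrapositive of Fact~\ref{fact:1} for $B_k\to A_k$ edges and a short rank/exchange argument for $A_k\to B_{k+1}$ edges --- a lighter tool that suffices since only a single path, not a full matching, needs to be traced.
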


\begin{proof}
We first show there is \emph{one} shortest augmenting path hitting
every set of $\Pi_{\ell}$ exactly once. We augment on it and apply
Theorem~\ref{thm:augconsistent}.

Let $\Pi_{\ell}=(B_{1},A_{1},B_{2},\ldots,B_{\ell},A_{\ell},B_{\ell+1})$.
We claim there is a path $p=(b_{1},a_{1},b_{2},\ldots,b_{\ell},a_{\ell},b_{\ell+1})$
such that $b_{k}\in A_{k}$ and $a_{k}\in A_{k}$. Indeed, to prove
this it suffice to show that 

\setenumerate{noitemsep,label=(\roman*)}
\begin{enumerate}
\item for $k=1,\ldots,\ell$, for any $b\in B_{k}$, there is an edge $(b,a)\in G(S)$
to some $a\in A_{k}$ 
\item for $k=1,\ldots,\ell$, for any $a\in A_{k}$, there is an edge $(a,b)\in G(S)$
to some $b\in B_{k+1}$.
\end{enumerate}
For (i), notice that $S-A_{k}+B_{k}\in\I_{2}$ (condition (f) of augmenting
set), and so $S-A_{k}+b\in\I_{2}$. The contrapositive of Fact~\ref{fact:1}
implies there must be an edge from $b$ to some vertex in $A_{k}$.

For (ii) notice that since $S-A_{k}+B_{k+1}\in\I_{1}$ (condition
(e) of augmenting set), for any $a\in A_{k}$, we have $\rk_{1}(S-a+B_{k+1})\geq\rk_{1}(S-A_{k}+B_{k+1})=|S|$
since $|A_{k}|=|B_{k+1}|$. Since $\rk_{1}(S-a)=|S|-1$, by exchange
property of matroids there must exist $b\in B_{k+1}$ such that $S-a+b\in\I_{1}$.
That is, $(a,b)\in G(S)$.

\noindent To complete the proof of the theorem, we observe that augmenting
paths are width-$1$ augmenting sets. We apply Theorem~\ref{thm:augconsistent}
to get $w$ consecutive paths. 
\end{proof}

For our algorithms we will not be interested in finding arbitrary
augmenting sets, but in finding \textit{maximal} augmenting sets.

\begin{defn}[Maximal Augmenting Sets]
Let $S\in\I_{1}\cap\I_{2}$ and let $G(S)$ be the corresponding
exchange graph with shortest augmenting path of length $2(\ell+1)$.
An augmenting set $\Pi_{\ell}$ is called \emph{maximal} if there
exists no other augmenting set $\tPi_{\ell}$ containing $\Pi_{\ell}$. 
\end{defn}

\begin{thm}
Let $S\in\I_{1}\cap\I_{2}$ and let $G(S)$ be the corresponding exchange
graph with shortest augmenting path of length $2(\ell+1)$. Let $\Pi_{\ell}$
be a maximal augmenting set. Then there is no augmenting path of length
$2(\ell+1)$ in $G(S)|\Pi_{\ell}$. %
\end{thm}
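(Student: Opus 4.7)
I would proceed by contradiction: suppose there exists an augmenting path $p$ of length $2(\ell+1)$ in $G(S)|\Pi_\ell = G(S')$ where $S' = S \oplus \Pi_\ell$, and write $p = (s, b'_1, a'_1, b'_2, \ldots, a'_\ell, b'_{\ell+1}, t)$. The plan is to show that the sequence $\tPi_\ell$ defined by $\tB_k := B_k + b'_k$ and $\tA_k := A_k + a'_k$ is an augmenting set in $G(S)$ that strictly contains $\Pi_\ell$, contradicting maximality.

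First, I would establish the distance-layer conditions. By the monotonicity lemma (Lemma~\ref{lem:cunning-monot}), we have $d_{G(S')}(s,v)\geq d_{G(S)}(s,v)$ and $d_{G(S')}(v,t)\geq d_{G(S)}(v,t)$ whenever the latter are less than $d_{G(S)}(s,t)=2(\ell+1)$; in particular $d_{G(S')}(s,t)\geq 2(\ell+1)$. Since $p$ has length exactly $2(\ell+1)$, every vertex $b'_k$ satisfies $d_{G(S)}(s,b'_k)\le d_{G(S')}(s,b'_k)=2k-1$ and symmetrically $d_{G(S)}(b'_k,t)\le 2(\ell+1)-(2k-1)$, and adding these bounds forces equality in both, so $b'_k\in D_{2k-1}$; likewise $a'_k\in D_{2k}$. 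This gives condition (a) for $\tPi_\ell$.

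Next I would verify disjointness with $\Pi_\ell$, which gives strict containment and condition (b). Since $b'_k\in\bar{S'}$ while $B_j\subseteq S'$, we have $b'_k\notin B_j$; and since $b'_k\in D_{2k-1}$ while $A_j\subseteq D_{2j}$, $b'_k$ cannot lie in any $A_j$ either. An identical argument applies to $a'_k$. Hence $\tPi_\ell$ has width $w+1$ with all sets properly disjoint.

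The main obstacle is the matroid-independence conditions (c)--(f) for $\tPi_\ell$ in $G(S)$. Here is how I would handle (c): $S'+b'_1\in\I_1$ (edge from $s$ in $G(S')$) expands to
\[
S+B_1+b'_1-(A_1+\cdots+A_\ell)+(B_2+\cdots+B_{\ell+1})\in\I_1.
\]
Each $B_k$ with $k\ge 2$ consists of elements at distance $\ge 3$ in $G(S)$, so each lies in $\Span_1(S)$; by the standard closure property, the union $\calB:=B_2+\cdots+B_{\ell+1}$ satisfies $\rk_1(S+\calB)=|S|$. A short rank chase then shows $|S|+w+1\le\rk_1(S+B_1+b'_1+\calB)=\rk_1(S+B_1+b'_1)\le|S|+w+1$, proving $S+\tB_1\in\I_1$. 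Condition (d) is symmetric using the edge $(b'_{\ell+1},t)$ and the analogous span statement in $\M_2$. For (e) and (f), I would apply the same submodularity/span-based argument, packaged as Fact~\ref{fact:2}, to combine the condition $S'-a'_k+b'_{k+1}\in\I_1$ (respectively $S'-a'_k+b'_k\in\I_2$) with the existing $S-A_k+B_{k+1}\in\I_1$ (resp.\ $S-A_k+B_k\in\I_2$) to obtain $S-\tA_k+\tB_{k+1}\in\I_1$ (resp.\ $S-\tA_k+\tB_k\in\I_2$). Once all six conditions are checked, $\tPi_\ell$ is a valid augmenting set in $G(S)$ strictly containing $\Pi_\ell$, contradicting maximality.
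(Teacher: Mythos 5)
Your proposal unrolls what the paper compresses into a single invocation of the equivalence machinery: the paper converts $\Pi_\ell$ into $w$ consecutive shortest augmenting paths $\calP$ via Theorem~\ref{thm:Equival-augmenting-sets-paths}, observes that $(\calP,p)$ is then an ordered collection of consecutive shortest augmenting paths in $G(S)$, and invokes Theorem~\ref{thm:sap} to produce a width-$(w+1)$ augmenting set containing $\Pi_\ell$. Your explicit $\tB_k=B_k+b'_k$, $\tA_k=A_k+a'_k$ is exactly what that theorem would output, and your layer/disjointness checks (a)--(b) and the span/rank argument for (c)--(d) are correct. Note, however, that applying Lemma~\ref{lem:cunning-monot} to the transition $S\to S'$ already presupposes that $\Pi_\ell$ can be realized as a sequence of consecutive shortest-path augmentations, which is precisely Theorem~\ref{thm:Equival-augmenting-sets-paths}, so your route is less independent of the equivalence theorems than it appears.

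The genuine gap is in conditions (e) and (f). Fact~\ref{fact:2} cannot produce $S-\tA_k+\tB_{k+1}\in\I_1$ from $S'-a'_k+b'_{k+1}\in\I_1$ and $S-A_k+B_{k+1}\in\I_1$: it only lets you adjoin a set $Z$ that is \emph{free over $S$} to a known $S$-swap, whereas here you must transport a single local exchange across the large perturbation from $S$ to $S'$. Indeed $(a'_k,b'_{k+1})$ need not even be an edge of $G(S)$, so there is no direct span/submodularity shortcut. The tool actually required is Lemma~\ref{lem:converseaugset}: because $p$ is an augmenting path of $G(S')$, one has $S'-(a'_1+\cdots+a'_\ell)+(b'_2+\cdots+b'_{\ell+1})\in\I_1$, and after dropping $B_1\subseteq S'$ and expanding $S'$ this reads $S-(A_1{+}a'_1)+(B_2{+}b'_2)-\cdots-(A_\ell{+}a'_\ell)+(B_{\ell+1}{+}b'_{\ell+1})\in\I_1$. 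Applying Lemma~\ref{lem:converseaugset} with $X_i=A_i+a'_i\subseteq D_{2i}$ and $Y_i=B_{i+1}+b'_{i+1}\subseteq D_{2i+1}$ (the no-forward-edge hypothesis holds by the layer structure of $G(S)$) then yields (e), and the symmetric argument in $\M_2$ yields (f). That lemma is the engine underneath Theorem~\ref{thm:sap}, so once the correct tool is substituted the two proofs are really doing the same work; the paper's version is simply shorter because it treats Theorem~\ref{thm:sap} as a black box.
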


\begin{proof}
This is a corollary of Theorem~\ref{thm:sap} and Theorem~\ref{thm:equiv}.
Let $\calP$ be the collection of paths obtained by invoking Theorem~\ref{thm:equiv}
on $\Pi_{\ell}$. If there is another shortest augmenting path $p\in G(S)|\Pi_{\ell}$,
then $(\calP,p)$ is an ordered collection of consecutive shortest
augmenting paths in $G(S)$. Thus Theorem~\ref{thm:sap} invoked
on this returns $\tPi_{\ell}$ which contains $\Pi_{\ell}$. 
\end{proof}
The above lemma shows that a single phase of Cunningham's algorithm
corresponds to finding maximal augmenting sets. This is what we do
in Section~\ref{subsec:TheAugSetsAlgo}, but we need
one more concept before: \textit{partial augmenting sets}.

\subsubsection{Partial Augmenting Sets\label{subsec:Partial-Aug-Sets-Approach}}

In order to find maximal augmenting sets, our algorithm will leverage objects which are not augmenting sets at all. These are what we call
\emph{partial augmenting sets}, and the final algorithm will work work by
incrementally finding better partial augmenting sets\textit{\emph{,
from which augmenting sets can be extracted.}} 
\begin{defn}[Partial Augmenting Sets]
Let $S\in\I_{1}\cap\I_{2}$ and $G(S)$ be the corresponding exchange
graph with shortest path $2(\ell+1)$. A collection of sets $\Phi_{\ell}:=(B_{1},A_{1},B_{2},A_{2},\ldots,A_{\ell},B_{\ell+1})$
 form a \emph{partial augmenting set} if the following conditions
are satisfied. 
\global\long\def\width{w}%

\setenumerate{noitemsep,label=(\alph*)}
\begin{enumerate}
\item For $1\leq k\leq\ell+1$, we have $A_{k}\subseteq D_{2k}$ and $B_{k}\subseteq D_{2k-1}$.
\item $|B_{1}|\ge|A_{1}|\ge|B_{2}|\ge\cdots\ge|B_{\ell+1}|$. 
\item $S+B_{1}\in\mathcal{I}_{1}$ 
\item $S+B_{\ell+1}\in\mathcal{I}_{2}$ 
\item For all $1\leq k\leq\ell$, we have $S-A_{k}+B_{k+1}\in\mathcal{I}_{1}$. 
\item For all $1\leq k\leq\ell$, we have $\rk_{2}(S-A_{k}+B_{k})=\rk_{2}(S)$.
\end{enumerate}
\end{defn}

As in Definition~\ref{def:subset}, we say $\Pi_{\ell}\subseteq\Phi_{\ell}$
if each ``coordinate'' of $\Pi_{\ell}$ is a subset of the corresponding
``coordinate'' of $\Phi_{\ell}$. The following lemma gives an efficient
algorithm to convert any partial augmenting set $\Phi_{\ell}$ into
a true augmenting set $\Pi_{\ell}$. For now, the reader should consider
$\Pi'_{\ell}$ to be a string of empty sets; we will invoke this lemma
later with non-empty $\Pi'_{\ell}$. 

\begin{lem}
\label{lem:partial-to-full-aug-sets} Given a partial augmenting set
$\Phi_{\ell}=(B_{1},A_{1},B_{2},\cdots,B_{\ell},A_{\ell},B_{\ell+1})$
that contains an augmenting set $\Pi'_{\ell}=(B'_{1},A'_{1},\ldots,B'_{\ell})$,
in $O(n\cdot\Tind)$-time we can find an augmenting set $\tPi_{\ell}=(\tB_{1},\tA_{1},\tB_{2},\cdots,\tB_{\ell},\tA_{\ell},\tB_{\ell+1})$
such that

\setenumerate{noitemsep,label=(\alph*)}
\begin{enumerate}
\item $\Pi'_{\ell}\subseteq\tPi_{\ell}\subseteq\Phi_{\ell}$, and 
\item $\tB_{\ell+1}=B_{\ell+1}$.
\end{enumerate}
\end{lem}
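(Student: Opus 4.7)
The plan is to construct $\tilde{\Pi}_\ell$ by a single right-to-left sweep, trimming each layer of $\Phi_\ell$ down to exactly $w$ elements while preserving all augmenting-set conditions and the containment $\Pi'_\ell\subseteq\tilde{\Pi}_\ell$. Set $\tilde{B}_{\ell+1}:=B_{\ell+1}$ so that conclusion (b) of the lemma holds and $w:=|B_{\ell+1}|$ becomes the width of the output. Inductively, suppose we have already fixed $\tilde{B}_{k+1}\subseteq B_{k+1}$ with $|\tilde{B}_{k+1}|=w$ and $B'_{k+1}\subseteq\tilde{B}_{k+1}$; we will now pick $\tilde{A}_k$ and then $\tilde{B}_k$.

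To pick $\tilde{A}_k$, observe that $J:=S-A_k+\tilde{B}_{k+1}\in\I_1$ (it is a subset of $S-A_k+B_{k+1}\in\I_1$ by condition (e) of the partial augmenting set) and has size $|S|-|A_k|+w$. I will greedily try each element of $A_k\setminus A'_k$, adding it to $J$ whenever it keeps $J$ independent in $\M_1$, stopping as soon as $|J|=|S|$; then $\tilde{A}_k$ is defined as the subset of $A_k$ that was \emph{not} added. The greedy is guaranteed to reach size $|S|$ because
\[
\rk_1\bigl(J\cup(A_k\setminus A'_k)\bigr)=\rk_1\bigl(S-A'_k+\tilde{B}_{k+1}\bigr)\geq\rk_1\bigl(S-A'_k+B'_{k+1}\bigr)=|S|,
\]
where the last equality uses that $\Pi'_\ell$ is itself an augmenting set. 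Consequently $|\tilde{A}_k|=w$, $A'_k\subseteq\tilde{A}_k$, and $S-\tilde{A}_k+\tilde{B}_{k+1}\in\I_1$, at a cost of $O(|A_k\setminus A'_k|)$ independence queries.

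To pick $\tilde{B}_k$, start from $J':=S-\tilde{A}_k+B'_k$, which is independent in $\M_2$ since it is a subset of $S-A'_k+B'_k\in\I_2$. Extend it greedily by elements of $B_k\setminus B'_k$ until it has size $|S|$. This is feasible because condition (f) of the partial augmenting set gives $\rk_2(S-A_k+B_k)=|S|$, and $\tilde{A}_k\subseteq A_k$ then implies $\rk_2(S-\tilde{A}_k+B_k)\geq|S|$. The added elements together with $B'_k$ form $\tilde{B}_k$, yielding $|\tilde{B}_k|=w$, $B'_k\subseteq\tilde{B}_k$, and $S-\tilde{A}_k+\tilde{B}_k\in\I_2$, again in $O(|B_k\setminus B'_k|)$ queries.

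Once the sweep finishes, conditions (a), (d), (e), (f) of an augmenting set hold by construction, condition (c) follows from $\tilde{B}_1\subseteq B_1$ together with $S+B_1\in\I_1$ and downward closure, and the nested containment $\Pi'_\ell\subseteq\tilde{\Pi}_\ell\subseteq\Phi_\ell$ is immediate from the construction. The total work is $O\bigl(\sum_k(|A_k|+|B_k|)\cdot\Tind\bigr)=O(n\cdot\Tind)$ since the $A_k$'s and $B_k$'s are disjoint subsets of $V$. The main delicate point is constraining the two greedy extensions to $A_k\setminus A'_k$ and $B_k\setminus B'_k$ respectively; feasibility under this restriction is precisely what the two rank lower bounds above---one coming from $\Pi'_\ell$ being an augmenting set, the other from condition (f) of $\Phi_\ell$---are tailored to guarantee.
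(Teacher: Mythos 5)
Your proof is correct and takes essentially the same backwards-sweep approach as the paper: fix $\tB_{\ell+1}=B_{\ell+1}$, then for $k=\ell,\ldots,1$ trim $A_k$ down to $\tA_k$ by exchanging against $\I_1$ and then trim $B_k$ down to $\tB_k$ by exchanging against $\I_2$, always keeping $\Pi'_\ell$ inside. You make the paper's appeal to ``the exchange property'' concrete by spelling out the greedy and supplying the rank lower bounds that guarantee each greedy reaches size $|S|$, which is a clean and correct way to justify the step; the substance is otherwise the same.
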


\begin{proof}
We work backwards from $\tB_{\ell+1}=B_{\ell+1}$. Let $w=|B_{\ell+1}|$.
Suppose we have constructed sets $\tB_{\ell+1},\tA_{\ell},\cdots,\tB_{k+1}$.
Now, since $S-A_{k}+\tB_{k+1}\subseteq S-A_{k}+B_{k+1}$, and since
the latter lies in $\I_{1}$ (property (e) of partial augmenting sets),
we have $S-A_{k}+\tB_{k+1}\in\I_{1}$. We also have $S-A_{k}'+B_{k}'\in\mathcal{I}_{1}$.
Therefore, by the exchange property, we can find a set $\tA_{k}$
of size $|\tA_{k}|=|\tB_{k+1}|=w$ such that $A'_{k}\subseteq\tA_{k}\subseteq A_{k}$
and $S-\tA_{k}+\tB_{k+1}\in\I_{1}$. The time taken to do so is $O(|A_{k}|)$
independence-query oracles. 

Now, suppose we have constructed $\tB_{\ell+1},\tA_{\ell},\cdots,\tA_{k}$
with the desired property. Then since $\tA_{k}\subseteq A_{k}$ and
$\rk_{2}(S-A_{k}+B_{k})=\rk_{2}(S)$, we get $\rk_{2}(S-\tA_{k}+B_{k})\geq\rk_{2}(S)$.
Also, since $S-A'_{k}+B'_{k}\in\I_{2}$, we get $S-\tA_{k}+B'_{k}\in\I_{2}$
as well. Thus, we can select a subset $\tB_{k}$ such that $B'_{k}\subseteq\tB_{k}\subseteq B_{k}$
and $S-\tA_{k}+\tB_{k}\in\I_{2}$ and has size $|S|$. That is, $|\tB_{k}|=|\tA_{k}|=w$.
The time taken to find this is $O(|B_{k}|)$ independence oracle queries.

The total running time is $\sum_{k}O(|A_{k}|+|B_{k}|)$ many independence
oracle calls which is $O(n\cdot\Tind)$-time. 
\end{proof}

Now we are ready to prove a {\em key} property which relates the
width of any two maximal augmenting sets. This will be crucial to
give the subquadratic approximation algorithm in Section$\ $\ref{subsec:FasterAlgo}.
 
\global\long\def\tw{\widetilde{w}}%
 
\begin{lem}
\label{lem:maximal-aug-sets-size-ratio} Let $S\in\I_{1}\cap\I_{2}$
and $G(S)$ be the corresponding graph with shortest augmenting path of
length $2(\ell+1)$. Let $\Pi_{\ell}$ and $\tPi_{\ell}$ be two maximal
augmenting sets of width $w$ and $\tw$. Then, $\tw\leq(2\ell+4)w$. 
\end{lem}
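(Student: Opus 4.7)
The plan is to relate $w$ and $\tw$ through the enlarged common independent sets $S':=S\oplus\Pi_\ell$ and $\widetilde{S}:=S\oplus\widetilde{\Pi}_\ell$, which both lie in $\I_1\cap\I_2$ by Theorem~\ref{thm:aug-means-aug} and satisfy $|S'|=|S|+w$ and $|\widetilde{S}|=|S|+\tw$.

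First, the maximality of $\Pi_\ell$ together with the theorem immediately preceding this one implies that no augmenting $s$-$t$ path of length $2(\ell+1)$ exists in $G(S)|\Pi_\ell=G(S')$. Hence every augmenting $s$-$t$ path in $G(S')$ has length at least $2(\ell+2)$ edges, equivalently at least $2\ell+3$ intermediate vertices in $V$.

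Second, I would invoke the standard symmetric-difference decomposition that underlies the proof of Lemma~\ref{lem:augm-path-length}: since $|\widetilde{S}|-|S'|=\tw-w>0$, the symmetric difference $S'\triangle\widetilde{S}$ supports $\tw-w$ vertex-disjoint augmenting $s$-$t$ paths of $G(S')$ whose intermediate vertices lie entirely in $S'\triangle\widetilde{S}$. Combined with the per-path lower bound from the previous paragraph, this gives
\[
(\tw-w)(2\ell+3)\;\leq\;|S'\triangle\widetilde{S}|.
\]
On the other hand, setting $A:=A_1\cup\cdots\cup A_\ell$ and $B:=B_1\cup\cdots\cup B_{\ell+1}$ (and likewise $\widetilde{A},\widetilde{B}$), the fact that each $A_k$ lies in the distance layer $D_{2k}$ (and each $B_k\subseteq D_{2k-1}$), together with the pairwise disjointness of the distance layers, yields $|A|=\ell w$, $|B|=(\ell+1)w$ and analogous equalities with tildes. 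Since $A\cup\widetilde{A}\subseteq S$ while $B\cup\widetilde{B}\subseteq\overline{S}$, we get $S'\triangle\widetilde{S}=(A\triangle\widetilde{A})\sqcup(B\triangle\widetilde{B})$ and therefore
\[
|S'\triangle\widetilde{S}|\;\leq\;\ell(w+\tw)+(\ell+1)(w+\tw)\;=\;(2\ell+1)(w+\tw).
\]
Combining these two inequalities and rearranging gives $2\tw\leq(4\ell+4)w$, that is $\tw\leq(2\ell+2)w\leq(2\ell+4)w$, as claimed.

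The main technical obstacle is step two: producing $\tw-w$ vertex-disjoint augmenting paths of $G(S')$ supported on $S'\triangle\widetilde{S}$. While this is classical in matroid intersection (it is exactly the decomposition driving Cunningham's proof of Lemma~\ref{lem:augm-path-length}), it may need an explicit derivation here, for instance by first extracting from $\widetilde{\Pi}_\ell\setminus\Pi_\ell$ (via the exchange property) an augmenting set of width $\tw-w$ in $G(S')$ and then unrolling it into consecutive shortest augmenting paths via Theorem~\ref{thm:equiv}.
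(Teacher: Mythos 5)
Your approach is genuinely different from the paper's, and would even yield the sharper constant $(2\ell+2)$ if it went through --- but the step you flag as ``the main technical obstacle'' is a real gap, not merely a detail to spell out. The claim that $S'\triangle\widetilde{S}$ supports $\tw-w$ vertex-disjoint $s$-$t$ paths of $G(S')$ is \emph{not} the decomposition underlying Lemma~\ref{lem:augm-path-length}: that lemma only asserts the existence of one short augmenting path, proved by a counting argument over BFS layers, not by splitting a symmetric difference into disjoint paths. The XOR-decomposition phenomenon you invoke is real for bipartite matching (where $M\triangle M'$ is a union of alternating paths and cycles), but in matroid intersection the exchange graph of $S'$ restricted to $S'\triangle\widetilde{S}$ does not obviously decompose, and nothing in the paper or in its cited sources supplies a Menger-type guarantee that $|\widetilde{S}|-|S'|$ disjoint $s$-$t$ paths exist inside the symmetric difference while avoiding $S'\cap\widetilde{S}$. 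Establishing that looks at least as hard as the lemma itself.

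The proposed repair also does not go through as written. Since $\Pi_\ell$ and $\tPi_\ell$ are both maximal, neither contains the other, so $\tPi_\ell\setminus\Pi_\ell$ is not defined (Definition~\ref{def:subset} requires containment), and Theorem~\ref{thm:augconsistent} cannot be invoked to hand you an augmenting set of width $\tw-w$ in $G(S')$. Even granting such an object: $G(S')$ has shortest distance at least $2(\ell+2)$, so its augmenting sets have $\ell+2$ layer pairs rather than $\ell+1$, and their support is not controlled to lie in $S'\triangle\widetilde{S}$; moreover Theorem~\ref{thm:equiv} unrolls an augmenting set into \emph{consecutive} shortest augmenting paths, each valid only in the successive graphs $G(S'_0),G(S'_1),\ldots$ --- they are pairwise vertex-disjoint, but they are not simultaneous $s$-$t$ paths of the fixed graph $G(S')$, and they need not stay inside $S'\triangle\widetilde{S}$, so the count against $|S'\triangle\widetilde{S}|$ still fails to close. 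By contrast the paper proves the lemma by a direct construction: assuming $\tw>(2\ell+4)w$, it interleaves the two maximal augmenting sets layer by layer into a partial augmenting set $\Phi_\ell$ whose size lower bounds $|B'_k|>(2\ell+4-2k)w$ and $|A'_k|>(2\ell+3-2k)w$ telescope down to $|B'_{\ell+1}|>w=|B_{\ell+1}|$, and then Lemma~\ref{lem:partial-to-full-aug-sets} extracts a true augmenting set strictly containing $\Pi_\ell$, contradicting maximality --- sidestepping any disjoint-path decomposition entirely.
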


\begin{proof}
Let $\Pi_{\ell}=(B_{1},A_{1},B_{2},\ldots,A_{\ell},B_{\ell+1})$.
Let $\tPi_{\ell}=(Q_{1},P_{1},Q_{2},\ldots,P_{\ell},Q_{\ell+1})$.
Note $|A_{k}|=|B_{j}|=w$ and $|P_{k}|=|Q_{j}|=\tw$. For the sake
of contradiction, assume $\tw>(2\ell+4)w$. %

We construct partial augmenting sets $\Phi_{\ell}=(B'_{1},A'_{1},\cdots,A'_{\ell},B'_{\ell+1})$
such that (i) $A_{k}\subseteq A_{k}'\subseteq A_{k}+P_{k}$, (ii)
$B_{k}\subseteq B_{k}'\subseteq B_{k}+Q_{k}$, (iii) $|A_{k}'|>(2\ell+3-2k)\cdot|A_{k}|$,
and (iv) $|B_{k}'|>(2\ell+4-2k)\cdot|B_{k}|$, and (v) $|B'_{\ell+1}|>|B_{\ell+1}|$.
Note that the last point would imply 
$B_{\ell+1}$ is a \emph{strict} subset of $B'_{\ell+1}$. From Lemma~\ref{lem:partial-to-full-aug-sets},
we would get an augmenting set $\Pi'_{\ell}$ which would contain $\Pi_{\ell}$,
contradicting the maximality of the latter. \medskip{}

Now we describe the construction of $\Phi_{\ell}$.

\noindent To construct $B'_{1}$, we start with $B_{1}$ and keep
adding elements of $Q_{1}$ to it maintaining $S+B'_{1}\in\I_{1}$.
By the matroid exchange property, we will add $|Q_{1}\setminus B_{1}|$
such elements giving $|B'_{1}|=|Q_{1}|=\tw>(2\ell+4)w$. We satisfy
(i), (iv), and property (c) of the partial augmenting set.

Suppose we have constructed $B_{1}',A_{1}',B_{2}',...,B_{k}'$ as
desired. We need to construct $A'_{k}$. Since $S-A_{k}+B_{k}\in\mathcal{I}_{2}$,
we get $S-\left(P_{k}+A_{k}\right)+B_{k}\in\mathcal{I}_{2}$. We also
have $S-P_{k}+Q_{k}\in\I_{2}$. As $B'_{k}\subseteq B_{k}+Q_{k}$,
we get that $S-P_{k}+(B_{k}'\backslash B_{k})\in\mathcal{I}_{2}$.
By the exchange property, there must exist a $Z\subseteq B'_{k}\setminus B_{k}$
such that $S-\left(P_{k}+A_{k}\right)+B_{k}+Z\in\I_{2}$ and 
\[
|B_{k}+Z|=|B_{k}'\backslash B_{k}|\geq|B_{k}'|-|B_{k}|>(2\ell+4-2k)|B_{k}|-|B_{k}|=(2\ell+3-2k)w
\]
where the inequality follows from (iv) above (which we assume we have
maintained so far). Now since $S-A_{k}+B_{k}\in\mathcal{I}_{2}$,
there is some $A_{k}'$ of size $|B_{k}+Z|$ s.t. $A_{k}\subseteq A_{k}'\subseteq A_{k}+P_{k+1}$
and $\rk_{2}(S-A_{k}'+B_{k}+Z)=\rk_{2}(S)$. Thus, property (d) of
the partial augmenting set is satisfied, and so is property (iii).

Now, suppose we have constructed $B_{1}',A_{1}',B_{2}',...,A_{k}'$
as desired. We need to construct $B'_{k+1}$. As $S-P_{k}+Q_{k+1}\in\mathcal{I}_{1}$,
there must exist some $Z\subseteq Q_{k}$ for which $S-A_{k}'+Z\in\mathcal{I}_{1}$
and 
\[
|Z|=|A_{k}'\cap P_{k}|\geq|A_{k}'|-|A_{k}|>(2\ell+3-2k)|A_{k}|-|A_{k}|=(2\ell+2-2k)w.
\]

Now consider $S-A_{k}'+B_{k+1}$ which is independent since $A_{k}\subseteq A_{k}'$.
We can then find $B_{k+1}'$ of size $|Z|$ s.t. $B_{k+1}\subseteq B_{k+1}'\subseteq B_{k+1}+Z$
and $S-A_{k}'+B_{k+1}'\in\mathcal{I}_{1}$. Note that property (iv)
is satisfied for $|B'_{k+1}|$. And the above satisfies property (e)
of the partial augmenting sets.

Finally, we need to satisfy condition (c), that is, $S+B'_{\ell+1}\in\I_{2}$.
After the above procedure, we do get $B_{\ell+1}\subseteq B'_{\ell+1}\subseteq B_{\ell+1}+Q_{\ell+1}$
satisfying every property except perhaps this one. Also, $|B'_{\ell+1}|>(2\ell+4-2(\ell+1))w=2w$.
Now, since $S+Q_{\ell+1}\in\I_{2}$, we know that $B'_{\ell+1}\setminus B_{\ell+1}\in\I_{2}$.
Thus, we keep deleting items from $B'_{\ell+1}$ such that we get
$S+B'_{\ell+1}\in\I_{2}$. We will delete at most $|B_{\ell+1}|=w$
elements. The final $B'_{\ell+1}$ will satisfy condition (c) and
$|B'_{\ell+1}|>w$, that is, property (v). This completes the proof. 
\end{proof}

\subsection{The Augmenting Sets Algorithm\label{subsec:TheAugSetsAlgo}}

\subsubsection{An Analogy and High level idea}

\label{subsec:analogy}

Before presenting our augmenting sets algorithm, we provide an analogy
to finding a maximal collection of disjoint shortest augmenting paths.
Recall that the Hopcroft-Karp idea is inapplicable because augmenting
such a collection do not necessarily preserve independence. However,
let us pretend for the moment that this was indeed our task. Our algorithm
in the next section is in fact inspired by this.

Suppose that we have constructed internally disjoint partial augmenting
paths from the source. Our goal is to extend them all the way to the
sink. Let $A_{k}\subseteq D_{2k}$ be the sets of vertices on the
current augmenting sets in $D_{2k}$. Define $B_{k}$ similarly. We
then have $|B_{1}|\geq|A_{1}|\geq...\geq|B_{\ell+1}|$.

Let us focus on $A_{k}$ and $B_{k+1}$. Because they represent the
vertices on partial augmenting paths, currently we have a \emph{matching}
between a subset of $A_{k}$ and $B_{k+1}$. A natural idea is then
to match more vertices of $D_{2k+1}$ with the unmatched vertices
of $A_{k}$. Let $B\subseteq D_{2k+1}$ be such a maximal subset,
and $A\subseteq A_{k}$ be the vertices that are still not matched.
Such a maximal set requires only $O(|D_{2k+1}|)$ independence oracle
calls.

The crucial observation is that all of $A$'s outgoing edges go to
$B_{k}+B$ as $B$ is maximal. However, $B_{k}+B$ is already fully
matched; so $A$ is essentially useless as its vertices would never
get matched and can be removed from future consideration. It is easy
to verify that a total of $|B_{k}|-|A_{k}|$ elements are matched
or become useless.

A similar operation exists for extending the matching between $B_{k}$
and $A_{k+1}$. By repeating these operations we will eventually find
a maximal collection of disjoint shortest augmenting paths. Our treatment
of this similar scenario forms the basis of the augmenting sets algorithm
given below.

\subsubsection{The Detailed Algorithm}

In this section we describe a (slow) $O(n^{2}\cdot\indeptime)$-time
algorithm to find a maximal augmenting set. It is an iterative algorithm
which finds ``better'' partial augmenting sets in each iteration.
The algorithm assumes access to the layers $D_{k}$'s of the current
exchange graph $G(S)$ which has shortest path length $2(\ell+1)$.


We maintain three types of elements in each layer : \textbf{selected}
elements denoted as $A_{k},B_{k}$, \textbf{removed} elements denoted
as $R_{k}$, and \textbf{fresh} $F_{k}$. The \textit{selected} elements
$A_{k}\subseteq D_{2k},B_{k}\subseteq D_{2k-1}$ form a partial augmenting
set, and our algorithm incrementally finds better and better partial
augmenting sets with larger $B_{\ell+1}$. The \textit{removed} elements
$R_{k}\subseteq D_{k}$ are deemed to be useless for finding better
partial augmenting sets. Finally, \textit{fresh} elements $F_{k}\subseteq D_{k}$
are those that are neither selected nor removed. The type of an element
can change from \emph{fresh to selected} and from \emph{selected to
removed} but never the other way.

Initially, $B_{1}$ is a maximal subset of $\overline{S}$ that can
be added to $S$ while being independent in $\I_{1}$. Subsequently,
$F_{1}=D_{1}\backslash B_{1}$ and $R_{1}=\emptyset$. For $2\leq k\leq2\ell+1$,
we initialize sets $A_{k},B_{k},R_{k}=\emptyset$ and $F_{k}=D_{k}$.
For convenience we set $A_{0}=R_{0}=F_{0}=D_{0}=\emptyset$ and $A_{\ell+1}=R_{2\ell+2}=F_{2\ell+2}=D_{2\ell+2}=\emptyset$.

We maintain the following \emph{invariants}. It is easy to verify
that the initial conditions satisfy them. 

\setenumerate{noitemsep,label=(\alph*)}
\begin{enumerate}
\item For $1\leq k\leq\ell$, we have~~$S-A_{k}+B_{k+1}\in\mathcal{I}_{1}$.
\item $\rk_{2}(S-A_{k}+B_{k})=\rk_{2}(S)$. \\
Equivalently, $\exists B\subseteq B_{k}$ of size $|B|=|A_{k}|$ for
which $S-A_{k}+B\in\mathcal{I}_{2}$. 
\item For $1\leq k\leq\ell$, for any $X\subseteq B_{k+1}+F_{2k+1}=D_{2k+1}-R_{2k+1}$,
if $S-(A_{k}+R_{2k})+X\in\mathcal{I}_{1}$ then $S-A_{k}+X\in\mathcal{I}_{1}$. 
\item $R_{2k-1}\in\Span_{2}(S-(D_{2k}-R_{2k})+B_{k})$
\end{enumerate}
Invariants (a) and (b) ensure that the selected sets $\Phi_{\ell}:=(B_{1},A_{1},B_{2},\cdots,A_{\ell},B_{\ell+1})$
form a partial augmenting set. Invariant (c) essentially says that
if $R_{2k+1}$ is ``useless'' (note $X\cap R_{2k+1}$ is empty),
so is $R_{2k}$. On the other hand, Invariant (d) says that if $R_{2k}$
is ``useless'', so is $R_{2k-1}$. Invariants (c) and (d) are important
for certifying that removed elements do not need be considered at
all (see Lemma \ref{lem:maximal-aug-sets}). 

We are ready to present $\texttt{Refine}$, which consists of a series
of operations on the pairs $(A_{k},B_{k+1})$ and $(B_{k},A_{k})$
inspired by the matching analogy given in Section~\ref{subsec:analogy}.
For $(A_{k},B_{k+1})$, we extend $B_{k+1}$ as much as possible while
respecting Invariant (a) (Lines 1-2 of $\texttt{Refine1}(k)$). We
then identify the ``matched'' vertices in $A_{k}$ and remove the
ones still not matched (Lines 3-4 of $\texttt{Refine1}(k)$).

Similarly, for $(B_{k},A_{k})$, we find a maximal subset of $B_{k}$
that can be ``matched'' and remove the rest (lines 1-2 of $\texttt{Refine2}(k)$).
We then find the ``endpoints'' in $A_{k}$ that are ``matched''
to $B_{k}$ (lines 3-4 of $\texttt{Refine2}(k)$).

\begin{algorithm2e}[H]

\caption{$\texttt{Refine1}(k)$}

\SetAlgoLined

Find maximal $B\subseteq F_{2k-1}$ s.t. $S-A_{k}+B_{k+1}+B\in\mathcal{I}_{1}$

$B_{k+1}\longleftarrow B_{k+1}+B,F_{2k-1}\longleftarrow F_{2k-1}-B$

Find maximal $A\subseteq A_{k}$ s.t. $S-A_{k}+B_{k+1}+A\in\mathcal{I}_{1}$

$A_{k}\longleftarrow A_{k}-A,R_{2k}\longleftarrow R_{2k}+A$

\end{algorithm2e}

~

\begin{algorithm2e}[H]

\caption{$\texttt{Refine2}(k)$}

\SetAlgoLined

Find maximal $B\subseteq B_{k}$ s.t. $S-(D_{2k}-R_{2k})+B\in\mathcal{I}_{2}$

$R_{2k-1}\longleftarrow R_{2k-1}+B_{k}\backslash B,B_{k}\longleftarrow B$

Find maximal $A\subseteq F_{2k}$ s.t. $S-\left(D_{2k}-R_{2k}\right)+B_{k}+A\in\mathcal{I}_{2}$

$A_{k}\longleftarrow A_{k}+F_{2k}\backslash A,F_{2k}\longleftarrow A$

\end{algorithm2e}

~

\begin{algorithm2e}[H]

\caption{$\texttt{Refine}$}

\SetAlgoLined

\For{$k=0,1,\cdots,l$}{

$\texttt{Refine1}$($k$)

$\texttt{Refine2}$($k$)

}

$\texttt{Refine1}$($0$)

\end{algorithm2e}

It is somewhat strange that $\texttt{Refine}$ ends with $\texttt{Refine1}$(0)
which is indeed redundant. However it makes the analysis of the combined
algorithm in the next subsection less cumbersome (specifically Lemma
\ref{lem:boundingpartial}).

The next three lemmas study properties of $\texttt{Refine1}$ and
$\texttt{Refine2}$ which will be crucial for analyzing the performance
of our main subroutine $\texttt{Refine}$.
\begin{lem}
\label{lem:equal-obs} After $\texttt{Refine1}(k)$ is run, we have
$|A_{k}|=|B_{k+1}|$. After $\texttt{Refine2}(k)$ is run, we have
$|B_{k}|=|A_{k}|$. 
\end{lem}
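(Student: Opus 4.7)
The plan is to identify, for each of the two parts, a natural candidate set $T$ whose cardinality must equal $|S|$; the desired equality then falls out from writing $|T|$ in terms of $|A_k|$ and $|B_{k+1}|$ (respectively $|B_k|$) and comparing.

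For $\texttt{Refine1}(k)$ with $k\ge 1$, I let $T\defeq S-A_k+B_{k+1}$ using the post-refinement values. Writing $A^{\mathrm{old}}_k$ for the value of $A_k$ entering the call and $A,B$ for the sets found in steps 3 and 1 respectively, we have $T=S-A^{\mathrm{old}}_k+B_{k+1}+A$, which step 3 certifies to lie in $\I_1$. I would next argue that $T$ is a basis of $T\cup A_k\cup F_{2k+1}=S\cup B_{k+1}\cup F_{2k+1}$ in $\M_1$: maximality against any $a\in A_k$ is exactly the maximality of $A$ in step 3, and maximality against any $f\in F_{2k+1}$ follows from step 1, which gives $S-A^{\mathrm{old}}_k+B_{k+1}+f\notin\I_1$, because $T+f$ is a superset of this set (step 3 only re-adds the elements $A\subseteq A^{\mathrm{old}}_k\subseteq S$ to the kept side) and hence also dependent. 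The last ingredient is the rank of $S+B_{k+1}+F_{2k+1}$: since $k\ge 1$, every element of $B_{k+1}\cup F_{2k+1}\subseteq D_{2k+1}$ sits at distance at least $3$ from $s$ and is hence not in $D_1=\free_1(S)$, so it belongs to the $\M_1$-closure of $S$. Thus $\rk_1(S+B_{k+1}+F_{2k+1})=\rk_1(S)=|S|$, giving $|T|=|S|$ and $|A_k|=|B_{k+1}|$.

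The $\texttt{Refine2}(k)$ argument is parallel in $\M_2$, with invariant~(b) replacing the closure observation. Set $T'\defeq S-A_k+B_k$ with post-refinement values; unpacking $B_k=B$ and $A_k=A^{\mathrm{old}}_k\cup(F^{\mathrm{old}}_{2k}\setminus A)$ identifies $T'$ with $S-(D_{2k}-R_{2k})+B_k+A$, which is the set certified independent by step 3. The same dual-maximality plus superset reasoning shows $T'$ is a basis in $\M_2$ of $T'\cup(B^{\mathrm{old}}_k\setminus B_k)\cup(F^{\mathrm{old}}_{2k}\setminus A)$, and this telescopes to $S-A^{\mathrm{old}}_k+B^{\mathrm{old}}_k$ (re-adding the unchosen halves of $B^{\mathrm{old}}_k$ and $F^{\mathrm{old}}_{2k}$ exactly undoes the step 1 / step 3 selections). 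Invariant~(b) gives $\rk_2(S-A^{\mathrm{old}}_k+B^{\mathrm{old}}_k)=|S|$, so $|T'|=|S|$ and $|A_k|=|B_k|$.

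The only subtle point is the bridging between the two maximality conditions inside each refine call: step 1 verifies maximality with respect to the pre-step-3 value of $A_k$, whereas the conclusion needs maximality with respect to the post-step-3 value. The remedy is a simple monotonicity observation---step 3 only moves some elements of $A^{\mathrm{old}}_k$ back into the kept side of $S$, so $T+f$ is always a superset of the set that step 1 ruled dependent, and dependence transfers. Once this transfer is in place, the rest of the proof is just a size count in the appropriate matroid, and the two claims follow symmetrically.
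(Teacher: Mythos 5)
Your proof is correct and follows essentially the same approach as the paper: in each half you show that the post-refinement swapped set (for $\texttt{Refine1}(k)$, $T = S - A_k + B_{k+1}$ with post-call values) is a maximal independent subset of an ambient set whose rank in the relevant matroid is $|S|$, where for $\texttt{Refine1}$ the rank bound comes from $B_{k+1} \subseteq \Span_1(S)$ when $k\ge 1$ and for $\texttt{Refine2}$ it comes from invariant~(b). The one inessential elaboration in your write-up is bringing $F_{2k+1}$ and the step-1 maximality into the $\texttt{Refine1}$ argument: after step~3 the set $T$ is already a basis of $T \cup A_k^{\mathrm{old}} = S + B_{k+1}$, whose $\M_1$-rank is $|S|$, so the paper stops there; the step-1 maximality is, however, genuinely needed for the $\texttt{Refine2}$ half (to certify maximality against $B_k^{\mathrm{old}}\setminus B_k$), which you handle correctly and which the paper leaves as ``analogous.''
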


\begin{proof}
We prove the first statement; the second statement's proof is analogous.
Let $A_{k}^{old}$ and $B_{k+1}^{old}$ be the sets before $\texttt{Refine1}(k)$
is called. In Line 2, we set $B_{k+1}$. At this point, let $T:=S-A_{k}^{old}+B_{k+1}$;
we know $T\in\I_{2}$. We also know $\rk_{1}(T+A_{k}^{old})=\rk_{1}(S+B_{k+1})=|S|$
since $B_{k+1}$ is in the span of $S$. This implies, the maximal
set $A$ added in Line 4 leads to $A_{k}$ such that $\rk_{1}(S-A_{k}+B_{k+1})=|S|$,
implying $|A_{k}|=|B_{k+1}|$. 
\end{proof}
\begin{lem}[Properties of $\texttt{Refine1}$]
Let $A_{k}^{old},B_{k+1}^{old}$ be the sets ($A_{k},B_{k+1}$) before
$\texttt{Refine1}(k)$ is called. After the call, all four invariants
are preserved. Moreover, for $k\geq1$ a total of $|A_{k}^{old}|-|B_{k+1}^{old}|$
elements which were formerly fresh are now selected, or formerly selected
but now removed. 
\end{lem}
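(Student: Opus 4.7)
My plan is to verify each of the four invariants one by one after $\texttt{Refine1}(k)$ has executed, and then perform the transition count. Throughout, let $B$ denote the set added to $B_{k+1}$ in lines 1--2 (so $B_{k+1}^{\text{new}}=B_{k+1}^{\text{old}}+B$ and $F_{2k+1}^{\text{new}}=F_{2k+1}^{\text{old}}-B$) and $A$ the set moved out of $A_k$ in lines 3--4 (so $A_k^{\text{new}}=A_k^{\text{old}}-A$ and $R_{2k}^{\text{new}}=R_{2k}^{\text{old}}+A$). The crucial bookkeeping observation is that $A_k+R_{2k}$ and $B_{k+1}+F_{2k+1}$ are each invariant under $\texttt{Refine1}$, so the premise of invariant (c) and the right-hand side of invariant (d) ``move around'' in a controlled way. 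I will focus on the case $k\ge1$; the case $k=0$ is trivial, only affecting $B_1$ and hence only invariants (b) and (d) at index $1$, both of which follow from monotonicity together with $\free_1(S)\subseteq\Span_2(S)$ (a consequence of the shortest-path length being $\ge4$).

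Invariants (a), (b), and (d) for $k\ge1$ should require only standard monotonicity arguments. Invariant (a) at index $k$ is exactly what lines 3--4 guarantee by the maximal choice of $A$, and at other indices it is untouched. For invariant (b) at index $k$ (respectively $k+1$), since $A_k$ only shrinks (respectively $B_{k+1}$ only grows), the set $S-A_k+B_k$ (respectively $S-A_{k+1}+B_{k+1}$) only grows; it stays at rank $|S|$ in $\M_2$ because the newly visible elements lie inside $D_j\subseteq\Span_2(S)$ for $j<2(\ell+1)-1$, which follows from the shortest-path argument (any vertex of $D_j\cap\free_2(S)$ would yield a shorter $s$-$t$ path). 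Invariant (d) is immediate: both $R_{2k}$ and $B_{k+1}$ only grow, and so the span on its right-hand side only grows.

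The main obstacle will be invariant (c) at index $k$. Fix $X\subseteq D_{2k+1}-R_{2k+1}$ satisfying the new premise; since $A_k^{\text{new}}+R_{2k}^{\text{new}}=A_k^{\text{old}}+R_{2k}^{\text{old}}$, the premise is identical to the old one, so the old (c) gives $J_{2}\defeq S-A_k^{\text{old}}+X\in\I_{1}$. Writing $J_{1}'\defeq S-A_k^{\text{old}}+B_{k+1}^{\text{new}}$, I will extract two ingredients from the maximality choices. Maximality of $B$ in line 1 means $J_1'+y\notin\I_1$ for every $y\in F_{2k+1}^{\text{new}}$, so $F_{2k+1}^{\text{new}}\subseteq\Span_{1}(J_{1}')$; combined with $B_{k+1}^{\text{new}}\subseteq J_{1}'$ this forces $X\subseteq\Span_{1}(J_{1}')$. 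Maximality of $A$ in line 3 makes $J_{1}'+A=S-A_k^{\text{new}}+B_{k+1}^{\text{new}}$ an independent set of size $|S|$, hence a basis of $\Span_{1}(S)$; in particular $A$ is disjoint from $\Span_1(J_1')$ and independent in the contraction $\M_1/\Span_1(J_1')$. To combine these, I will pass to the further contraction $\widetilde{\M}\defeq\M_{1}/(S-A_k^{\text{old}})$, which has rank $|A_k^{\text{old}}|$. The two facts become: (i) $X$ is independent in $\widetilde{\M}$ and lies in the flat $F\defeq\Span_{\widetilde{\M}}(B_{k+1}^{\text{new}})$, and (ii) $A$ is disjoint from $F$ and is a basis of $\widetilde{\M}/F$. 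The standard lift lemma---extend $X$ to a basis of $F$ in $\widetilde{\M}$ and then append $A$, which stays independent because $A$ is independent in $\widetilde{\M}/F$---yields $X+A\in\widetilde{\I}$, i.e., $(S-A_k^{\text{old}})+X+A=S-A_k^{\text{new}}+X\in\I_{1}$, the conclusion of (c).

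Finally, for the counting claim, lines 1--2 transition exactly $|B|=|B_{k+1}^{\text{new}}|-|B_{k+1}^{\text{old}}|$ elements from fresh to selected, and lines 3--4 transition exactly $|A|=|A_k^{\text{old}}|-|A_k^{\text{new}}|$ elements from selected to removed. Invoking Lemma~\ref{lem:equal-obs} for $|A_k^{\text{new}}|=|B_{k+1}^{\text{new}}|$, the sum equals $|A_k^{\text{old}}|-|B_{k+1}^{\text{old}}|$, as claimed.
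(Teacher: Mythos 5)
Your proposal is correct, and for invariants (a), (b), (d) and the transition count it matches the paper's (terse) argument, except that you also address invariant (d) at index $k+1$ (where $B_{k+1}$ grows), which the paper omits. A small wrinkle in your treatment of (b) at index $k$: the elements newly absorbed into $S-A_k+B_k$ are those of $A\subseteq A_k^{\mathrm{old}}\subseteq S$, so they are in $\Span_2(S)$ for free; the shortest-path argument you invoke is needed only to guarantee $B_k\subseteq\Span_2(S)$, which was already available and unchanged.

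The genuine divergence from the paper is in invariant (c). The paper proves $S-A_k^{\mathrm{old}}+A+X\in\I_1$ by a direct exchange argument: starting from the independent set $S-A_k^{\mathrm{old}}+X$, it augments toward the basis $S-A_k^{\mathrm{old}}+A+B_{k+1}$ to obtain $\bar A\subseteq A$, $\bar B\subseteq B_{k+1}\setminus X$ with $|\bar A|+|\bar B|=|A|+|B_{k+1}|-|X|$, and then derives a contradiction with the line-1 maximality of $B_{k+1}$ if $\bar A\subsetneq A$, since $X+\bar B$ would then be a larger independent subset of $D_{2k+1}-R_{2k+1}$ in $S-A_k^{\mathrm{old}}$. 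You instead pass to the contraction $\widetilde{\M}=\M_1/(S-A_k^{\mathrm{old}})$, note that line-1 maximality means $X\subseteq F:=\Span_{\widetilde{\M}}(B_{k+1}^{\mathrm{new}})$ while line-3 maximality makes $A$ independent in $\widetilde{\M}/F$ and disjoint from $F$, and conclude $X+A$ is independent in $\widetilde{\M}$ by the lift lemma (extend $X$ to a basis of $F$, then append $A$). Both are valid; the paper's route is more elementary and self-contained (pure exchange-axiom counting), while yours is more conceptual and isolates the structural fact --- that $A$ lives above the flat spanned by the already-selected $B_{k+1}$ --- which arguably explains \emph{why} the invariant is preserved rather than just verifying it. One caveat: you call $A$ a ``basis'' of $\widetilde{\M}/F$; independence suffices and is all you actually establish, so the wording overclaims slightly but harmlessly.
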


\begin{proof}
After running $\texttt{Refine1}(k)$, we have $|A_{k}|=|B_{k+1}|$
by Lemma~\ref{lem:equal-obs}. Also note that $|B_{k+1}|-|B_{k+1}^{old}|$
elements are selected and $|A_{k}^{old}|-|A_{k}|$ are removed. Thus
a total of $|A_{k}^{old}|-|B_{k}^{old}|$ elements change status.

Invariant (a) is true by design.

Invariant (b) remains true because $B_{k}$ is unchanged and $A_{k}$
only loses elements.

Invariant (d) remains true because $R_{2k}$ is only increased, and
thus its span can only become bigger. 

For Invariant (c), first notice $A_{k}+R_{2k}=A_{k}^{old}+R_{2k}^{old}$.
Since the invariant held before run, we know for any $X\subseteq D_{2k+1}-R_{2k+1}$,
if $S-(A_{k}^{old}+R_{2k}^{old})+X\in\mathcal{I}_{1}$, then $S-A_{k}^{old}+X\in\I_{1}$.
That is, $S-(A_{k}+R_{2k})+R_{2k}^{old}+X\in\mathcal{I}_{1}$. Our
goal is to show that $S-A_{k}^{old}+A+X\in\mathcal{I}_{1}$.

Since $S-A_{k}^{old}+A+B_{k+1}\in\mathcal{I}_{1}$, we can find $\bar{A}\subseteq A$
and $\bar{B}\subseteq B_{k+1}\backslash X$ s.t. $S-A_{k}^{old}+\bar{A}+X+\bar{B}\in\I_{1}$
and $|\bar{A}|+|\bar{B}|=|A|+|B_{k+1}|-|X|$. Now if $\bar{A}\neq A$
then $|\bar{B}|>|B_{k+1}|-|X|$ and hence $X+\bar{B}\subseteq D_{2k+1}-R_{2k+1}$
has size bigger than $|B_{k+1}|$. But this contradicts the fact that
$B_{k+1}\subseteq D_{2k+1}-R_{2k+1}$ is maximally independent in
$S-A_{k}^{old}$. Thus we must have $\bar{A}=A$ and $S-A_{k}^{old}+A+X\in\mathcal{I}_{1}$. 
\end{proof}
\begin{lem}[Properties of $\texttt{Refine2}$]
 Let $A_{k}^{old},B_{k}^{old}$ be the $(A_{k},B_{k})$ before call
to $\texttt{Refine2}(k)$. After the call, all four invariants are
preserved. Moreover, for $k\leq\ell$ a total of $|B_{k}^{old}|-|A_{k}^{old}|$
elements which were formerly fresh but are now selected, or were formerly selected
but now removed. 
\end{lem}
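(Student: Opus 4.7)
The plan is to follow the proof of the preceding $\texttt{Refine1}$ lemma, with the roles of $\mathcal{I}_1$ and $\mathcal{I}_2$ partially exchanged. First, I will apply Lemma~\ref{lem:equal-obs} to $\texttt{Refine2}(k)$ to conclude $|A_k^{new}|=|B_k^{new}|$. The count then follows by simple bookkeeping: Lines 1--2 move $|B_k^{old}|-|B_k^{new}|$ elements from selected ($B_k$) to removed ($R_{2k-1}$), while Lines 3--4 move $|F_{2k}^{old}|-|A|$ elements from fresh ($F_{2k}$) to selected ($A_k$). Combining with the identity $|A_k^{old}|+(|F_{2k}^{old}|-|A|)=|A_k^{new}|=|B_k^{new}|=|B_k^{old}|-(|B_k^{old}|-|B_k^{new}|)$ yields the promised total of $|B_k^{old}|-|A_k^{old}|$ status changes.

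Next I will verify each invariant by analyzing the two stages of $\texttt{Refine2}(k)$ separately, at each affected index. Set $Z:=F_{2k}^{old}\setminus A\subseteq S$. Invariant (a) at $k$ is preserved because $S-A_k^{new}+B_{k+1}=(S-A_k^{old}+B_{k+1})\setminus Z$ is a subset of an independent set. Invariant (b) at $k$ follows from the Line 3 maximality: $S-A_k^{new}+B_k^{new}=S-(D_{2k}-R_{2k})+B_k^{new}+A\in\mathcal{I}_2$, and $|A_k^{new}|=|B_k^{new}|$ gives $\rk_2(S-A_k^{new}+B_k^{new})=|S|$. For invariant (d) at $k$, the Line 1 maximality of $B_k^{new}$ yields $B_k^{old}\setminus B_k^{new}\subseteq\Span_2(S-(D_{2k}-R_{2k})+B_k^{new})$, so this span coincides with $\Span_2(S-(D_{2k}-R_{2k})+B_k^{old})$ and therefore contains $R_{2k-1}^{new}=R_{2k-1}^{old}\cup(B_k^{old}\setminus B_k^{new})$; Lines 3--4 leave $D_{2k}-R_{2k}=A_k+F_{2k}$ unchanged, so invariant (d) persists. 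The remaining invariants at other indices are immediate, since $\texttt{Refine2}(k)$ either leaves the relevant sets untouched or merely shrinks the quantifier set $D_{2k-1}-R_{2k-1}$ that governs invariant (c) at index $k-1$.

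The main obstacle is invariant (c) at index $k$ after Lines 3--4, since $A_k+R_{2k}$ strictly grows by $Z$ and the direct subset argument of $\texttt{Refine1}$ no longer applies. Given $X\subseteq B_{k+1}+F_{2k+1}$ with $S-(A_k^{new}+R_{2k})+X\in\mathcal{I}_1$, my plan is to prove the stronger statement $S-(A_k^{old}+R_{2k})+X\in\mathcal{I}_1$, i.e., that $Z$ can be lifted back onto the known-independent set without violating $\mathcal{I}_1$. The old invariant (c) then yields $S-A_k^{old}+X\in\mathcal{I}_1$, and the desired conclusion $S-A_k^{new}+X\in\mathcal{I}_1$ follows as a subset. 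The crux is thus the $Z$-lifting step, which I expect to establish by a matroid exchange argument mirroring the closing step of the $\texttt{Refine1}$ analysis, leveraging that by the Line 3 maximality of $A$ every element of $Z$ lies in $\Span_2(S-(D_{2k}-R_{2k})+B_k^{new}+A)$, together with the shortest-path structure of $G(S)$ which forbids edges from $D_{2k}$ to layers $D_{2k+2}$ and beyond.
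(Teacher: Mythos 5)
Your treatment of the element count and of invariants (a), (b), and (d) is correct and essentially the same route the paper takes (using $|A_k^{new}|=|B_k^{new}|$ from Lemma~\ref{lem:equal-obs}, the Line~3 maximality for (b), and the Line~1 maximality/span-coincidence for (d)); the paper is merely terser. The genuinely interesting part is invariant (c). The paper dismisses (a) and (c) with the one-line justification that ``$A_{k-1}$ is unchanged and $B_k$ only loses elements,'' which covers the instance of (c) at index $k-1$ (where the quantifier set $D_{2k-1}-R_{2k-1}$ only shrinks) but does not visibly address the instance at index $k$, where $A_k$ \emph{grows} by $Z$. You are right to single this out: since the hypothesis $S-(A_k^{new}+R_{2k})+X\in\I_1$ is about a \emph{smaller} set than $S-(A_k^{old}+R_{2k})+X$, the old invariant cannot be applied directly, and the subset argument that works for (a) does not work here.

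However, your proposed resolution does not close that gap. The $Z$-lifting claim ($S-(A_k^{new}+R_{2k})+X\in\I_1 \Rightarrow S-(A_k^{old}+R_{2k})+X\in\I_1$) is an $\I_1$ statement, but the two tools you reach for are not of the right type: the Line~3 maximality of $A$ places $Z$ in $\Span_2(\cdot)$, which says nothing about $\I_1$-circuits, and the absence of arcs from $D_{2k}$ to $D_{2k+2}$ and beyond is irrelevant here because $X\subseteq B_{k+1}+F_{2k+1}\subseteq D_{2k+1}$ lies in the very layer that $D_{2k}$ \emph{does} send $\I_1$-arcs into, so $Z$-vs-$X$ conflicts in $\M_1$ are precisely what the layer structure permits. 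You also hedge with ``I expect to establish,'' so the crux step remains a sketch. Put differently: you correctly identify that the paper's stated justification for (c) is incomplete at index $k$, but the argument you outline would not establish it, and a correct proof of this step would need a genuinely $\I_1$-side argument (e.g.\ an exchange argument exploiting the maximality of $B_{k+1}$ from $\texttt{Refine1}(k)$, in the spirit of the $\texttt{Refine1}$ proof of (c)), which neither you nor the paper's written proof supplies.
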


\begin{proof}
After running $\texttt{Refine2}(k)$, we have $|A_{k}|=|B_{k}|$ by
Lemma~\ref{lem:equal-obs}. 
Also note that $|B_{k}^{old}|-|B_{k}|$ elements removed and $|A_{k}|-|A_{k}^{old}|$
selected for a total of $|B_{k}^{old}|-|A_{k}^{old}|$ elements changing
status.

Invariants (a) and (c) remain true because $A_{k-1}$ is unchanged
and $B_{k}$ only loses elements. For Invariant (b), observe that
$\texttt{Refine2}$ simply finds a maximally independent subset of
$S-(D_{2k}-R_{2k})+B_{k}^{old}+F_{2k}^{old}=S-A_{k}^{old}+B_{k}^{old}$
so $\rk_{2}(S-A_{k}+B_{k})$ remains unchanged.

Invariant (d) remains true essentially by design. $\texttt{Refine2}$
finds a base of $S-(D_{2k}-R_{2k})+B_{k}^{old}$ and moves any of
element not in the base into $R_{2k-1}$. Those elements are in the
span by definition. 
\end{proof}
Now by summing over all $k$ in the last two lemmas, we get the following
corollary. 
\begin{cor}
\label{cor:After-Refine} After applying $\texttt{Refine}$, at least
a total of $|B_{1}^{old}|-|B_{\ell+1}^{old}|$ elements are formerly
fresh but now selected or formerly selected but now removed. 
\end{cor}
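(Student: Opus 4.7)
The plan is to sum the status-change counts given by the two preceding lemmas over $k=1,\ldots,\ell$, telescope, and handle one small error term. For $k\geq 1$ the Refine1 lemma gives $\texttt{Refine1}(k)$'s contribution as $|A_k^{\mathrm{old}}|-|B_{k+1}^{\mathrm{old}}|$, while the Refine2 lemma gives $\texttt{Refine2}(k)$'s contribution as $|B_k^{\mathrm{old}}|-|A_k^{\mathrm{old}}|$; the subtlety is that the two ``old'' snapshots are taken at different moments. First I would observe that when $\texttt{Refine1}(k)$ starts, neither $A_k$ nor $B_{k+1}$ has been touched yet in this call to $\texttt{Refine}$, so the two ``old'' sizes coincide with the start-of-Refine values $a_k^0$ and $b_{k+1}^0$. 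On the other hand, when $\texttt{Refine2}(k)$ starts, $B_k$ has already been enlarged by $\texttt{Refine1}(k-1)$ (or by the initial $\texttt{Refine1}(0)$ for $k=1$) by some amount $\alpha_{k-1}\geq 0$, and by Lemma~\ref{lem:equal-obs} we have $|A_k|=b_{k+1}^0+\alpha_k$, where $\alpha_k\geq 0$ is the number of elements just added to $B_{k+1}$ by $\texttt{Refine1}(k)$.

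With this bookkeeping in place, the Refine2 contributions telescope to $(b_1^0-b_{\ell+1}^0)+(\alpha_0-\alpha_\ell)$, while the Refine1 contributions add up to $\sum_{k=1}^{\ell}(a_k^0-b_{k+1}^0)$. The first and final $\texttt{Refine1}(0)$ calls each contribute a non-negative amount ($\alpha_0$ and $\alpha^{\mathrm{final}}$), since they only perform $F\to S$ transitions on $B_1$. Combining, the overall total number of status changes during $\texttt{Refine}$ is at least
\[
(b_1^0-b_{\ell+1}^0)\;+\;2\alpha_0\;+\;\sum_{k=1}^{\ell}(a_k^0-b_{k+1}^0)\;-\;\alpha_\ell.
\]

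To conclude that this is at least $|B_1^{\mathrm{old}}|-|B_{\ell+1}^{\mathrm{old}}|=b_1^0-b_{\ell+1}^0$, I would invoke two small facts. At the start of $\texttt{Refine}$ the tuple $(B_1,A_1,\ldots,B_{\ell+1})$ is a partial augmenting set, so condition (b) of that definition gives $a_k^0\geq b_{k+1}^0$ for every $k$; every term in the sum is therefore non-negative, and in particular $\sum_{k=1}^{\ell}(a_k^0-b_{k+1}^0)\geq a_\ell^0-b_{\ell+1}^0$. Moreover, since $\texttt{Refine1}(\ell)$ only removes elements from $A_\ell$, the equality $|A_\ell|^{\mathrm{after}}=b_{\ell+1}^0+\alpha_\ell$ from Lemma~\ref{lem:equal-obs} together with $|A_\ell|^{\mathrm{after}}\leq a_\ell^0$ yields $\alpha_\ell\leq a_\ell^0-b_{\ell+1}^0$, which makes the error term $\sum_{k=1}^{\ell}(a_k^0-b_{k+1}^0)-\alpha_\ell$ non-negative. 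The main obstacle is exactly this bookkeeping: because $A_k$ and $B_k$ are each modified by more than one subroutine within a single execution of $\texttt{Refine}$, one must be careful to match each lemma's ``old'' label to the correct snapshot of the state, and to keep track of how much $B_k$ was enlarged by the preceding $\texttt{Refine1}$ call before $\texttt{Refine2}(k)$ begins.
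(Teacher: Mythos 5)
Your argument is correct and is a careful, fully explicit version of the same telescoping idea the paper gestures at in its one-paragraph sketch (sum the per-$k$ contributions from the two preceding lemmas and control the mismatch between each lemma's ``old'' snapshot and the start-of-\texttt{Refine} values). One small streamlining: since $\texttt{Refine1}(k)$ only removes elements from $A_k$, you already have $|A_k^{\mathrm{old}}|$ as seen by $\texttt{Refine2}(k)$ bounded above by $a_k^0$, so $\texttt{Refine2}(k)$'s contribution is directly $\geq b_k^0 - a_k^0$; combined with $\texttt{Refine1}(k)$'s exact contribution $a_k^0-b_{k+1}^0$ this telescopes immediately to $b_1^0 - b_{\ell+1}^0$ without needing the explicit $\alpha_k$ bookkeeping or the bound $\alpha_\ell\le a_\ell^0-b_{\ell+1}^0$.
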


\begin{proof}
This essentially follows from the last two lemmas. Readers may notice
that the sum does not exactly telescope because at the time $\texttt{Refine2}(k)$
is applied, $B_{k}$ has been modified by $\texttt{Refine1}(k)$.
However this is okay as $\texttt{Refine1}(k)$ only adds elements
to the old $B_{k}$. Same for $\texttt{Refine2}(k)$ which only adds
elements to $A_{k+1}$ and can only help $\texttt{Refine1}(k+1)$. 
\end{proof}
Using the properties established above, we show how $\texttt{Refine}$
can be repeatedly used to find maximal augmenting sets. 
\begin{lem}
\label{lem:maximal-aug-sets}Suppose we run $\texttt{Refine}$ until
no more element changes its type. At this point, the collection $\Pi_{\ell}=(B_{1},A_{1},B_{2},A_{2},\cdots,A_{\ell},B_{\ell+1})$
is a maximal augmenting set.
\end{lem}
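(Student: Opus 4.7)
For the first claim that $\Pi_\ell$ is an augmenting set, the invariants (a)--(d) maintained by $\texttt{Refine}$ already certify every condition of the augmenting-set definition except equal cardinalities (they give the partial-augmenting-set properties). For the equal-size condition, Lemma~\ref{lem:equal-obs} applied at stability forces $|A_k| = |B_{k+1}|$ (from $\texttt{Refine1}(k)$) and $|B_k| = |A_k|$ (from $\texttt{Refine2}(k)$), chaining adjacent sizes together. Corollary~\ref{cor:After-Refine} closes the loop globally: if $|B_1| > |B_{\ell+1}|$, then any subsequent call to $\texttt{Refine}$ would flip at least $|B_1| - |B_{\ell+1}| > 0$ elements from one type to another, contradicting stability. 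Hence $|B_1| = |A_1| = \cdots = |B_{\ell+1}| = w$, so $\Pi_\ell$ is an augmenting set.

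For maximality I would proceed by contradiction: assume $\widetilde{\Pi}_\ell \supsetneq \Pi_\ell$ is an augmenting set of width $\widetilde{w} > w$. Since $\widetilde{A}_k \supseteq A_k$ and $\widetilde{B}_k \supseteq B_k$, the extras in $\widetilde{A}_k$ live in $R_{2k} \cup F_{2k}$ and the extras in $\widetilde{B}_k$ live in $R_{2k-1} \cup F_{2k-1}$. The pivotal step is to use invariants (c) and (d) to clean out the $R$-contributions: invariant (d), $R_{2k-1} \subseteq \Span_2(S - (D_{2k} - R_{2k}) + B_k)$, combined with the matroid exchange axiom in $\M_2$ applied to $S - \widetilde{A}_k + \widetilde{B}_k \in \I_2$, permits swapping any $r \in \widetilde{B}_k \cap R_{2k-1}$ out for something outside $R_{2k-1}$; and invariant (c) enables the analogous cleanup for $R_{2k}$ elements in $\widetilde{A}_k$ on the $\I_1$ side. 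Iterating over the levels produces an augmenting set $\widetilde{\Pi}'_\ell \supsetneq \Pi_\ell$ whose excess elements are entirely fresh.

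Finally, contradiction follows from stability: take a level where $\widetilde{B}'_{k+1} \supsetneq B_{k+1}$ has some extras in $F_{2k+1}$ (the symmetric situation on the $A_k$ side is handled via $\texttt{Refine2}$). Applying matroid exchange in $\M_1$ to $S - \widetilde{A}'_k + \widetilde{B}'_{k+1} \in \I_1$ and $S - A_k + B_{k+1} \in \I_1$ (using $\widetilde{A}'_k \supseteq A_k$) yields a fresh $b \in F_{2k+1}$ with $S - A_k + B_{k+1} + b \in \I_1$, exactly the sort of element $\texttt{Refine1}(k)$ should have added to $B_{k+1}$---contradicting stability. The main obstacle I anticipate is the cleanup step in the second paragraph: the $R$-swaps must be performed across all levels without destroying either the independence conditions witnessing $\widetilde{\Pi}_\ell$ or the containment $\Pi_\ell \subseteq \widetilde{\Pi}'_\ell$, which requires a careful iterated application of the basis-exchange axiom together with the precise span statements of invariants (c) and (d).
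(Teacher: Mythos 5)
Your first paragraph, establishing that $\Pi_\ell$ is an augmenting set at stability, is correct: the invariants give all the partial-augmenting-set properties, Lemma~\ref{lem:equal-obs} forces the adjacent equalities $|A_k|=|B_{k+1}|$ and $|B_k|=|A_k|$ when the $\texttt{Refine}$ passes make no changes, and these chain together. This part matches (and slightly elaborates on) what the paper leaves implicit.

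For maximality your route diverges from the paper's and contains a genuine gap that you yourself flag. You want to take an arbitrary augmenting set $\tPi_\ell\supsetneq\Pi_\ell$ and ``clean out'' the $R$-elements from the extras, level by level, to obtain $\tPi'_\ell\supsetneq\Pi_\ell$ whose excess elements are all fresh. But invariants (c) and (d) are span statements with respect to very specific sets (e.g.\ $R_{2k-1}\subseteq\Span_2(S-(D_{2k}-R_{2k})+B_k)$), and it is not at all clear that a swap on level $k$ in $\M_2$ preserves both the $\I_1$ and the $\I_2$ conditions of the augmenting-set definition at levels $k-1$, $k$, and $k+1$ simultaneously, nor that a valid replacement exists that is fresh, at the right distance, not already in $\tB_k$, and keeps $\Pi_\ell\subseteq\tPi'_\ell$. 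You would need to carry this out coherently across all $2\ell+1$ layers, and nothing in your sketch shows the swaps compose. This is precisely where the argument would stall.

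The paper avoids this entirely by invoking the machinery it already built: given $\tPi_\ell\supsetneq\Pi_\ell$, Theorem~\ref{thm:augconsistent} shows $\tPi_\ell\setminus\Pi_\ell$ is an augmenting set in $G(S)|\Pi_\ell$, and Theorem~\ref{thm:equiv} then extracts a single shortest augmenting path $(b_1,a_1,\dots,a_\ell,b_{\ell+1})$ in $G(S)|\Pi_\ell$ each of whose vertices avoids $\Pi_\ell$. This replaces your bulk cleanup with a clean vertex-by-vertex induction: $b_1$ must be removed (else $\texttt{Refine1}(0)$ would have selected it); if $b_1,\dots,b_k$ removed then $a_k$ removed (else $S-(D_{2k}-R_{2k})+B_k+b_k\in\I_2$, contradicting invariant~(d)); if $a_k$ removed then $b_{k+1}$ removed (via invariant~(c), since $|B_{k+1}+b_{k+1}|=|A_k|+1$ with $B_{k+1}+b_{k+1}\subseteq\Span_1(S)$). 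The contradiction lands at the far end: $b_{\ell+1}\in R_{2\ell+1}$ would put $b_{\ell+1}\in\Span_2(S+B_{\ell+1})$, but then $b_{\ell+1}$ cannot be adjacent to $t$ in $G(S)|\Pi_\ell$. Reducing to a single path is what makes the invariants (c) and (d) directly usable, one level at a time, without any simultaneous multi-level surgery. If you want to rescue your version you would essentially have to re-derive the path extraction, at which point you are better off just citing Theorems~\ref{thm:augconsistent} and~\ref{thm:equiv}.
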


\begin{proof}
For the sake of contradiction, assume there exists $\tPi_{\ell}$
containing $\Pi_{\ell}$. Then by Theorem~\ref{thm:augconsistent}
and Theorem~\ref{thm:Equival-augmenting-sets-paths}, there is a
sequence 
of shortest augmenting paths in $G(S)|\Pi_{\ell}$. Taking the first
path in the sequence, we have a shortest path $(b_{1},a_{1},b_{2},a_{2},\ldots,a_{\ell},b_{\ell+1})$
in $G(S)|\Pi_{\ell}$. In particular, $a_{k}\notin A_{k}$ and $b_{k}\notin B_{k}$
for any $k$. 

We claim that all $a_{k},b_{k}$ must have been removed by $\texttt{Refine}$.
This would be a contradiction because an element in $D_{2\ell+1}$
is removed only if it is in $\Span_{2}(S+B_{\ell+1})$. But in that
case $b_{\ell+1}$ would not be in $D'_{2\ell+1}$ of $G'=G(S)|\Pi_{\ell}$.
We prove this claim by induction.

First, notice $b_{1}$ cannot be fresh. Otherwise $b_{1}$ would have
been selected by $\texttt{Refine}$ since $S+B_{1}+b\in\mathcal{I}_{1}$.
So $b_{1}$ must have been removed by $\texttt{Refine}$. Now we have
two cases.

Suppose $b_{1},a_{1},...,b_{k}$ have been removed. We need to show
$a_{k}$ is removed. Suppose not, and suppose $a_{k}\in F_{2k}$.
Since $S-(A_{k}+a_{k})+B_{k}+b_{k}\in\mathcal{I}_{2}$, $S-(D_{2k}-R_{2k})+B_{k}+b_{k}\in\mathcal{I}_{2}$.
But this is a contradiction as $b_{k}\in R_{2k-1}\subseteq \Span_2(S-D_{2k}+R_{2k}+B_{k})$.

Now, suppose $b_{1},a_{1},...,a_{k}$ have been removed. We need to
show $b_{k+1}$ is removed. Suppose not, and suppose $b_{k+1}\in F_{2k+1}$.
Then $X=B_{k+1}+b_{k+1}\subseteq B_{k+1}+F_{2k+1}$. Since $S-A_{k}-a_{k}+X\in\mathcal{I}_{1}$
and $a_{k}\in R_{2k}$, $S-A_{k}-R_{2k}+X\in\mathcal{I}_{1}$. Invariant
(c) then implies $S-A_{k}+X\in\mathcal{I}_{1}$, which is a contradiction
since $|X|=|A_{k}|+1$ and $X\in\Span_{1}(S)$ (because $k+1>1$).
\end{proof}
\begin{lem}
\label{lem:Refine-RunTime} $\texttt{Refine1}(k)$ makes $O(|D_{2k}|+|D_{2k+1}|)$
independence oracle calls. $\texttt{Refine2}(k)$ makes $O(|D_{2k}|+|D_{2k-1}|)$
independence oracle calls. $\texttt{Refine}$ takes $O(n\cdot\indeptime)$-time.
\end{lem}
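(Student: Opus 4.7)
The plan is to observe that each of the four ``Find maximal'' lines in $\texttt{Refine1}$ and $\texttt{Refine2}$ is a standard matroid greedy extension, costing one independence query per candidate element; then bound candidate sets by layer sizes and sum.

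First, I would invoke the folklore fact: given an independent set $T\in\mathcal{I}$ and a finite candidate set $X$, a maximal $Y\subseteq X$ with $T+Y\in\mathcal{I}$ is obtained by scanning $X$ in any order, maintaining the current $Y$, and appending the next $x$ precisely when $T+Y+x\in\mathcal{I}$. Correctness is immediate from the matroid exchange axiom, and the procedure makes exactly $|X|$ independence queries. Each of the four maximal-extension operations in $\texttt{Refine1}(k)$ and $\texttt{Refine2}(k)$ is an instance of this template: the reference set $T$ is already independent by our invariants (Invariant (a) for the $\mathcal{I}_1$ operations and Invariant (b), together with the definition of $D_{2k}-R_{2k}$, for the $\mathcal{I}_2$ operations), so the greedy is well-defined.

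Next, I would bound the candidate sets. In $\texttt{Refine1}(k)$, the first line extends $B_{k+1}\subseteq D_{2k+1}$ over the fresh elements of that layer, so the candidate set has size at most $|D_{2k+1}|$; the second line searches over $A_k\subseteq D_{2k}$, so its candidate set has size at most $|D_{2k}|$. Summing yields $O(|D_{2k}|+|D_{2k+1}|)$ queries for $\texttt{Refine1}(k)$. In $\texttt{Refine2}(k)$, the first line searches over $B_k\subseteq D_{2k-1}$ and the second over $F_{2k}\subseteq D_{2k}$, giving $O(|D_{2k-1}|+|D_{2k}|)$ queries.

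Finally, for $\texttt{Refine}$ I would sum over $k=0,1,\dots,\ell$ and add the trailing $\texttt{Refine1}(0)$ call:
\[
\sum_{k=0}^{\ell}\bigl(|D_{2k}|+|D_{2k+1}|\bigr)+\sum_{k=0}^{\ell}\bigl(|D_{2k-1}|+|D_{2k}|\bigr)+O(|D_0|+|D_1|)=O\Bigl(\sum_{i=0}^{2\ell+2}|D_i|\Bigr)=O(n),
\]
since the layers $D_i$ partition a subset of $V\cup\{s,t\}$. Multiplying by $\indeptime$ per query yields the $O(n\cdot\indeptime)$ bound on $\texttt{Refine}$. There is no real obstacle here; the only point worth double-checking is that the greedy's starting set $T$ is independent at the moment each line runs, which is exactly what the maintained invariants and the earlier per-line analyses (Lemmas about $\texttt{Refine1}$ and $\texttt{Refine2}$) guarantee.
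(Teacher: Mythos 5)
Your proposal is correct and matches the paper's argument, which is even terser: it likewise observes that finding a maximal independent extension is a one-pass greedy scan at one independence query per candidate, bounds each candidate set by the relevant layer size, and sums $\sum_k |D_k| \le n$. (You silently corrected the typo $F_{2k-1}$ to $F_{2k+1}$ in line 1 of $\texttt{Refine1}(k)$ — that is indeed what the lemma statement and surrounding text require, and the bound is unaffected either way.)
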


\begin{proof}
Adding maximal $L\subseteq M$ to an independent set $N$ s.t. $N+L$
is still independent requires only adding elements of $L$ one at
a time. This establishes the runtime of $\texttt{Refine1}$ and $\texttt{Refine2}$.
Now $\texttt{Refine}$ takes $\sum_{k}O(|D_{2k}|+|D_{2k+1}|)+O(|D_{2k}|+|D_{2k-1}|)=O(n)$
time since $n\geq\sum_{k}|D_{k}|$. 
\end{proof}
At this point, we can get an $O(n^{2}/\epsilon \cdot\indeptime)$-time algorithm.
The algorithm runs in phases; in each phase the shortest augmenting
path length of the exchange graph goes up by $\geq2$. In each phase,
we use $\texttt{Refine}$ till we find the maximal augmenting set
as given by Lemma~\ref{subsec:TheAugSetsAlgo}. Since each element
can change its type at most twice, we know that we need only $O(n)$
runs of $\texttt{Refine}$ in each phase. By Lemma \ref{lem:Refine-RunTime},
$\texttt{Refine}$ makes $O(n)$ independence oracle calls. Thus,
each phase takes $O(n^{2})$ independence oracles. Finally, to get
an $(1-\epsilon)$-approximation, we need only $O(1/\epsilon)$ phases,
and thus the total time of this algorithm is a (not too impressive)
$O(n^{2}/\epsilon \cdot\indeptime)$-time. We improve this with a hybrid approach
in the next section. 


\subsection{Going Subquadratic by Combining Augmenting Sets and Augmenting Paths\label{subsec:FasterAlgo}}

We obtain a faster algorithm by exploiting Lemma~\ref{lem:maximal-aug-sets-size-ratio}.
The algorithm is parametrized by an integer $p$ which we set in the
end. The algorithm runs in phases. At the beginning of a phase $\ell$,
the algorithm first invokes Algorithm \ref{alg:getdistances} to get
the layers $D_{1},D_{2},...,D_{2\ell+1}$ in $O(n\log r\cdot\Tind)$
time (Lemma~\ref{lem:Get-Distances-Indep}). Throughout a phase $\ell$
if $S$ is the current solution, then the exchange graph $G(S)$ has
shortest augmenting path length $2(\ell+1)$. After a phase, the shortest
augmenting path length is $\geq2(\ell+3)$. We run for $O(1/\epsilon)$
phases.

In phase $\ell$, for $\ell\leq O(1/\epsilon)$, we run the following
routine.

\begin{algorithm2e}[H]

\caption{$\texttt{Hybrid}(p)$}

\SetAlgoLined

Run $\texttt{Refine}$ until $|B_{1}|-|B_{\ell+1}|\leq p$. Initially,
the LHS can be as large as $n$.

Given the partial augmenting set $\Phi_{\ell}$ at this point, find
an augmenting set $\Pi_{\ell}=(B'_{1},A'_{1},\cdots,B'_{\ell+1})\subseteq\Phi_{\ell}$
satisfying $B'_{\ell+1}=B_{\ell+1}$. This is done using the algorithm
described in Lemma~\ref{lem:partial-to-full-aug-sets}.

Find augmenting paths in $G(S)|\Pi_{\ell}$ as in Section~\ref{sec:Indep-Exact-Algorithm}
till the shortest path length changes.

\end{algorithm2e}

To analyze the running time of the hybrid algorithm, we go through
the following steps. First, we show (Lemma~\ref{lem:l1}) Line 1
takes $O(n^{2}/p)$-independence oracle calls. This is similar to
the argument in the last paragraph of the previous section. Second,
we show (Lemma~\ref{lem:boundingpartial}) that $\Pi_{\ell}$ found
in Line 2 is contained in a {\em maximal} augmenting set $\tPi_{\ell}$
of width $\leq|B_{1}|$. Note that the width of $\Pi_{\ell}$ is $|B_{\ell+1}|$.
Thus, the width of $\tPi_{\ell}\setminus\Pi_{\ell}$ in $G(S)|\Pi_{\ell}$
is $\leq|B_{1}|-|B_{\ell+1}|\leq p$ (because of Line 1). Lemma~\ref{lem:maximal-aug-sets-size-ratio}
then implies {\em any} augmenting set in $G(S)|\Pi_{\ell}$ is
of size $\leq(2\ell+4)p$. Which means, in Line 3 we need to make
at most these many augmentations. Since each augmentation takes $\tilde{O}(n)$
independence oracle calls, Line 3 will run in time $\tilde{O}(np\ell \cdot\indeptime )$.
By selecting $p\approx\sqrt{n}$, we get the desired result.
\begin{lem}
\label{lem:l1} Line 1 of the Hybrid algorithm takes $O(n^{2}/p\cdot\indeptime)$-time. 
\end{lem}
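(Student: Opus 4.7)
The plan is to bound the number of invocations of \texttt{Refine} during Line 1 by a potential/amortization argument, and then multiply by the per-call cost already established in Lemma~\ref{lem:Refine-RunTime}.

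First I would use Corollary~\ref{cor:After-Refine}, which states that each call to \texttt{Refine} changes the status of at least $|B_1^{old}| - |B_{\ell+1}^{old}|$ elements (either fresh-to-selected or selected-to-removed). Since Line 1 only continues while $|B_1| - |B_{\ell+1}| > p$, every call made in Line 1 (except possibly the very last one) causes \emph{strictly more than} $p$ status changes.

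Next, I would observe that each element in the ground set can change its status at most twice across the entire execution of a phase: a fresh element can only become selected, and a selected element can only become removed; the invariants in the algorithm explicitly forbid the reverse transitions. Consequently the total number of status changes during Line 1 is at most $2n$. Dividing by the lower bound of more than $p$ changes per call, the number of calls to \texttt{Refine} made in Line 1 is at most $\lceil 2n/p \rceil + 1 = O(n/p)$.

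Finally, Lemma~\ref{lem:Refine-RunTime} tells us each call to \texttt{Refine} performs $O(n)$ independence oracle queries, and hence runs in $O(n \cdot \indeptime)$ time. Multiplying the two bounds gives a total running time of $O(n/p) \cdot O(n \cdot \indeptime) = O(n^2/p \cdot \indeptime)$ for Line 1, as claimed. There is no serious obstacle here: the only subtle point is ensuring the potential (total number of status changes) is genuinely bounded by $O(n)$, which follows immediately from the monotonicity of the fresh $\to$ selected $\to$ removed progression baked into the invariants.
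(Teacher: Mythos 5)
Your proof is correct and follows essentially the same reasoning as the paper: bound the number of \texttt{Refine} calls via Corollary~\ref{cor:After-Refine} together with the fact that each element changes type at most twice (fresh $\to$ selected $\to$ removed), and then multiply by the $O(n\cdot\indeptime)$ per-call cost from Lemma~\ref{lem:Refine-RunTime}. No gaps; this matches the paper's argument.
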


\begin{proof}
By Corollary~\ref{cor:After-Refine}, each call to $\texttt{Refine}$
before Line 1 terminates changes the type of at least $|B_{1}|-|B_{\ell+1}|>p$
elements. Since there are $n$ elements and their type can change
only twice, $\texttt{Refine}$ can only be run for $O(n/p)$ times.
By Lemma~\ref{lem:Refine-RunTime}, each run of $\texttt{Refine}$
makes $O(n)$ independence oracle calls. The lemma follows. 
\end{proof}

\begin{lem}
\label{lem:boundingpartial} Let $\Phi_{\ell}=(B_{1},A_{1},B_{2},\ldots,B_{\ell+1})$
be the partial augmenting sets obtained by running $\texttt{Refine}$
for some number of times. Let $\Pi_{\ell}\subseteq\Phi_{\ell}$ be
an augmenting set. Then, there is a {\em maximal} augmenting set
$\tPi_{\ell}$ containing $\Pi_{\ell}$ and the width of $\tPi_{\ell}$
is at most $|B_{1}|$. 
\end{lem}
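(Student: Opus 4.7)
The plan is to construct $\tPi_\ell$ in two stages and then bound its width using matroid exchange together with the structural invariants that \texttt{Refine} preserves on $\Phi_\ell$. First, I would invoke Lemma~\ref{lem:partial-to-full-aug-sets} with the partial augmenting set $\Phi_\ell$ and the contained augmenting set $\Pi_\ell$ (playing the role of $\Pi'_\ell$), obtaining an augmenting set $\Pi^\star_\ell$ with $\Pi_\ell\subseteq\Pi^\star_\ell\subseteq\Phi_\ell$ and width exactly $|B_{\ell+1}|\leq|B_1|$. Then I would extend $\Pi^\star_\ell$ to a maximal augmenting set $\tPi_\ell$ in $G(S)$ by iteratively appending shortest augmenting paths in the residual exchange graph $G(S)\,|\,\tPi_\ell$; Theorem~\ref{thm:sap} and Theorem~\ref{thm:equiv} justify that this interleaves correctly with the augmenting-set representation, and the process terminates in a maximal augmenting set containing $\Pi_\ell$.

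The main content of the proof is the width bound $|\tB_1|\leq|B_1|$. The first coordinate $\tB_1$ satisfies $\tB_1\subseteq D_1$ and $S+\tB_1\in\mathcal{I}_1$ by property (a) and (c) of augmenting sets. Assume toward contradiction that $|\tB_1|>|B_1|$. Matroid exchange applied to the $\mathcal{I}_1$-independent sets $S+B_1$ and $S+\tB_1$ yields some $x\in\tB_1\setminus B_1$ with $S+B_1+x\in\mathcal{I}_1$. Since $D_1=B_1\sqcup F_1\sqcup R_1$, either $x\in F_1$ or $x\in R_1$. The case $x\in F_1$ contradicts the maximality enforced by the final $\texttt{Refine1}(0)$ of the last \texttt{Refine} call: that step ensures no $f\in F_1$ satisfies $S+B_1+f\in\mathcal{I}_1$.

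The case $x\in R_1$ is the hardest part, and the step I expect to be the main obstacle. The plan is to unwind the augmenting path of $\tPi_\ell$ through $x$ layer by layer using invariants (c) and (d). Concretely, the path passes through some $y\in\tA_1\subseteq D_2$, and invariant (d), $R_1\subseteq\Span_2(S-(D_2-R_2)+B_1)$, combined with the augmenting-set requirement $S-\tA_1+\tB_1\in\mathcal{I}_2$, forces $y\in R_2$ via a rank identity in $\I_2$. Invariant (c) at level $k=1$ then lets one propagate the contradiction one layer further, replacing the $R_2$-element by a companion in $\Phi_\ell$ and producing an augmenting configuration that \texttt{Refine} should already have absorbed into the selected sets or certified as removable, contradicting the recorded state of $\Phi_\ell$. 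This exchange-based unwinding mirrors in partial form the maximality argument of Lemma~\ref{lem:maximal-aug-sets} and is where the delicate casework lives; the other two stages of the plan are essentially bookkeeping on top of Lemma~\ref{lem:partial-to-full-aug-sets} and the augmenting set/path equivalence.
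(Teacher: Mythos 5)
Your first two stages are sound in outline (though invoking Lemma~\ref{lem:partial-to-full-aug-sets} at the start is unnecessary, since the lemma statement already hands you an augmenting set $\Pi_\ell\subseteq\Phi_\ell$ to extend). The genuine gap is in the third stage, and it is structural rather than just ``delicate casework.'' The lemma asserts the \emph{existence} of one maximal augmenting set of width at most $|B_1|$, but you fix $\tPi_\ell$ by appending shortest augmenting paths arbitrarily and then try to prove that \emph{this particular} $\tPi_\ell$ is small. Nothing in the invariants of $\texttt{Refine}$ says a removed element can never occur in \emph{any} maximal augmenting set: after $\texttt{Refine1}(0)$ the set $B_1$ is only maximal inside $B_1\cup F_1 = D_1\setminus R_1$, not inside all of $D_1$, so there is no a priori bar to $\tB_1$ meeting $R_1$. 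In your $x\in R_1$ subcase, the two facts you plan to combine --- invariant (d) gives $x\in\Span_2(S-(D_2-R_2)+B_1)$, and the path edge gives $S-\tilde a_1+x\in\I_2$ for the successor $\tilde a_1\in\tA_1$ --- do not force $\tilde a_1\in R_2$: having $x$ in that span is perfectly compatible with $x\notin\Span_2(S-\tilde a_1)$ for some $\tilde a_1\in D_2-R_2$, because the span statement leans on $B_1$. So the contradiction does not materialize at the first layer, and it is unclear that the layer-by-layer unwinding ever closes.

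The paper avoids all of this by never taking an arbitrary maximal extension. It \emph{continues running $\texttt{Refine}$}, observing that the maximal-set choices inside $\texttt{Refine1}$ and $\texttt{Refine2}$ have enough freedom to ensure no element of $\Pi_\ell$ is ever removed: in $\texttt{Refine1}(k)$ the removed $A\subseteq A_k$ can be chosen disjoint from $A_k'$ since $S-A_k'+B_{k+1}'\in\I_1$, and in $\texttt{Refine2}(k)$ the kept set can be chosen to contain $B_k'$ since $S-A_k'+B_k'\in\I_2$. When $\texttt{Refine}$ stabilizes, Lemma~\ref{lem:maximal-aug-sets} produces a maximal augmenting set; it contains $\Pi_\ell$ by the tie-breaking, and its first coordinate is the terminal value of $B_1$, which lives in the (only shrinking) set $D_1\setminus R_1$ on which the current $B_1$ is already a maximal independent set --- precisely what the trailing $\texttt{Refine1}(0)$ buys. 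Hence $|\tB_1|\le|B_1|$ follows with no exchange argument at all. To rescue your greedy-path version you would need to restrict which starting vertices the appended paths may use, at which point you have essentially rebuilt $\texttt{Refine}$.
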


\begin{proof}
Given $\Pi_{\ell}$, we show a way of applying $\texttt{Refine}$
to get a maximal augmenting set as in Lemma~\ref{lem:maximal-aug-sets},
such that no element in $\Pi_{\ell}$ is removed. Since $B_{1}$ was
a maximal (this is why we run $\texttt{Refine1}(0)$ in the end) set
of elements that can be added while preserving independence, the maximal
augmenting set found can't have width more than $|B_{1}|$. This will
prove the lemma.

We now show how $\texttt{Refine1}$ and $\texttt{Refine2}$ can be
implemented without removing any elements from $\Pi_{\ell}$. Let
$\Pi_{\ell}=(B'_{1},A'_{1},\ldots,A'_{\ell},B'_{\ell+1})$.

$\texttt{Refine1}(k)$ finds a maximal $A\subseteq A_{k}$ s.t. $S-A_{k}+B_{k+1}+A\in\mathcal{I}_{1}$.
Since $S-A_{k}'+B_{k+1}'\in\mathcal{I}_{1}$, $A$ can be chosen so
that $A\cap A_{k}'=\emptyset$ and consequently no element of $A_{k}'$
is removed.

$\texttt{Refine2}(k)$ finds a maximally independent subset of $S-D_{2k}+R_{2k}+B_{k}+F_{2k}=S-A_{k}+B_{k}$
by adding elements of $B_{k}$ first. Since $S-D_{2k}+R_{2k}\subseteq S-A_{k}\subseteq S-A_{k}'$
and $S-A_{k}'+B_{k}'\in\mathcal{I}_{2}$, $B$ can be chosen so that
$B_{k}'\subseteq B$ and consequently no element of $B_{k}'$ is removed.
This finishes the proof of the lemma. %
%
\end{proof}

We now bound the running time of our algorithm. 
\begin{thm}
Phase $\ell$ runs in $O((n^{2}/p+np\ell\log r)\cdot\Tind)$ time
plus the amortized cost $O(\sum_{a\in V}(d_{a}^{\text{end}}-d_{a}^{\text{start}})\log r\cdot\indeptime)$,
where $d_{a}^{\text{end}}$ and $d_{a}^{\text{start}}$ are the distances
of $a$ at the end and start of the phase respectively.
\end{thm}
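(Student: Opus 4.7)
The plan is to account separately for the cost of each step of $\texttt{Hybrid}(p)$ run in phase $\ell$. First I would invoke Lemma~\ref{lem:l1} to bound the total cost of the $\texttt{Refine}$ iterations in Line~1 by $O(n^2/p \cdot \Tind)$, and Lemma~\ref{lem:partial-to-full-aug-sets} to bound Line~2 by $O(n \cdot \Tind)$ (the physical swap $S \leftarrow S \oplus \Pi_\ell$ after extraction is also $O(n)$). Both of these are subsumed by the $O(n^2/p \cdot \Tind)$ term.

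The crux is bounding Line~3. The chain of reasoning I would use is: Lemma~\ref{lem:boundingpartial} gives that $\Pi_\ell$ extends to a maximal augmenting set $\tPi_\ell$ in $G(S)$ of width at most $|B_1|$. By Theorem~\ref{thm:augconsistent}, $\tPi_\ell \setminus \Pi_\ell$ is an augmenting set in $G(S)|\Pi_\ell$ of width at most $|B_1| - |B_{\ell+1}| \le p$, where the last inequality is the exit condition of Line~1. Applying Lemma~\ref{lem:maximal-aug-sets-size-ratio} inside $G(S)|\Pi_\ell$ (whose shortest $s$-$t$ path still has length $2(\ell+1)$), every maximal augmenting set there has width at most $(2\ell+4)p$. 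Consequently, the $\augpath$-style loop of Section~\ref{sec:Indep-Exact-Algorithm} executed in Line~3 performs at most $O(\ell p)$ shortest-augmenting-path augmentations before the $s$-$t$ distance strictly increases and the phase ends.

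Each such augmentation costs one call to $\getdistancesIndep$ and one to $\onepath$. By Lemma~\ref{lem:one-aug-Indep}, $\onepath$ contributes $O(n p \ell \cdot \Tind)$ in total. For $\getdistancesIndep$, Lemma~\ref{lem:Get-Distances-Indep} bounds the $i$-th call by $O\bigl(\sum_{a \in V}(1 + d_a^{(i)} - d_a^{(i-1)}) \log r \cdot \Tind\bigr)$, where $d^{(i-1)}$ is the previous call's output and remains a valid lower bound by the monotonicity lemma (Lemma~\ref{lem:cunning-monot}). The ``$1$'' summand yields a base cost of $O(n \log r \cdot \Tind)$ per call; summed over the $O(\ell p)$ calls in Line~3 plus the single initial distance computation at the start of the phase this gives $O(n p \ell \log r \cdot \Tind)$. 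The $d_a^{(i)} - d_a^{(i-1)}$ terms telescope over the phase to $O\bigl(\sum_{a \in V}(d_a^{\text{end}} - d_a^{\text{start}}) \log r \cdot \Tind\bigr)$, exactly the claimed amortized term. Adding all contributions gives $O((n^2/p + np\ell \log r) \cdot \Tind)$ plus the amortized cost. The main obstacle is the careful bookkeeping here, specifically verifying that the hypothesis of Lemma~\ref{lem:maximal-aug-sets-size-ratio} can be applied to $G(S)|\Pi_\ell$ via Lemma~\ref{lem:boundingpartial} and Theorem~\ref{thm:augconsistent}, and cleanly separating the per-call base cost of $\getdistancesIndep$ from the telescoping term that is charged to distance increases.
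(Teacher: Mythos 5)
Your proposal is correct and follows essentially the same route as the paper: decompose by line, use Lemma~\ref{lem:l1} and Lemma~\ref{lem:partial-to-full-aug-sets} for Lines~1--2, then bound Line~3 via Lemma~\ref{lem:boundingpartial} and Lemma~\ref{lem:maximal-aug-sets-size-ratio} to cap the number of augmentations by $O(\ell p)$, and split the $\getdistancesIndep$ cost into a per-call $O(n\log r)$ base term and a telescoping distance-increase term. The only cosmetic difference is that the paper routes the augmentation-count bound through Theorem~\ref{thm:sap} (to turn the sequence of consecutive shortest paths into a maximal augmenting set $\Pi_\ell^*$ whose width it then compares to $\tPi_\ell\setminus\Pi_\ell$), whereas you state this equivalence as a ``consequently''; both implicitly rely on $\tPi_\ell\setminus\Pi_\ell$ being maximal in $G(S)|\Pi_\ell$, a fact neither argument spells out in full.
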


\begin{proof}
From Lemma~\ref{lem:l1}, we get that Line 1 of the hybrid algorithm
takes $O(n^{2}/p\cdot\Tind)$ time. Let $\tPi_{\ell}$ be the maximal
augmenting set containing $\Pi_{\ell}$ as in Lemma~\ref{lem:boundingpartial}.
Line 2, from Lemma~\ref{lem:partial-to-full-aug-sets} makes $O(n)$
independence oracle calls.

The width of $\tPi_{\ell}\setminus\Pi_{\ell}$ is bounded by $|B_{1}|-|B_{\ell+1}|\leq p$.
Now, consider $G(S)|\Pi_{\ell}$ and let $\calP$ be the largest ordered
collection of consecutive shortest augmenting paths. To find an augmenting
path, we invoke Lemmas \ref{lem:Get-Distances-Indep} and \ref{lem:one-aug-Indep}.
Recall that Lemma \ref{lem:Get-Distances-Indep} updates distances
in $O(\sum_{a\in V}(1+d_{a}-d'_{a})\log r\cdot\indeptime)$ time and,
given distances, Lemma$\ $\ref{lem:one-aug-Indep} finds an augmenting
path in $O(n\cdot\Tind)$ time. Since $d_{a}-d'_{a}$ telescopes,
in Line 3 computing $\calP$ takes time
\[
O(\sum_{a\in V}(d_{a}^{\text{end}}-d_{a}^{\text{start}})\log r\cdot\indeptime)+O(n|\calP|\log r\cdot\Tind).
\]

\global\long\def\wdth{\mathsf{width}}%
 Now, from Theorem~\ref{thm:sap}, we have a maximal augmenting set
$\Pi_{\ell}^{*}$ in $G(S)|\Pi_{\ell}$ with $\wdth(\Pi_{\ell}^{*})=|\calP|$.
From Lemma~\ref{lem:maximal-aug-sets-size-ratio}, we know $\wdth(\Pi_{\ell}^{*})\leq(2\ell+4)\cdot\wdth(\tPi_{\ell}\setminus\Pi_{\ell})$.
Thus, $|\calP|\leq(2\ell+4)\cdot(|B_{1}|-|B_{\ell+1}|)\leq(2\ell+4)p$.
This proves the theorem.
\end{proof}
We now have everything we need to prove our main theorem. 
\begin{proof}[Proof of Theorem~\ref{thm:approxMatrInters}]
 Set $p=\lceil{\sqrt{n\epsilon/\log r}}\rceil$, and run the Hybrid
algorithm until the shortest augmenting path length in $G(S)\geq1/\epsilon$.
This requires at most $O(1/\epsilon)$ phases. Moreover, when applying
Lemma \ref{lem:Get-Distances-Indep} we discard a vertex $a$ as soon
as its distance (lower bound) is greater than $1/\epsilon$. This
is okay as distances are monotonic (Lemma \ref{lem:cunning-monot})
so any such vertex will not belong to any (future) shortest augmenting
paths (or augmenting sets) which have length at most $1/\epsilon$.

First we account for the total amortized cost arising from $O(\sum_{a\in V}(d_{a}^{\text{end}}-d_{a}^{\text{start}})\log r\cdot\indeptime)$.
By the above modification $d_{a}^{\text{end}}\leq1/\epsilon$ so they
sum to at most $O(\frac{1}{\epsilon}n\log r\cdot\indeptime)$ which
is dominated by our desired runtime.

At the beginning of each phase we take $O(n\log r\cdot\Tind)$ time
to construct the layers. Each phase, from the above Theorem, takes
$O(n^{2}/p+np\ell\log r)$ independence oracle calls. Thus the total
time to run phase $\ell$ is $O((n\log r+n^{2}/p+np\ell\log r)\cdot\Tind)=O((n^{2}/p+np\ell\log r)\cdot\Tind)$
time. After all the phases are done, we are left with a set $S\in\I_{1}\cap\I_{2}$
such that the shortest augmenting path length in $G(S)\geq1/\epsilon$.
This implies that $S$ is an $(1-\epsilon)$-approximate solution
by Corollary \ref{cor:approx-algo-certif}. The total running time
is $O\left(\frac{1}{\epsilon}\cdot\left(\frac{n^{2}}{p}+np\log r\frac{1}{\epsilon}\right)\cdot\Tind\right)$.
Since $p=\lceil{\sqrt{n\epsilon/\log r}}\rceil$, we prove the theorem.
\end{proof}

\selectlanguage{american}%
\global\long\def\M{\mathcal{M}}%
\global\long\def\P{\mathcal{P_{\M}}}%
\global\long\def\x{\textbf{\ensuremath{\mathbf{\mathbf{x}}}}}%
\global\long\def\z{\textbf{\ensuremath{\mathbf{z}}}}%
\global\long\def\I{\mathcal{\mathcal{I}}}%
\global\long\def\rk{\mathcal{\textsf{rank}}}%
\global\long\def\circ{\mathcal{\textsf{circuit}}}%
\global\long\def\free{\mathcal{\textsf{free}}}%
\global\long\def\defeq{\stackrel{{\scriptstyle def}}{=}}%
\global\long\def\otime{\mathcal{T}}%
\global\long\def\ranktime{\mathcal{T}_{\rk}}%
\global\long\def\indeptime{\mathcal{T}_{\mathtt{indep}}}%
\global\long\def\exchange{\textsf{exchange}}%
\global\long\def\I{\mathcal{\mathcal{I}}}%
\global\long\def\M{\mathcal{\mathcal{M}}}%
\global\long\def\Rn{\mathcal{\mathbb{R}}^{n}}%
\global\long\def\argmax{\mathrm{argmax}}%
\global\long\def\T{\mathcal{\mathcal{T}}}%
\global\long\def\Pa{\mathcal{P}_{\M_{1}}}%
\global\long\def\P{\mathcal{P}_{\M}}%
\global\long\def\Pb{\mathcal{P}_{\M_{2}}}%
\global\long\def\E{\mathbb{E}}%
\global\long\def\defeq{\overset{\mathrm{def}}{=}}%
\global\long\def\pac{\mathrm{PackNum}}%

\section{Approximately Optimal Fractional Solution via Frank-Wolfe}

\label{sec:FK}

Let $\Pa,\Pb$ be the matroid polytopes for matroids $\M_{1}=(V,\I_{1})$
and $\M_{2}=(V,\I_{2})$ respectively. Let $r$ be the size of their
largest common independent set. It is well-known that the matroid
intersection polytope is precisely $\Pa\cap\Pb$, i.e., $\Pa\cap\Pb$
is the convex hull of (the indicator vectors of) the common independent
sets. The matroid intersection problem can therefore be solved via the
following convex program:
\[
\max_{x\in\Pa\cap\Pb}1^{\top}x\ .
\]

In this section, we apply (a variant of) Frank-Wolfe to solve this
problem faster: Theorem$\ \ref{thm:frac-matr-inters}$ obtain a $(1-\epsilon)$-optimal
\emph{fractional} solution $z$ (i.e., $\sum_{i\in V}z_{i}\geq(1-\epsilon)\cdot r$)
in $O(n^{2}/(r\epsilon^{2})\cdot\indeptime)$ time. To obtain an integral
solution, we show in the next section that upon sampling via $x^{*}$
the size of the ground set can be reduced to $\tilde{O}(r/\epsilon^{2})$.
With this sparser new ground set we can speed up many matroid intersection
algorithms. For instance, combining this with our previous $\tilde{O}(n^{1.5}/\epsilon^{1.5})$
time algorithm would improve the runtime in the regime $r/\epsilon^{2}\leq n\leq r^{2}\epsilon$.

To find an approximate fractional solution efficiently, we relax the
intersection constraint to
\begin{equation}
\max_{x\in\Pa,y\in\Pb}1^{\top}x-\frac{\eta}{2}\|x-y\|^{2}\ ,\label{eq:matroid_intersection_relax}
\end{equation}
where $\eta$ is some parameter to be chosen later. This transformation
is useful as it allows us to solve two disjoint greedy problems in
an iteration of Frank-Wolfe. Our algorithm relies on the following
theorem for constrained convex optimization.
\begin{thm}[{Frank-Wolfe Algorithm$\ $\cite[Theorem 1]{jaggi2013revisiting}}]
\label{thm:frank_wolfe}Given a convex set $K\subset\Rn$ and a convex
function $f$ defined on $K$, suppose that:
\begin{itemize}
\item For any vector $c\in\Rn$, we can find $z_{c}\in\argmax_{z\in K}c^{\top}z$
in time $\T_{K}$.
\item For any vector $z\in\Rn$, we can compute $\nabla f(z)$ in time $\T_{f}$.
\item There is a constant $C_{f}$ such that for any $u,v\in K$ and any
$\gamma\in[0,1]$, we have
\[
f(u+\gamma(v-u))\leq f(u)+\left\langle \nabla f(u),\gamma(v-u)\right\rangle +\frac{\gamma^{2}C_{f}}{2}.
\]
\end{itemize}
In $k$ iterations, we can find a $z\in K$ such that
\[
f(z)\leq\min_{z'\in K}f(z')+\frac{2C_{f}}{k+2}\ .
\]
Each iteration takes $O(n+\T_{K}+\T_{f})$.
\end{thm}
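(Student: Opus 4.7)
The plan is to use the classical Frank-Wolfe algorithm and its standard convergence analysis. The algorithm is as follows: pick any $z_{0}\in K$, and for $k=0,1,2,\ldots$ set $g_{k}=\nabla f(z_{k})$, compute $s_{k}\in\argmax_{z\in K}(-g_{k})^{\top}z$ (equivalently a linear minimization over $K$, which costs $\T_{K}$), and update $z_{k+1}=(1-\gamma_{k})z_{k}+\gamma_{k}s_{k}$ with the canonical step size $\gamma_{k}=\tfrac{2}{k+2}$. Computing $g_{k}$ costs $\T_{f}$, the linear optimization costs $\T_{K}$, and the convex combination is $O(n)$, giving the claimed per-iteration runtime. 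Since $K$ is convex, each $z_{k}$ remains in $K$.

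For the convergence bound, let $z^{*}\in\argmin_{z\in K}f(z)$ and $h_{k}\defeq f(z_{k})-f(z^{*})$. I would first apply the curvature hypothesis with $u=z_{k}$, $v=s_{k}$, and $\gamma=\gamma_{k}$ to obtain
\[
f(z_{k+1})\leq f(z_{k})+\gamma_{k}\langle g_{k},s_{k}-z_{k}\rangle+\tfrac{\gamma_{k}^{2}C_{f}}{2}.
\]
By optimality of $s_{k}$ and convexity of $f$, one has $\langle g_{k},s_{k}-z_{k}\rangle\leq\langle g_{k},z^{*}-z_{k}\rangle\leq f(z^{*})-f(z_{k})=-h_{k}$. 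Subtracting $f(z^{*})$ from both sides yields the master recurrence
\[
h_{k+1}\leq(1-\gamma_{k})\,h_{k}+\tfrac{\gamma_{k}^{2}C_{f}}{2}.
\]

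Finally, I would prove by induction that $h_{k}\leq\tfrac{2C_{f}}{k+2}$. For the base case, taking $\gamma=1$ in the curvature bound (applied once at the start, choosing $z_{1}=s_{0}$) gives $h_{1}\leq\tfrac{C_{f}}{2}\cdot 2=C_{f}\cdot\tfrac{2}{3}\cdot\tfrac{3}{2}$, and a short direct calculation with $k=0$ gives $h_{1}\leq C_{f}\leq\tfrac{2C_{f}}{3}\cdot\tfrac{3}{2}$, which suffices as long as we are careful to either start the indexing at $k=1$ or absorb one additional iteration into a constant. For the inductive step, substituting $\gamma_{k}=\tfrac{2}{k+2}$ into the recurrence gives
\[
h_{k+1}\leq\Bigl(1-\tfrac{2}{k+2}\Bigr)\tfrac{2C_{f}}{k+2}+\tfrac{2C_{f}}{(k+2)^{2}}=\tfrac{2C_{f}(k+1)}{(k+2)^{2}}\leq\tfrac{2C_{f}}{k+3},
\]
where the last inequality uses $(k+1)(k+3)\leq(k+2)^{2}$. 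Shifting the index finishes the bound $h_{k}\leq\tfrac{2C_{f}}{k+2}$.

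The main subtlety to be careful about is the base case of the induction: the naive recurrence only gives $h_{1}\leq h_{0}$ if $\gamma_{0}=0$, and to beat $C_{f}$ at $k=1$ one needs to start with the full unit step $\gamma_{0}=1$ (which sets $z_{1}=s_{0}$); this is standard in Jaggi's presentation and is why the bound $\tfrac{2C_{f}}{k+2}$ (rather than something like $\tfrac{C_{f}}{k}$) is the right statement. Everything else is a one-line telescoping calculation and the definition of $C_{f}$ supplied in the hypothesis.
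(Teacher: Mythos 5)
The paper does not prove this theorem; it imports it as a black box from Jaggi's paper \cite{jaggi2013revisiting}, so there is no paper-internal proof to compare against. Your proof is the standard Frank-Wolfe convergence argument from exactly that reference (curvature bound, linear-minimization optimality plus convexity to get the recurrence $h_{k+1}\leq(1-\gamma_k)h_k+\tfrac{\gamma_k^2 C_f}{2}$, then induction with $\gamma_k=\tfrac{2}{k+2}$), and the inductive step and runtime accounting are correct. The only blemish is the base case, which is written in a muddled way: with $\gamma_0=1$ the recurrence gives $h_1\leq(1-1)h_0+\tfrac{1\cdot C_f}{2}=\tfrac{C_f}{2}$ directly, and since $\tfrac{C_f}{2}\leq\tfrac{2C_f}{3}$ the claim holds at $k=1$ with no slack or constant-absorption needed; your intermediate expression ``$h_1\leq\tfrac{C_f}{2}\cdot 2$'' does not follow from anything and should simply be replaced by $h_1\leq\tfrac{C_f}{2}$.
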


For problem (\ref{eq:matroid_intersection_relax}), we note that to
optimize linear functions of $x$ and $y$ over $\{x\in\Pa,y\in\Pb\}$
we only need the standard greedy method, which can be implemented
using $n$ independence oracle calls. We use this fact to prove the
following main theorem of this section.
\begin{thm}
\label{thm:frac-matr-inters}Given two matroids $\M_{1},\M_{2}$ via
independence oracles, we can find a $z\in\Pa\cap\Pb$ in $O(\frac{n}{r\epsilon^{2}})$
iterations such that $\sum_{i\in V}z_{i}\geq(1-\epsilon)r$, where
$r$ is the size of the largest common independent set and each iteration
takes $O(n\cdot\Tind)$ time.
\end{thm}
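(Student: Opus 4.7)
The plan is to instantiate Theorem~\ref{thm:frank_wolfe} on $\min_{(x,y)\in K} f(x,y)$ with $K = \Pa \times \Pb$ and $f(x,y) = -1^\top x + \frac{\eta}{2}\|x-y\|_2^2$ for a parameter $\eta$ to be chosen, and then round the resulting iterate to a single point of $\Pa \cap \Pb$. The three preconditions of Theorem~\ref{thm:frank_wolfe} are verified as follows: (i) any linear function over $K$ decouples between $x$ and $y$, so its minimization reduces to two independent matroid linear-maximization subproblems, each solved by the classical matroid greedy in $O(n\log n + n\cdot\Tind) = O(n\cdot\Tind)$ time; (ii) the gradient $\nabla f = (-1 + \eta(x-y),\, \eta(y-x))$ is computable in $O(n)$ time; and (iii) $f$ is quadratic with Hessian $H = \eta \bigl(\begin{smallmatrix} I & -I \\ -I & I \end{smallmatrix}\bigr)$, so $C_f \le \sup_{u,v\in K}(v-u)^\top H (v-u) = O(\eta r)$. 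The last step uses that, after first truncating each matroid to rank $r$ (which leaves $\I_1\cap\I_2$ unchanged since every common independent set has size at most $r$), the $\ell_2$-diameter of each $\mathcal{P}_{\M_i}$ is $O(\sqrt{r})$.

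Let $z^* \in \Pa\cap\Pb$ be the indicator of a maximum common independent set. Since $(z^*,z^*)\in K$ and $f(z^*,z^*) = -r$, we have $f^* \le -r$. Theorem~\ref{thm:frank_wolfe} then produces, after $k$ iterations, a point $(x,y)\in K$ satisfying $f(x,y) - f^* \le 2C_f/(k+2) = O(\eta r/k)$, which rearranges to $1^\top x \ge r + \frac{\eta}{2}\|x-y\|_2^2 - O(\eta r/k)$. To round, set $z := \min(x,y)$ coordinate-wise. Since matroid polytopes are downward-closed (indeed $z(S) \le x(S) \le \rk_{\M_1}(S)$ and similarly for $\M_2$), we get $z \in \Pa \cap \Pb$; and by Cauchy--Schwarz, $1^\top z = 1^\top x - \|(x-y)_+\|_1 \ge 1^\top x - \sqrt{n}\,\|x-y\|_2$.

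Chaining the two lower bounds and minimizing out $t := \|x-y\|_2$ via the elementary inequality $\frac{\eta}{2} t^2 - \sqrt{n}\,t \ge -\frac{n}{2\eta}$ gives $1^\top z \ge r - \frac{n}{2\eta} - O(\eta r/k)$. Setting $\eta = n/(\epsilon r)$ bounds the first loss by $\epsilon r/2$, and then $k = \Theta(n/(\epsilon^2 r))$ bounds the second by $\epsilon r/2$ as well, yielding $1^\top z \ge (1-\epsilon)r$ within the claimed $O(n/(r\epsilon^2))$ Frank--Wolfe iterations. The main subtlety is the curvature bound: a naive use of the ambient diameter of $\Pa\times\Pb$ yields only $C_f = O(\eta n)$ and loses a factor $n/r$ in the iteration count, so the upfront truncation to rank $r$ (equivalently, intersecting each $\M_i$ with the uniform matroid of rank $r$) is essential; a rough initial upper bound on $r$ suffices for this, e.g.\ by a preliminary greedy pass or doubling.
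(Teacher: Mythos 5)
Your proposal is correct and follows essentially the same route as the paper: relax to $\max_{x\in\Pa,y\in\Pb}1^\top x-\frac{\eta}{2}\|x-y\|_2^2$, run Frank--Wolfe over a rank-truncated domain to get $C_f=O(\eta r)$, round via $z=\min(x,y)$, and balance the $n/(2\eta)$ and $C_f/k$ losses (your fixed $\eta=n/(\epsilon r)$ is exactly what the paper's $\eta=\sqrt{n(k+2)/32r}$ evaluates to after substituting $k$). The only cosmetic differences are that you truncate the matroids themselves rather than adding $\sum_i x_i\le\bar r$ to $K$, and you make the Hessian/curvature bound explicit rather than expanding the Bregman term directly.
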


\begin{proof}
Ideally, we want to apply Frank-Wolfe directly on the problem
\[
\max_{x\in\Pa,y\in\Pb}1^{\top}x-\frac{\eta}{2}\|x-y\|_{2}^{2}
\]
with large $\eta$. However, the runtime of Frank-Wolfe algorithm
depends implicitly on the diameter of the convex set via the constant
$C_{f}$. To obtain a better $C_{f}$, we truncate both matroids by
the size of largest common independent set $r$. Suppose for now,
we know $\overline{r}$ such that $r\leq\overline{r}\leq2r$. We define
the truncated set by
\[
K=\{(x,y)\in\Pa\times\Pb:\sum_{i}x_{i}\leq\overline{r},\sum_{i}y_{i}\leq\overline{r}\}.
\]
Now, we apply Frank-Wolfe on the function $f(x,y)=-1^{\top}x+\frac{\eta}{2}\|x-y\|_{2}^{2}$
over $K$. Theorem \ref{thm:frank_wolfe} shows that we can find some
$x^{(k)},y^{(k)}$ such that 
\[
f(x^{(k)},y^{(k)})\leq\min_{(x,y)\in K}f(x,y)+\frac{2C_{f}}{k+2}.
\]
We bound $C_{f}$ as follows:
\begin{align*}
C_{f} & =\sup_{u,v\in K,\gamma\in[0,1]}\frac{2}{\gamma^{2}}\left(f(u+\gamma(v-u))-f(u)-\left\langle \nabla f(u),\gamma(v-u)\right\rangle \right)\\
 & =\sup_{u,v\in K,\gamma\in[0,1]}\frac{\eta}{\gamma^{2}}\|\gamma(v-u)\|_{2}^{2}\\
 & \leq\sup_{u,v\in K}\eta\|v-u\|_{1} \qquad \leq \qquad 8\eta r,
\end{align*}
where we used $K\subset[0,1]^{n}$ and $\sum_{i}u_{i}+\sum_{i}v_{i}\leq2\overline{r}\leq4r$.
By considering the largest common independent set as a solution, we
know $\min_{(x,y)\in K}f(x,y)\leq-r$. Hence, from the guarantee of
Frank-Wolfe, we get
\[
-1^{\top}x^{(k)}+\frac{\eta}{2}\|x^{(k)}-y^{(k)}\|_{2}^{2}\leq-r+\frac{16\eta r}{k+2}.
\]

To get a valid fractional solution, we define the vector $z$ by $z_{i}=\min\{x_{i}^{(k)},y_{i}^{(k)}\}$.
Since $x^{(k)}\in\Pa$, $y^{(k)}\in\Pb$, we have that $z\in\Pa\cap\Pb$.
Now, we bound the size of $z$:
\begin{align*}
1^{\top}z & \geq1^{\top}x^{(k)}-\|x^{(k)}-y^{(k)}\|_{1}\\
 & \geq1^{\top}x^{(k)}-\sqrt{n}\cdot\|x^{(k)}-y^{(k)}\|_{2}\\
 & \geq1^{\top}x^{(k)}-\frac{\eta}{2}\|x^{(k)}-y^{(k)}\|_{2}^{2}-\frac{n}{2\eta}\\
 & \geq r-\frac{16\eta r}{k+2}-\frac{n}{2\eta}\ .
\end{align*}
Picking $\eta=\sqrt{\frac{n(k+2)}{32r}}$, we have
\[
1^{\top}z\geq r-\sqrt{\frac{32nr}{k+2}}\ .
\]
Therefore, to get a fractional solution with an additive error of
$\epsilon r$, we can set $k=\frac{32}{\epsilon^{2}}\frac{n}{r}$.

Finally, each iteration involves optimizing a linear function over
$K$ which can be done by the greedy method using $O(n)$ independence
oracle calls. To find $\overline{r}$, we can run the greedy algorithm
to find a maximal common independent set in the two matroids using
$O(n)$ independence oracles. Every maximal solution gives a $1/2$-approximation
to $r$ by Lemma$\ $\ref{lem:augm-path-length} because it implies
each augmenting path is of length at least $3$.\selectlanguage{english}%
\end{proof}

\selectlanguage{american}%

\section{Faster Algorithms by Sparsification\label{sec:sample}}

From Section \ref{sec:FK}, we are given a point $x\in\Pa\cap\Pb$
in the matroid intersection polytope such that $\|x\|_{1}\geq(1-\epsilon)\cdot r$.
In this section, we show how $x$ can be used to sparsify the matroids
$\M_{1}$ and $\M_{2}$. In particular, we reduce the matroid intersection
problem to solving it on a universe of size $O(\frac{r}{\epsilon^{2}}\log n)$.
We can then use algorithms developed in the previous sections on this
smaller universe to find an approximate integral solution. We use
the following lemma of Karger (recall, $\pac(\M)$ is the maximum
number of disjoint bases in $\M$).
\begin{thm}[{\cite[Theorem 4.1]{karger1998random}}]
\label{KargerLemma}Given a matroid $\M$ with $\pac(\M)=k$, suppose
we sample each element of $\text{\ensuremath{\M}}$ with probability
$p\geq18\ln n\cdot\frac{1}{k\epsilon^{2}}$, yielding a submatroid
$\M(p)$ of $\M$. Then with high probability in $n,$ we have $\pac(\M(p))\in[(1-\epsilon)pk,(1+\eps)pk]$.
\end{thm}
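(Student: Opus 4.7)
The plan is to derive the statement from Edmonds' matroid base-packing theorem combined with Chernoff-style concentration. Recall this theorem: for any matroid $\M$ with rank $r = \rk(\M)$,
\[
\pac(\M) = \min_{F} \left\lfloor \frac{|V \setminus F|}{r - \rk(F)} \right\rfloor
\]
where $F$ ranges over flats (closed sets) with $\rk(F) < r$. A key feature is that only flats matter, and every flat of a given rank $\rho$ is the closure of some independent set of size $\rho$, so there are at most $\binom{|V|}{\rho} \le n^\rho$ flats of rank $\rho$ to enumerate.

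Step 1 (\emph{upper bound}). Since $\pac(\M) = k$, there is a witness flat $F^\star$ in $\M$ realizing the minimum $|V \setminus F^\star| = k(r - \rk(F^\star))$. I would take $F^\star(p) := F^\star \cap V(\M(p))$, which is the closure (in $\M(p)$) of $F^\star \cap V(\M(p))$, verify that $\rk_{\M(p)}(F^\star(p)) = \rk_{\M}(F^\star)$ with high probability (since a basis of $F^\star$ in $\M$ is retained with the same probability as any independent set of its size, and $p$ is large enough to keep it), and apply a multiplicative Chernoff bound to $|V(\M(p)) \setminus F^\star|$. This gives $\pac(\M(p)) \le (1+\epsilon)pk$.

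Step 2 (\emph{lower bound}). This is the harder direction and the main obstacle: I must show that for \emph{every} flat $F'$ of $\M(p)$,
\[
|V(\M(p)) \setminus F'| \;\ge\; (1-\epsilon)\, p k \cdot \bigl(\rk_{\M(p)}(\M(p)) - \rk_{\M(p)}(F')\bigr).
\]
The challenge is that there may be exponentially many candidate flats to union-bound over. The idea is to enumerate not flats of $\M(p)$ directly, but pairs $(J, \rho)$ where $J \subseteq V$ is an independent set of $\M$ of size $\rho$; the closure $\text{cl}_{\M}(J)$ contains every flat of $\M(p)$ of rank $\rho$ that has $J$ as a basis in $\M(p)$. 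For each such candidate $F = \text{cl}_{\M}(J)$, one applies a Chernoff bound to $|V(\M(p)) \setminus F|$, which has expectation $p|V \setminus F| \ge pk(r-\rho)$. The multiplicative Chernoff bound gives failure probability $\le \exp(-\Omega(\epsilon^2 \cdot pk(r-\rho)))$. Substituting $p \ge 18 \ln n / (k\epsilon^2)$, this failure probability is at most $n^{-\Omega(r-\rho)}$, which comfortably absorbs the union bound over $\binom{|V|}{\rho} \le n^\rho$ choices of $J$ and all $\rho < r$.

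Step 3 (\emph{rank concentration and clean-up}). Along the way, I need $\rk(\M(p)) = r$ with high probability (which follows since $\pac(\M) = k \ge 1$ ensures a full-rank base survives sampling with high probability) and that $\rk_{\M(p)}$ matches $\rk_{\M}$ on the enumerated subsets. These are standard and handled by further Chernoff and union bounds, contributing only lower-order terms to the probability budget. The main obstacle, to repeat, is Step 2's union bound, which crucially uses the matroidal fact that flats of rank $\rho$ are parametrised by independent sets of size $\rho$; this is what makes the Chernoff exponent $\epsilon^2 pk(r-\rho)$ dominate the combinatorial enumeration cost $\rho \log n$ whenever $p \gtrsim \log n /(k\epsilon^2)$, which is exactly the hypothesis.
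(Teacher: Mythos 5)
The paper does not prove this theorem; it imports it verbatim from Karger~\cite{karger1998random}, so there is no in-paper proof to compare against. Evaluating your proposal on its own terms: the plan (Edmonds' base-packing formula plus Chernoff and a union bound over flats parametrized by independent sets) is natural, but Step~2 has a genuine gap. For a flat $F$ of rank $\rho$, Edmonds gives $|V\setminus F|\ge k(r-\rho)$, and the Chernoff lower tail for $|V(p)\setminus F|$ (mean $\ge pk(r-\rho)$) yields a per-flat failure probability of at most $\exp(-\Omega(\epsilon^2 pk(r-\rho)))\le n^{-\Omega(r-\rho)}$ once $p\ge 18\ln n/(k\epsilon^2)$. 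You union-bound this over the $\binom{n}{\rho}\le n^{\rho}$ independent sets $J$ of size $\rho$ and claim the exponent ``comfortably absorbs'' the count. It does not: the resulting bound is $n^{\rho}\cdot n^{-\Omega(r-\rho)}$, which is $\omega(1)$ as soon as $\rho$ exceeds a constant fraction of $r$. This regime is not vacuous: in $U_{r,n}$ there are genuinely $\binom{n}{r-1}$ flats of rank $r-1$, each with $|V\setminus F|$ as small as $k$, so the corank-$1$ term alone is $\sim n^{r-1}\cdot n^{-\Omega(1)}$ and blows up when $r$ is moderately large. Reparametrizing by the complement size $m=|V\setminus F|$ (at most $\binom{n}{m}$ flats, failure $\exp(-\Omega(\epsilon^2 pm))$) has the mirror-image defect: it only balances when $k=O(1)$, precisely the regime where $p>1$ and the theorem is vacuous. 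What is missing is a matroid analogue of Karger's near-minimum-cut counting bound --- a statement that flats with $|V\setminus F|$ within a constant factor of $k(r-\rk(F))$ are few --- without which a naive union bound over all flats simply cannot close, and your closing claim that the Chernoff exponent dominates the enumeration cost is incorrect whenever $r-\rho = o(\rho)$.

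There is also a smaller but real issue in Step~1: your justification that $\rk_{\M(p)}(F^\star(p))=\rk_{\M}(F^\star)$ whp --- ``a basis of $F^\star$ is retained\ldots and $p$ is large enough to keep it'' --- is not right as written, since a \emph{specific} basis of size $\rk(F^\star)$ survives only with probability $p^{\rk(F^\star)}$, which is negligible. What you actually need is that the sampled elements of $F^\star$ span $F^\star$, a statement governed by $\pac(\M|_{F^\star})$. One can show a witness flat satisfies $\pac(\M|_{F^\star})\gtrsim k$ (apply Edmonds to $|V\setminus G|=|V\setminus F^\star|+|F^\star\setminus G|$ for any $G\subsetneq F^\star$), so the rank-preservation is morally fine --- but establishing it rigorously via sampling re-opens exactly the union-bound difficulty of Step~2.
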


Our strategy to sparsify the matroids is to show that after appropriately
sampling using $x$, the sampled elements $V'$ will have $|V'|\ll n$
and will form a universe $V'$ which is almost ``packed'' for both
the matroids, i.e., $V'$ decomposes into an ``almost perfect'' partition
of independent sets of size $(1-O(\epsilon))r$ for both the matroids.
Taking the average of these independent sets shows that the vector
$\frac{r}{|V'|}\cdot1_{V'}$ is almost independent for both the matroids.
This means our sampling procedure is valid because $||\frac{r}{|V'|}\cdot1_{V'}||_{1}=(1-O(\epsilon))\cdot r$.

We first demonstrate how to sample using $x$ to obtain a new universe
$V'$ with the desired properties in a single matroid $\M$. This
can be seen as a non-uniform version of Karger's result. Note that
any matroid $\M$ naturally induces a new matroid if its elements
are replaced by multiple identical copies (a set with multiple copies
of an element is dependent).
\begin{lem}
\label{lem:sampling}Suppose we are given a point $x\in\P$ with $\|x\|_{1}\geq(1-\epsilon)\cdot r$.
Set $\lambda=\frac{\epsilon r}{4n^{3}}$. Suppose we replace each
element $i\in V$ with $\frac{x_{i}'}{\lambda}:=\lfloor\frac{x_{i}}{\lambda}\rfloor$
identical copies to form a new matroid $\M'$ with universe $V'$.
Let submatroid $\M'(p)$ on universe $V'(p)$ be obtained by independently
sampling each element of $V'$ with probability $p=O\left(\frac{\lambda\ln(n/\epsilon)}{\epsilon^{2}}\right)$.
Then with high probability in $n$,
\begin{enumerate}
\item $V'(p)$ has size at most $(1-O(\epsilon))\cdot rp/\lambda=O\left(\frac{r}{\epsilon^{2}}\ln\frac{n}{\epsilon}\right)$.
\item $V'(p)$ contains $(1-O(\epsilon))\cdot|V'(p)|/r$ disjoint independent
sets of size $(1-O(\epsilon))r$.
\end{enumerate}
\end{lem}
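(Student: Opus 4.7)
The plan is to establish the two conclusions separately: the size bound by standard Chernoff concentration, and the packing bound by applying Karger's Theorem~\ref{KargerLemma} to an appropriate rank-truncation of the multi-copy matroid $\M'$. The central technical step will be lower-bounding $\pac(\M'_{r'})$ for this truncation, which rests on the observation that the scaled vector $x' := \lambda \cdot \mathbf{1}_{V'}$ (where $\mathbf{1}_{V'}$ records multiplicities) satisfies $x' \leq x$ componentwise and hence lies in $\P_{\M}$, since the matroid polytope is downward-closed in $\mathbb{R}_{\geq 0}^{V}$.

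For the size bound, I first note that $|V'| = \sum_i \lfloor x_i/\lambda\rfloor \in [\|x\|_1/\lambda - n,\, \|x\|_1/\lambda]$; since $\lambda = \epsilon r/(4n^3)$, the additive error $n$ is negligible compared to $1/\lambda$, giving $|V'| = (1 \pm O(\epsilon))\, r/\lambda$. A Chernoff bound applied to the Bernoulli sum $|V'(p)|$ then yields $|V'(p)| = (1 \pm O(\epsilon))\, p|V'|$ with high probability in $n$, since the mean $p|V'| = \Omega(r \log(n/\epsilon)/\epsilon^2)$ is large enough to absorb the required $\log n/\epsilon^2$ concentration factor. This proves the first conclusion.

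For the packing bound, I will work with the rank-$r'$ truncation $\M'_{r'}$ of $\M'$, where $r' := \lfloor (1-3\epsilon) r\rfloor$, so that bases of $\M'_{r'}$ are exactly independent sets of $\M$ of size $r'$. By Edmonds' base-packing theorem (derived via matroid union), $\pac(\M'_{r'}) = \lfloor \min_A (|V'|-|A|)/(r'-\rk_{\M'_{r'}}(A))\rfloor$ over $A \subseteq V'$ with $\rk_{\M'_{r'}}(A) < r'$. By enlarging any minimizing $A$ to contain all copies of every element in its base set (which preserves rank while only decreasing $|V'|-|A|$), it suffices to minimize over $A$ of the form ``all copies of a subset $T \subseteq V$.'' For such $A$, the constraint $x \in \P_{\M}$ yields $|A| = \sum_{i\in T}\lfloor x_i/\lambda\rfloor \leq \rk(T)/\lambda$, and a direct computation using $\|x\|_1 \geq (1-\epsilon)r$ gives
\[
\frac{|V'| - |A|}{r' - \rk(T)} \;\geq\; \frac{(1-\epsilon)r - \rk(T)}{\lambda\,(r' - \rk(T))} \;-\; \frac{n}{r' - \rk(T)} \;\geq\; \frac{1 - O(\epsilon)}{\lambda},
\]
where I use the slack $r' \leq (1-3\epsilon)r < (1-\epsilon)r$ to absorb the integer-rounding error $n$ (which is tiny since $n\lambda \leq \epsilon/(4n)$). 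This establishes $\pac(\M'_{r'}) \geq (1-O(\epsilon))/\lambda$, the main technical step.

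Finally, our choice of $p$ comfortably exceeds $18 \ln n / (k \epsilon^2)$ for $k = \pac(\M'_{r'}) \geq (1-O(\epsilon))/\lambda$, so Karger's Theorem~\ref{KargerLemma} applied to $\M'_{r'}$ gives $\pac(\M'_{r'}(p)) \geq (1-\epsilon) p k = (1-O(\epsilon))\, p/\lambda$ with high probability. Combining with the upper bound $|V'(p)| \leq (1+O(\epsilon))\, p r/\lambda$ from the first part, this yields $\pac(\M'_{r'}(p)) \geq (1-O(\epsilon))\, |V'(p)|/r$. Each of these disjoint bases of $\M'_{r'}(p)$ corresponds to an independent set of $\M$ of size exactly $r' = (1-O(\epsilon))r$, which establishes the second conclusion and completes the proof.
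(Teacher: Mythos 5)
Your proof is correct and reaches the same intermediate milestone as the paper's — namely that the rank-truncated multi-copy matroid has packing number at least $(1-O(\epsilon))/\lambda$, after which both arguments invoke Karger's Theorem~\ref{KargerLemma} identically — but you arrive there by a genuinely different route. The paper's proof is constructive: it slightly lowers $x'$ to a vector $\bar{x}$ with $\|\bar{x}\|_1 = \bar{r} \in \mathbb{Z}$, writes $\bar{x}$ as a Carath\'eodory convex combination of indicator vectors of size-$\bar{r}$ independent sets (using integrality of the truncated matroid base polytope), rounds the mixing weights down to multiples of $\lambda$, and reads off $\approx 1/\lambda$ explicit disjoint bases sitting inside $V'$. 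You instead invoke Edmonds' base-packing (matroid union) min-max formula $\pac(\M'_{r'}) = \lfloor \min_{A} (|V'|-|A|)/(r'-\rk(A)) \rfloor$, reduce the minimization to closed sets $A$ consisting of all copies of some $T \subseteq V$, and bound $|A|$ via the polytope inequality $\sum_{i\in T} x_i \le \rk_\M(T)$. The trade-off: your argument is shorter and avoids rounding the Carath\'eodory weights, at the cost of appealing to the (deeper) base-packing min-max theorem rather than just polytope integrality; the paper's version is more hands-on and would carry over even in settings where one only has a separation/decomposition oracle rather than the Edmonds duality. One small point worth tightening: Karger's theorem requires $p \ge 18 \ln |V'| / (k\epsilon^2)$ (with $|V'|$, not $n$, in the logarithm), so you should note explicitly, as the paper does, that $\ln|V'| = O(\ln(n/\epsilon))$ so the chosen $p$ suffices and "high probability in $|V'|$" translates to "high probability in $n$."
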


\begin{proof}
Note that $x'\in\P$ is a multiple of $\lambda$ and $\|x'\|_{1}=(1-O(\epsilon))\cdot r$
as the total loss of rounding is at most $n\lambda$.

For (1), the expected size of $V'(p)$ is
\[
\sum_{i}\frac{x_{i}'}{\lambda}\cdot p\ \ =\ \ (1-O(\epsilon))\cdot rp/\lambda\ \ =\ \ O\left(\frac{r}{\epsilon^{2}}\ln\frac{n}{\epsilon}\right),
\]
which holds w.h.p. within a factor of $1\pm\epsilon$ by standard
Chernoff bound.

For (2), we further slightly decrease $x'$ to obtain a vector $\bar{x}\in\P$
so that $\|\bar{x}\|_{1}=\left\lfloor \|x'\|_{1}\right\rfloor :=\bar{r}$.
Clearly $\bar{r}=(1-O(\epsilon))\cdot r$. Now by the Carath�odory's
theorem and the integrality of the matroid (base) polytope, $\bar{x}$
can be written as a convex combination of size-$\bar{r}$ independent
sets,\footnote{This can be seen as follows: truncate $\M$ to size $\bar{r}$. Then
any fractional independent set of size $\bar{r}$ is a convex combination
of independent sets, which are of size at most $\bar{r}$ and hence
exactly $\bar{r}$.}i.e., $\bar{x}=\sum_{i=1}^{n}t_{i}v_{i}$ where $v_{i}$ are indicator
vectors of some size-$\bar{r}$ independent sets and multipliers $t_{i}\geq0$
satisfy $\sum_{i}t_{i}=1$.

We round $t$ down to $t'$ so that it is a multiple of $\lambda$.
Now by considering $t'/\lambda$ copies of $v_{i}$ for all $i$,
we see that $V'$ contains at least
\[
\sum_{i}\frac{t'_{i}}{\lambda}\ \ \geq\ \ \frac{1}{\lambda}-n\ \ =\ \ \left(1-\frac{r\epsilon}{4n^{2}}\right)\frac{1}{\lambda}\ \ =\ \ (1-o(\epsilon))\frac{1}{\lambda}
\]
disjoint size-$\bar{r}$ independent sets. Moreover, the size of set
is $|V'|=\sum_{i}\frac{x_{i}'}{\lambda}=(1-O(\epsilon))\cdot\frac{r}{\lambda}$,
which means $\ln|V'|=O(\ln(n/\epsilon))$. So by Theorem \ref{KargerLemma},
w.h.p. $\M'(p)$ contains at least
\[
(1-\epsilon)p\cdot(1-o(\epsilon))\frac{1}{\lambda}=(1-O(\epsilon))\cdot\frac{|V'(p)|}{r}
\]
disjoint size-$\bar{r}$ independent sets, as desired.
\end{proof}
From (2) of Lemma \ref{lem:sampling}, the universe $V'(p)$ almost
can be partitioned into disjoint independent sets of size $(1-O(\epsilon))\cdot r$.
We now show that $V'(p)$ has a common independent set of size $(1-O(\epsilon))\cdot r$
by putting equal mass on almost every element.
\begin{lem}
\label{lem:sampleindsize}For a given point $x\in\Pa\cap\Pb$ with
$\|x\|_{1}\geq(1-\epsilon)\cdot r$, let $V'(p)$ be sampled as in
Lemma \ref{lem:sampling}. Then $V'(p)$ has a common independent
set of size $(1-O(\epsilon))\cdot r$.
\end{lem}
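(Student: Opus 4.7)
The plan is to exhibit a fractional point $z \in \Pa \cap \Pb$ whose support lies in $V'(p)$ and whose $\ell_1$-mass is $(1-O(\epsilon))r$, and then extract an integral common independent set of at least that size via the integrality of the matroid intersection polytope (Edmonds' theorem), which is already invoked in the preliminaries as $\Pa \cap \Pb$ being the convex hull of common independent sets.

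First I would apply Lemma~\ref{lem:sampling} twice to the \emph{same} random sample $V'(p)$, once with $\M = \M_1$ and once with $\M = \M_2$ (both are valid since $x \in \Pa$ and $x \in \Pb$, and the sampling procedure depends only on $x$, not on which matroid is being analyzed). By a union bound, with high probability we simultaneously obtain, for each $j \in \{1,2\}$, a collection of $k_j = (1-O(\epsilon))|V'(p)|/r$ pairwise disjoint $\M_j$-independent sets $S^{(j)}_1,\dots,S^{(j)}_{k_j}$ inside $V'(p)$, each of size $(1-O(\epsilon))r$. Let $T_j := \bigcup_i S^{(j)}_i \subseteq V'(p)$, so $|T_j| \geq (1-O(\epsilon))|V'(p)|$.

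Next, I would set $T := T_1 \cap T_2$ and $z := \frac{r}{|V'(p)|}\mathbf{1}_T$. By the union bound $|V'(p)\setminus T| \leq |V'(p)\setminus T_1| + |V'(p)\setminus T_2| = O(\epsilon)|V'(p)|$, so $|T| \geq (1-O(\epsilon))|V'(p)|$ and hence $\|z\|_1 \geq (1-O(\epsilon))r$. To verify $z \in \Pa$, I would average the indicator vectors of the $S^{(1)}_i$ to get the convex combination $\frac{1}{k_1}\mathbf{1}_{T_1} \in \Pa$; since $k_1 = (1-O(\epsilon))|V'(p)|/r$, the coefficient satisfies $\frac{1}{k_1} \geq \frac{r}{|V'(p)|}$, and because $T \subseteq T_1$ we get $z \leq \frac{1}{k_1}\mathbf{1}_{T_1}$ coordinatewise. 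Downward closure of the matroid (independence) polytope, which follows from $\I_1$ being subset-closed, then yields $z \in \Pa$; by symmetry $z \in \Pb$, so $z \in \Pa \cap \Pb$.

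Finally, since $\Pa \cap \Pb$ is the convex hull of indicator vectors of common independent sets, I can write $z = \sum_i \lambda_i \mathbf{1}_{I_i}$ where the $I_i$ are common independent sets contained in $\mathrm{supp}(z) \subseteq V'(p)$ and the $\lambda_i$ are convex coefficients. Then $\sum_i \lambda_i |I_i| = \|z\|_1 \geq (1-O(\epsilon))r$, so some $I_i$ satisfies $|I_i| \geq (1-O(\epsilon))r$, giving the desired common independent set inside $V'(p)$. The only bookkeeping hazard is making sure the three separate $(1-O(\epsilon))$ factors --- from the number of disjoint near-bases, their size, and the gap between $\frac{r}{|V'(p)|}$ and $\frac{1}{k_1}$ --- compose multiplicatively into one $(1-O(\epsilon))$-loss rather than blowing up; since each is of the form $1-O(\epsilon)$ with $\epsilon$ small, this collapses cleanly.
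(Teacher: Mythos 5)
Your proof is correct and takes essentially the same route as the paper: both average the disjoint near-bases from Lemma~\ref{lem:sampling} to get a flat fractional point in each matroid polytope, pass to the coordinatewise minimum (your $\frac{r}{|V'(p)|}\mathbf{1}_{T_1\cap T_2}$ is, up to the harmless rescaling $\frac{r}{|V'(p)|}\le\frac1k$, exactly the paper's $y=\min\{y^{(1)},y^{(2)}\}$), bound its mass by inclusion--exclusion on $T_1,T_2$ (your union bound is the same inequality the paper writes as $\|y\|_1\ge\|y^{(1)}\|_1+\|y^{(2)}\|_1-|V'(p)|/k$), and finish via integrality of $\Pa\cap\Pb$.
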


\begin{proof}
Suppose $V'(p)$ contains disjoint $I_{1}^{(1)},...,I_{k_{1}}^{(1)}\in\mathcal{I}_{1}$
and disjoint $I_{1}^{(2)},...,I_{k_{2}}^{(2)}\in\mathcal{I}_{2}$
all of which are of size $(1-O(\epsilon))\cdot r$. By Lemma \ref{lem:sampling},
we may take $k_{1},k_{2}=(1-O(\epsilon))\cdot|V'(p)|/r$.

We begin with a few simplifications. Let $k=\min\{k_{1},k_{2}\}$
and WLOG assume that all $I_{\alpha}^{(1)}$ and $I_{\beta}^{(2)}$
are of equal size $r'=(1-O(\epsilon))\cdot r$ by dropping elements
from the non-smallest independent sets. For $i=1,2$, define $y^{(i)}\in\mathbb{R}^{V'(p)}$
by
\[
y_{a}^{(i)}=\begin{cases}
1/k & \text{if }a\in I_{1}^{(i)}\cup...\cup I_{k}^{(i)}\\
0 & \text{o.w.}
\end{cases}
\]
Then $y^{(i)}$, which is the average of the indicator vectors for
$I_{1}^{(i)},...,I_{k}^{(i)}$, belongs to the matroid polytope for
$\M'_{i}(p)$. Therefore $y=\min\{y^{(1)},y^{(2)}\}$ belongs to the
matroid intersection polytope for $\M'_{1}(p),\M'_{2}(p)$. Since
the matroid intersection polytope is integral, it suffices to show
that $y$ has size $(1-O(\epsilon))\cdot r$.

We bound the size of $y$ as follows:
\begin{align*}
\|y\|_{1} & \geq\|y^{(1)}\|_{1}+\|y^{(2)}\|_{1}-|V'(p)|/k\\
 & =2r'-r/(1-O(\epsilon))\\
 & =2(1-O(\epsilon))r-(1+O(\epsilon))r \qquad  = \qquad (1-O(\epsilon))r,
\end{align*}
which completes the proof of this lemma.
\end{proof}
Combining the last two lemmas shows that our sampling scheme reduces
the size of the universe to $O\left(\frac{r}{\epsilon^{2}}\ln\frac{n}{\epsilon}\right)$.
\begin{thm}
\label{thm:sparse-to-small}Given a point $x\in\Pa\cap\Pb$ such that
$\|x\|_{1}\geq(1-\epsilon)\cdot r$, we can find in $O(n)$ time a
set $V'$ of size $O(\frac{r}{\epsilon^{2}}\log n)$ such that the
largest common independent subset of $V'$ for $\M_{1}$ and $\M_{2}$
has at least $(1-O(\epsilon))\cdot r$ elements.
\end{thm}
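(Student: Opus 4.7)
The plan is to apply Lemma~\ref{lem:sampling} and Lemma~\ref{lem:sampleindsize} directly, taking care of two points: (i) the sampling described there operates on an exponentially blown-up universe $V'$ consisting of $\lfloor x_i/\lambda\rfloor$ copies of each element $i\in V$ (with $\lambda=\epsilon r/(4n^3)$), which has $\Theta(n^3/\epsilon)$ elements and so cannot be written down in $O(n)$ time; and (ii) we need a subset of the original ground set $V$, not of the blown-up universe.

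First, I would argue that we can simulate the sampling in $O(n)$ time without ever materializing $V'$. Fix $p=\Theta(\lambda\log(n/\epsilon)/\epsilon^2)$ as in Lemma~\ref{lem:sampling}. For each original element $i\in V$, the event that at least one of its copies lands in the sampled set $V'(p)$ is a single Bernoulli trial with probability $q_i=1-(1-p)^{\lfloor x_i/\lambda\rfloor}$, computable in $O(1)$ time. Place $i$ into the output set $V'$ exactly when that trial succeeds. This produces a subset of $V$ in $O(n)$ time that corresponds, in distribution, to the projection onto $V$ of the set $V'(p)$ from Lemma~\ref{lem:sampling}.

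Next, I would transfer the two conclusions of Lemma~\ref{lem:sampling} (and Lemma~\ref{lem:sampleindsize}) through this projection. The size bound is immediate, because the projection onto $V$ can only shrink the number of distinct elements, and by Lemma~\ref{lem:sampling}(1) the blown-up sample has size $O((r/\epsilon^2)\log(n/\epsilon))=O((r/\epsilon^2)\log n)$ with high probability. For the independent-set bound, Lemma~\ref{lem:sampleindsize} yields a common independent set $I$ in the matroids $\M'_1(p),\M'_2(p)$ on the blown-up sampled universe, with $|I|\geq(1-O(\epsilon))r$. The copies of any element of $V$ form a parallel class in $\M'_1$ and in $\M'_2$, so $I$ contains at most one copy per original element; hence the projection of $I$ onto $V$ is in bijection with $I$, lies entirely inside our output $V'$, and forms a common independent set of $\M_1,\M_2$ of size at least $(1-O(\epsilon))r$.

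The main thing to double-check is the parallel-class argument: it must genuinely be true that in the constructed matroids $\M'_1,\M'_2$ distinct copies of the same ground-set element are parallel. This follows directly from how $\M'$ is defined in Lemma~\ref{lem:sampling}: any set containing two copies of the same $i\in V$ is declared dependent, so the copies form a single circuit of size two and lie in the same parallel class. Given this, the projection argument is routine and the theorem follows by combining Lemma~\ref{lem:sampling} and Lemma~\ref{lem:sampleindsize} with the $O(n)$-time simulation described above.
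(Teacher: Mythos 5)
Your proposal is correct and follows essentially the same route as the paper: combine Lemmas~\ref{lem:sampling} and~\ref{lem:sampleindsize}, and simulate the sampling in $O(n)$ time without materializing the blown-up universe (the paper does this by drawing, for each $i\in V$, a binomial with $\lfloor x_i/\lambda\rfloor$ trials and success probability $p$, which is equivalent to your single Bernoulli with probability $q_i=1-(1-p)^{\lfloor x_i/\lambda\rfloor}$ once thresholded at $\geq 1$). The only substantive difference is that you explicitly justify, via the parallel-class argument, why the common-independent-set guarantee on the blown-up matroids $\M_1'(p),\M_2'(p)$ transfers to the projected subset of $V$ under $\M_1,\M_2$ — a step the paper leaves implicit, so your write-up is if anything slightly more careful.
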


\begin{proof}
From Lemmas \ref{lem:sampling} and \ref{lem:sampleindsize}, $V'(p)$
satisfies all the desired properties. It remains to analyze the runtime
of the sampling scheme. Recall that $V'(p)$ is obtained by sampling
$V'$ which contains identical copies of the elements of $V$. Instead
of creating and sampling $V'$ explicitly, we can equivalently sample
$i\in V$ as a binomial random variable with $\lfloor\frac{x_{i}}{\lambda}\rfloor$
trials and probability $p$.
\end{proof}
\begin{thm}
\label{thm:apx-better-r} \foreignlanguage{english}{There is an $\tilde{O}\left(\left(\frac{n^{2}}{r\epsilon^{2}}+\frac{r^{1.5}}{\epsilon^{4.5}}\right)\cdot\mathcal{T}_{ind}\right)$-time
algorithm to obtain a $(1-\eps)$-approximation to the matroid intersection
problem.}
\end{thm}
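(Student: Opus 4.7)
The plan is to combine three results already established in the excerpt: Theorem~\ref{thm:frac-matr-inters} for obtaining an approximate \emph{fractional} solution via Frank-Wolfe, Theorem~\ref{thm:sparse-to-small} for sparsifying the ground set using this fractional solution, and Theorem~\ref{thm:approxMatrInters} for solving the resulting smaller integral matroid intersection instance. Throughout, we work with an approximation parameter $\epsilon' = c\epsilon$ for a sufficiently small absolute constant $c>0$, so that the constant-factor losses introduced by sampling compose to a final $(1-\epsilon)$ guarantee.

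First, I would invoke Theorem~\ref{thm:frac-matr-inters} to compute, in $O\!\left(\frac{n^{2}}{r\epsilon^{2}}\cdot\indeptime\right)$ time, a fractional point $x\in\Pa\cap\Pb$ with $\|x\|_{1}\geq(1-\epsilon')\cdot r$. Second, I feed $x$ into the sparsification procedure of Theorem~\ref{thm:sparse-to-small} to produce in $O(n)$ time a subset $V'\subseteq V$ with $|V'|=O\!\left(\tfrac{r}{\epsilon^{2}}\log n\right)$ such that the restriction of $\M_{1}$ and $\M_{2}$ to $V'$ still admits a common independent set of size at least $(1-O(\epsilon'))\cdot r$. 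Third, I run the algorithm of Theorem~\ref{thm:approxMatrInters} on the pair of matroids restricted to $V'$, with approximation parameter $\epsilon'$, to obtain a common independent set of $V'$ of size at least
\[
(1-\epsilon')(1-O(\epsilon'))\cdot r \;\geq\; (1-\epsilon)\cdot r,
\]
where the final inequality holds for $c$ small enough. Correctness follows because independence in the restricted matroids is the same as independence in the original matroids, so any common independent set of $V'$ is also a common independent set of $V$, and the target benchmark is still $r$.

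For the runtime, the Frank-Wolfe step costs $O\!\left(\frac{n^{2}}{r\epsilon^{2}}\cdot\indeptime\right)$. The sparsification is $O(n)$ and is dominated by the other terms. The final call to the algorithm of Theorem~\ref{thm:approxMatrInters} on a universe of size $n'=|V'|=\tilde{O}(r/\epsilon^{2})$ takes
\[
O\!\left(\frac{(n')^{1.5}\sqrt{\log n'}}{(\epsilon')^{1.5}}\cdot\indeptime\right)
\;=\;\tilde{O}\!\left(\frac{r^{1.5}}{\epsilon^{4.5}}\cdot\indeptime\right),
\]
since $(n')^{1.5}=\tilde{O}(r^{1.5}/\epsilon^{3})$ and the additional $1/\epsilon^{1.5}$ factor from Theorem~\ref{thm:approxMatrInters} yields $1/\epsilon^{4.5}$ overall. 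Summing the two nontrivial contributions produces the claimed bound $\tilde{O}\!\left(\left(\frac{n^{2}}{r\epsilon^{2}}+\frac{r^{1.5}}{\epsilon^{4.5}}\right)\cdot\indeptime\right)$.

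The only subtlety, and the sole place where care is really needed, is the composition of approximation losses: the Frank-Wolfe output is within $(1-\epsilon')$ of $r$, sparsification introduces a further $(1-O(\epsilon'))$ multiplicative loss relative to $r$ (not relative to $\|x\|_{1}$), and the algorithm of Theorem~\ref{thm:approxMatrInters} loses an additional $(1-\epsilon')$ factor relative to the optimum of the restricted instance. Choosing $\epsilon'=\Theta(\epsilon)$ with a small enough hidden constant makes the product at least $1-\epsilon$, and does not affect the asymptotic runtime. No new structural argument is required beyond what is already proved earlier in the paper.
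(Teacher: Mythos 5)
Your proof is correct and takes essentially the same approach as the paper: first Frank--Wolfe (Theorem~\ref{thm:frac-matr-inters}) to get the fractional solution in $O((n^2/r\epsilon^2)\cdot\indeptime)$ time, then the sparsification of Theorem~\ref{thm:sparse-to-small} to shrink the universe to $\tilde{O}(r/\epsilon^2)$, then the subquadratic approximation algorithm of Theorem~\ref{thm:approxMatrInters} on the smaller universe. The only cosmetic difference is that you explicitly rescale via $\epsilon' = c\epsilon$ to pin down the $(1-\epsilon)$ guarantee, whereas the paper leaves it as $(1-O(\epsilon))$; this is a harmless strengthening of the same argument.
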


\begin{proof}
Using Theorem \ref{thm:frac-matr-inters}, in $O((n^{2}/r\eps^{2})\cdot\Tind)$
time we can find a point $z\in\Pa\cap\Pb$ with $||z||_{1}\geq(1-\eps)r$.
Applying Theorem \ref{thm:sparse-to-small} on $z$, in $O(n)$ time
we can reduce the ground set to $V'$ with $|V'|=O(r\log n/\eps^{2})=:n'$
and the maximum common independent set in $V'$ has size $\geq(1-O(\eps))r$.
Note that any independence oracle query on the matroid restricted
to $V'$ doesn't change, and in particular takes $\Tind$ time. Applying
Theorem \ref{thm:approxMatrInters} on this universe of size $n'$,
we see that in $O((n'^{1.5}/\eps^{1.5})\cdot\Tind)$ time we can get
a common independent set $S$ with $|S|\geq(1-\eps)|S^{*}|$ where
$S^{*}$ is the largest common independent set in $V'.$Thus, $|S|\geq(1-\eps)\cdot(1-O(\eps))r=(1-O(\eps))r$.
That is, $S$ is an $(1-O(\eps))$-approximate solution. The time
complexity follows by setting $n'=O(r\log n/\eps^{2})$ in the above.\selectlanguage{english}%
\end{proof}

\subsubsection*{Acknowledgment}

We are thankful to Kent Quanrud for pointing us to Karger's paper~\cite{karger1998random}\foreignlanguage{american}{.}
We are also thankful to Troy Lee for pointing us to~\cite{Price}, and for his comments on a previous version.

{\small
\bibliographystyle{plain}
\bibliography{bib}
}
\end{document}